\newcommand{\hide}[1]{}
\newcommand{\ABox}{
\raisebox{3pt}{\framebox[6pt]{\rule{6pt}{0pt}}}
}
\newenvironment{proof}{{\bf Proof:}}{\hfill\ABox}
\newtheorem{theorem}{{\bf Theorem}}
\newtheorem{lemma}{Lemma}
\newtheorem{conjecture}{Conjecture}
\newtheorem{question}{Question}
\newtheorem{definition}[theorem]{Definition}
\newcommand{\lemlab}[1]{\label{lemma:#1}}
\newcommand{\thmlab}[1]{\label{thm:#1}}
\newcommand{\figlab}[1]{\label{fig:#1}}
\newcommand{\seclab}[1]{\label{sec:#1}}
\newcommand{\lemref}[1]{\ref{lemma:#1}}
\newcommand{\thmref}[1]{\ref{thm:#1}}
\newcommand{\secref}[1]{\ref{sec:#1}}
\newcommand{\figref}[1]{\ref{fig:#1}}
 \gdef\xxxmark{%
   \expandafter\ifx\csname @mpargs\endcsname\relax 
     \expandafter\ifx\csname @captype\endcsname\relax 
       \marginpar{xxx}
     \else
       xxx 
     \fi
   \else
     xxx 
   \fi}
 \gdef\xxx{\@ifnextchar[\xxx@lab\xxx@nolab}
 \long\gdef\xxx@lab[#1]#2{{\bf [\xxxmark #2 ---{\sc #1}]}}
 \long\gdef\xxx@nolab#1{{\bf [\xxxmark #1]}}
 \gdef\turnoffxxx{\long\gdef\xxx@lab[##1]##2{}\long\gdef\xxx@nolab##1{}}%
\def\P{{\mathcal P}}
\def\r{{\rho}}
\def\S{{\Sigma}}
\def\F{{\mathcal F}}
\def\S{{\mathcal S}}
\def\T{{\mathcal T}}
\def\g{{\gamma}}
\def\l{{\lambda}}
\def\r{{\rho}}
\def\o{{\omega}}
\def\O{{\Omega}}
\def\a{{\alpha}}
\def\b{{\beta}}
\def\q{{\theta}}
\def\Th{{\Theta}}
\def\t{{\tau}}
\def\bC{{\partial C}}
\def\bG{{\partial G}}
\def\R{{\mathbb{R}}}
\newcommand{\squeezelist}{\setlength{\itemsep}{0pt}}
\title{%
Unfolding Convex Polyhedra\\
via Radially Monotone Cut Trees
} 
\author{%
Joseph O'Rourke%
    \thanks{Department of Computer Science, Smith College, Northampton, MA
      01063, USA.
      \protect\url{orourke@cs.smith.edu}.}
}
\begin{document}
\maketitle

\begin{abstract}
A notion of ``radially monotone" cut paths is introduced
as an effective choice for finding a non-overlapping edge-unfolding
of a convex polyhedron. These paths have the property that the
two sides of the cut avoid overlap locally as the cut is 
infinitesimally opened
by the curvature at the vertices along the path.
It is shown that a class of planar, triangulated convex domains always have
a radially monotone spanning forest, a forest that can be found
by an essentially greedy algorithm.
This algorithm can be mimicked in 3D and applied to polyhedra
inscribed in a sphere.
Although the algorithm does not provably find a radially monotone cut tree,
it in fact does find such a tree with high frequency,
and after cutting unfolds without overlap.
This performance of a greedy algorithm leads to the conjecture
that spherical polyhedra always have a radially monotone cut tree
and unfold without overlap.
\end{abstract}

\section{Introduction}
\seclab{Introduction}
The question of whether or not every convex polyhedron in $\R^3$ 
has a non-overlapping edge-unfolding to the plane
has been open for many years;
see~\cite{do-gfalop-07}.
Here I introduce a notion of ``radially monotone" cut paths,
and show empirically that for polyhedra inscribed in a sphere
(which I'll call \emph{spherical}\footnote{``Inscribed" is often used in the literature.}),
there exists with high frequency
a spanning tree composed of radially monotone paths,
and the corresponding unfolding avoids overlap.
A $200$-vertex example is shown in Fig.~\figref{CutsLay_s1_n200}.\footnote{
The quality of the figures had to be reduced for the arXiv.}
\begin{figure}[htbp]
\centering
\includegraphics[width=1.0\linewidth]{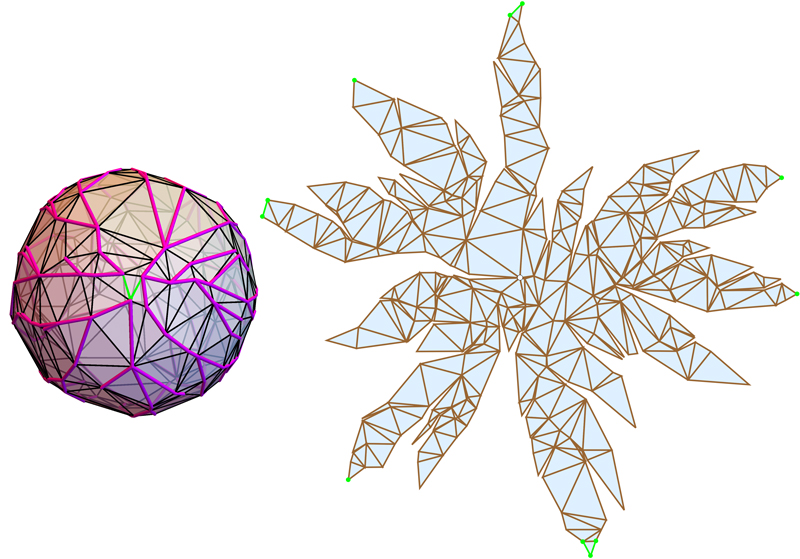}
\caption{Left: Polyhedron of $200$ vertices inscribed in sphere, with
radial monotone cut tree shown.
(The role of the green edges will be explained later.)
Right: Unfolding.
}
\figlab{CutsLay_s1_n200}
\end{figure}
This ``high frequency" claim contrasts with the near certainty of overlap
for random spanning cut trees of random spherical polyhedra,
as observed long ago: 
see Fig.~\figref{CASJORoverlap}, 
and Fig.~\figref{Lay_s12_n50_overlap} for an example of overlap.
\begin{figure}[htbp]
\centering
\includegraphics[width=0.75\linewidth]{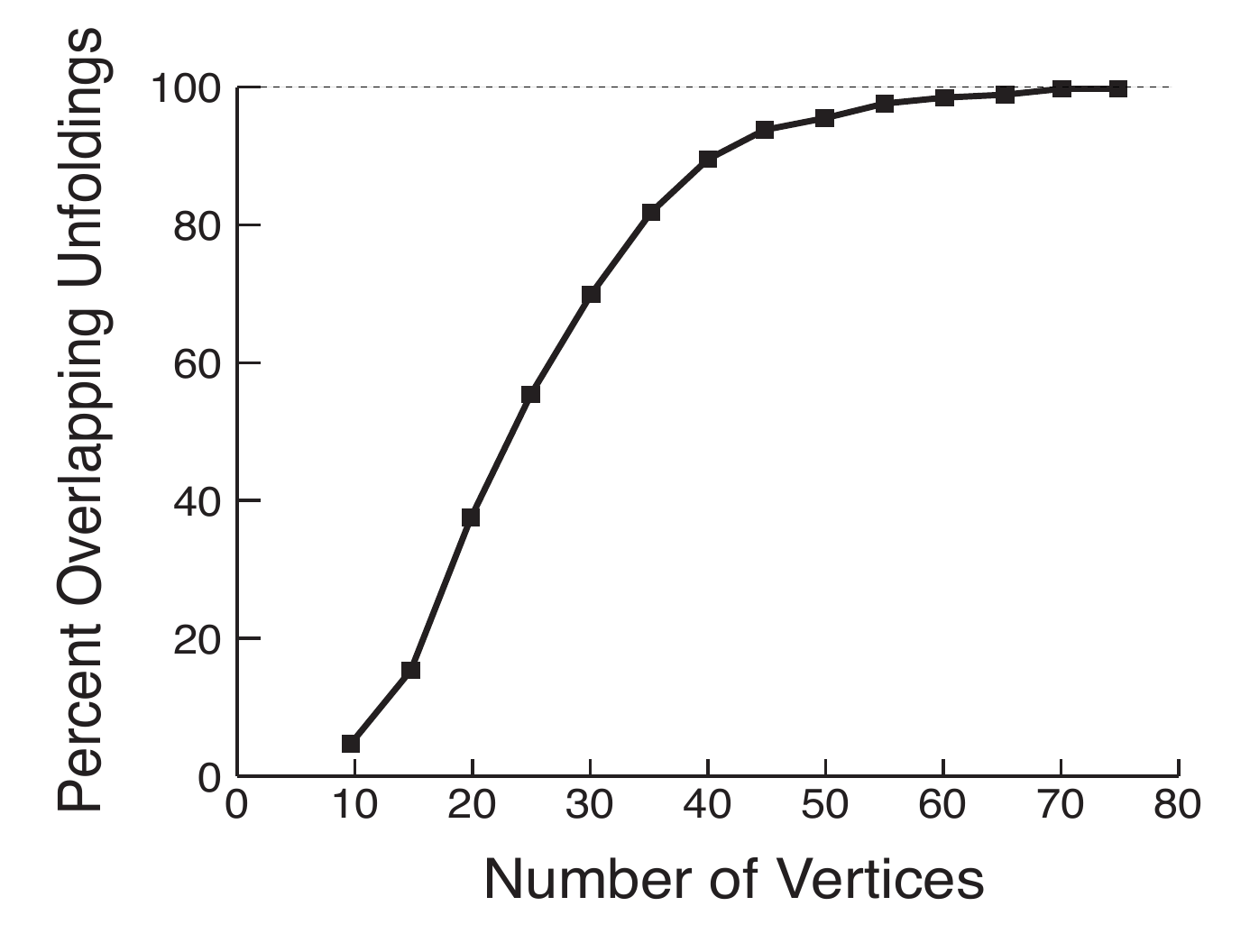}
\caption{Each point represents $5{,}000$ unfoldings of spherical polyhedra.
Fig.~22.10 in~\protect\cite[p.315]{do-gfalop-07}.}
\figlab{CASJORoverlap}
\end{figure}
\begin{figure}[htbp]
\centering
\includegraphics[width=1.0\linewidth]{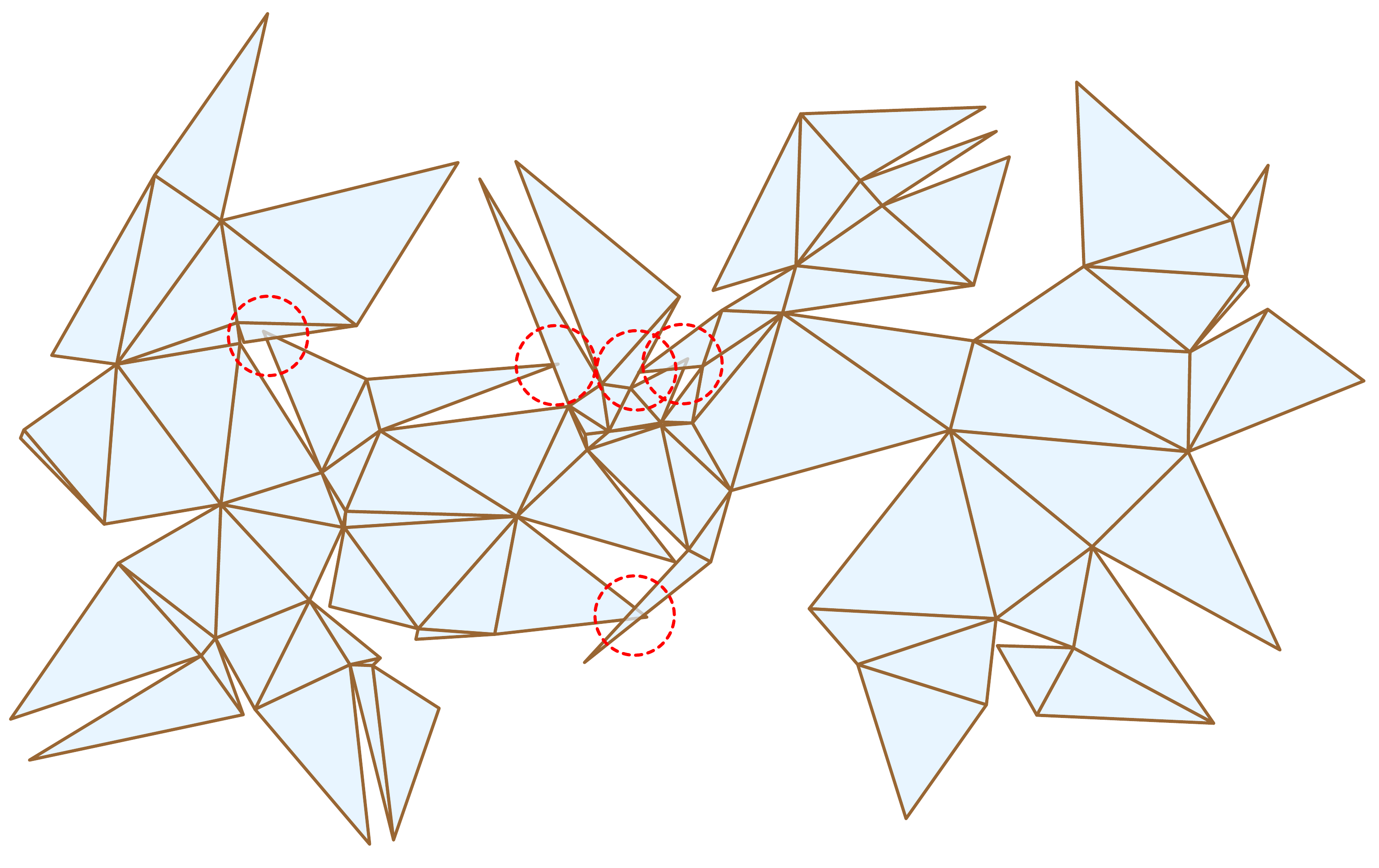}
\caption{Unfolding of a $50$-vertex spherical polyhedron with several
overlaps.}
\figlab{Lay_s12_n50_overlap}
\end{figure}

Unfortunately, the ``almost always" claim is not a theorem, but rather a
conjecture supported by data. For example, one empirical exploration run found
a radially monotone cut tree for $1{,}000$ random spherical non-obtusely
triangulated polyhedra, each of
$100$ vertices, and unfolded all $1{,}000$ without overlap.
(This claim will be hedged a bit in Section~\secref{Empirical}.)

What I have been able to prove is confined to 2D:
Theorem~\thmref{NonObConcirc} says that
``round" convex polygons, meshed with non-obtuse triangles,
always possess a radially monotone spanning forest.
The 3D empirical results use an algorithm that mimics the 2D proof,
and relies on a generalization of radial monotonicity to 3D
governed by Theorem~\thmref{LMR-nonint}.

\subsection{Definition of Radial Monotonicity in $\R^2$}
\seclab{rmDef}
We now define radial monotonicity in 2D; we will not return to 3D until
Section~\secref{rm3D}.

\subsection{Definition}
Let $C$ be a triangulated convex domain in $\R^2$,
with $\bC$ its boundary, a convex polygon. In general $C$ contains
many points in its interior that are the vertices of the triangulation.
Let $Q=(v_0,v_1,v_2,\ldots,v_k)$ be a 
simple (non-self-intersecting) directed path of edges of $C$
connecting an interior vertex $v_0$ to a boundary vertex $v_k \in \bC$.

We say that $Q=(v_0,v_1,\ldots,v_k)$ is \emph{radially monotone} (\emph{rm})\footnote{
I will also use ``rm" to abbreviate ``radially monotone"
when convenient
and unambiguous. Also, ``w.r.t." is an abbreviation for ``with respect to."}
w.r.t. $v_0$
if the distances from $v_0$ to all points of $Q$ are (non-strictly) monotonically increasing.
(Note that requiring the distance to just the vertices of $Q$ to be monotonically increasing
is not the same as requiring the distance to all points of $Q$ be monotonically increasing.)
We define path $Q$ to be \emph{radially monotone} (without qualification)
if it is radially monotone w.r.t. each of its vertices: $v_0,v_1, \ldots, v_{k-1}$.
Before exploring this definition further, we discuss its intuitive motivation.

\subsection{Motivating Intuition}
A radial monotone path $Q$ w.r.t. $v_0$ has the property that
rigidly rotating $Q$ and all its incident triangles about $v_0$
by a small angle
avoids proper overlap between the triangles to the left and to the right of $Q$.
One can imagine one triangle incident to $v_0$ reducing its angle
at $v_0$ infinitesimally, as illustrated in 
Fig.~\figref{isopPath}(a).
\begin{figure}[htbp]
\centering
\includegraphics[width=1.0\linewidth]{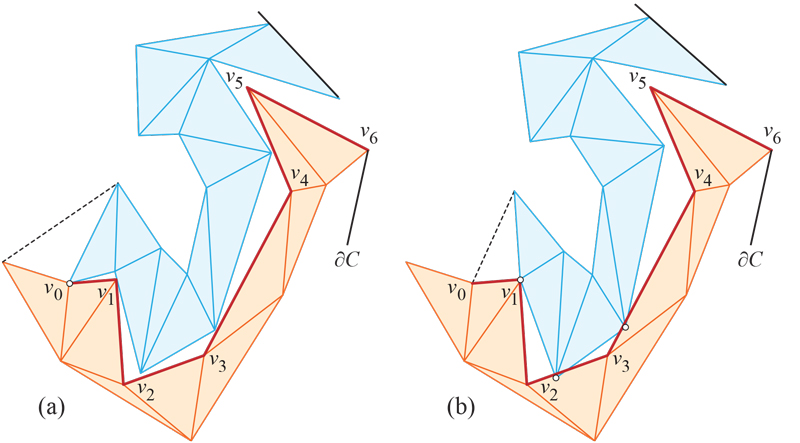}
\caption{Path $Q=(v_0,v_1,\ldots,v_6)$ is radially monotone w.r.t. $v_0$,
but not w.r.t. $v_1$.}
\figlab{isopPath}
\end{figure}
Fig.~\figref{isopPath}(b) shows that the path illustrated is not
radially monotone without qualification, because it is not w.r.t. $v_1$.

The motivation behind the definition of radial monotonicity is as follows.
Ultimately the path $Q$ will be a path of edges on a convex polyhedron $\P$ in $\R^3$.
At each vertex $v_i \in \P$, there will be some positive curvature $\o_i>0$, which
represents the ``angle gap" when the neighborhood of $v_i$ is flattened to the plane.
We can view $\o_i$ as separating the left-half of the cut $Q$ from the right-half at $v_i$.

As a consequence, if a path $Q$ is rm, then ``opening" the path
with sufficiently small curvatures $\o_i$ at each $v_i$ will avoid
overlap between the two halves of the cut path.
Whereas if a path is not rm, then there is some opening 
curvature assignments $\o_i$ to the $v_i$ that would cause overlap:
assign a positive curvature $\o_j>0$ to the first vertex $v_j$ at which
radial monotonicity is violated, and assign the other vertices zero or negligible curvatures.
Thus rm cut paths are locally (infinitesimally) ``safe," and non-rm paths are potentially overlapping.
This potential overlap may not be realizable, for the $\o_i$ cannot be assigned
arbitrarily, but must derive from a convex polyhedron.
So the guarantee is one-way: 

\fbox{
Radial monotonicty $\implies$ safe (non-overlapping) infinitesimal opening.
}

\subsection{Trees and Spanning Forests}
Continuing to concentrate on a planar triangulated convex domain $C$,
we extend the notion of radial monotonicity to trees and forests
in the natural manner.
A tree $T$
rooted on a boundary vertex $v_k \in \bC$ and containing no other boundary vertex
is radially monotone if the path from every leaf of $T$ to $v_k$ is rm.
A radially monotone spanning forest for $C$ is a collection of
boundary-rooted rm trees that span the interior vertices of $C$.

\subsection{Spherical Caps}
To presage why we concentrate on convex domains,
we turn briefly again to 3D.
Fig.~\figref{CutsLay_s2_pi3_n500} shows a convex cap composed of
the faces of a spherical polyhedron whose normals are within $60^\circ$ of the
vertical. 
We will unfold convex caps via radially monotone cut forests.
And ultimately we will show how to view a complete spherical polyhedron
as a convex cap.
\begin{figure}[htbp]
\centering
\includegraphics[width=1.0\linewidth]{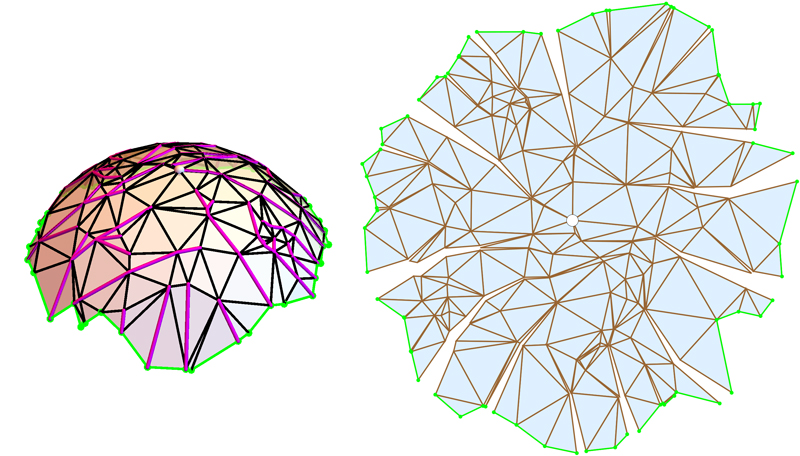}
\caption{Left: Spherical cap of $147$ vertices, with
radial monotone cut forest marked.
Right: Unfolding. Green edges are part of $\bC$.
}
\figlab{CutsLay_s2_pi3_n500}
\end{figure}

\section{Properties of Radially Monotone Paths}
\seclab{PropsRM}
We now embark on a rather lengthy description of properties of rm paths in 2D.

\subsection{Radial Circles}
The condition for $Q$ to be rm w.r.t. $v_0$ can be
interpreted as requiring $Q$ to cross every circle centered on $v_0$ at
most once; see Fig.~\figref{RadialCircles}.
The concentric circles viewpoint makes it evident
that infinitesimal rigid rotation of $Q$ about $v_0$ to $Q'$ ensures that
$Q \cap Q' = \{ v_0 \}$.
\begin{figure}[htbp]
\centering
\includegraphics[width=\linewidth]{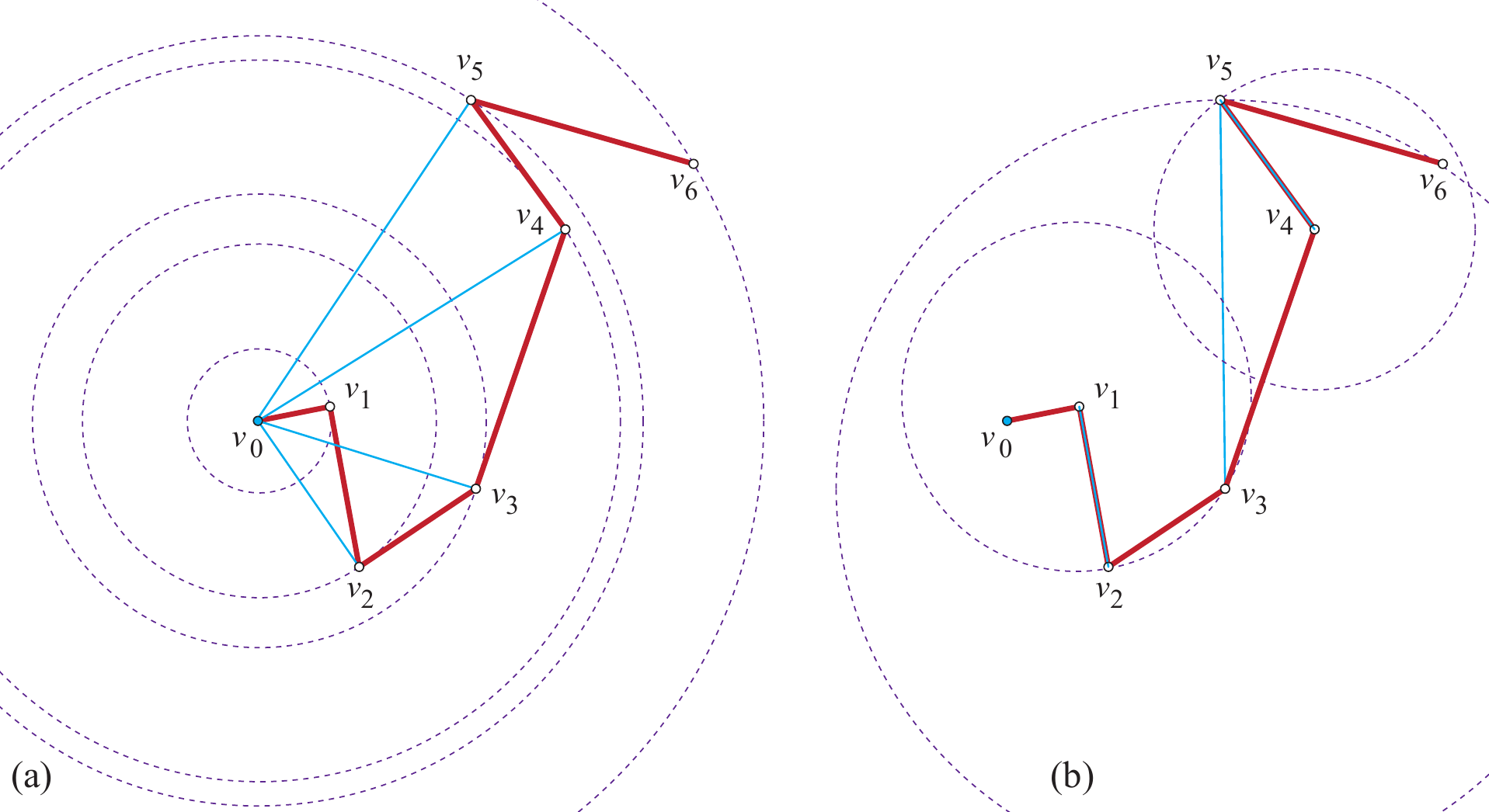}
\caption{(a)~A chain radially monotone w.r.t. $v_0$.
(b)~The chain is not radially monotone w.r.t. $v_1$
(violation at $v_2$),
or w.r.t. $v_3$
(violation at $v_5$),
or w.r.t. $v_4$
(violation at $v_5$).}
\figlab{RadialCircles}
\end{figure}

An equivalent definition is as follows.
Let $\a_j(v_i) = \angle (x, v_j, v_{j+1})$.
Then $Q$ is rm w.r.t. $v_i$ if
$\a_j(v_i) \ge \pi/2$ for all $j>i$.
For if $\a_j(v_i) < \pi/2$, $Q$ violates monotonicity at $v_j$,
and if $\a_j(v_i) \ge \pi/2$, then points along the segment $(v_j, v_{j+1})$
increase in distance from $v_i$.

\subsection{Non-Properties of rm Paths}
Let the \emph{turn angle} $\t_i$ of path $Q$ at $v_i$ be the
signed angle between the vectors $v_i - v_{i-1}$ and $v_{i+1}-v_i$;
so $\t_i \in [-\pi,\pi]$, with $\t_i=0$ meaning that
the joint at $v_i$ is straightened.

It should be clear that
no turn angle in a radially monotone path can exceed $\pi/2$:
$\t_i \le \pi/2$ for all $i=1,\ldots,k{-}1$.
Although this condition is necessary, it is not sufficient:
if $Q=(v_0,v_1,v_2,v_3)$ has unit link lengths, and turn angles $\t_1=\t_2=\pi/2$, 
forming a $\sqcap$-shape,
then $Q$ is not rm w.r.t. $v_0$, violating rm at $v_2$.

Note the definition of radial monotonicity considers paths directed from 
$v_0$ to $v_k$.
A path radially monotone in one direction need not be rm when reversed:
see Fig.~\figref{RMrev}.
\begin{figure}[htbp]
\centering
\includegraphics[width=0.5\linewidth]{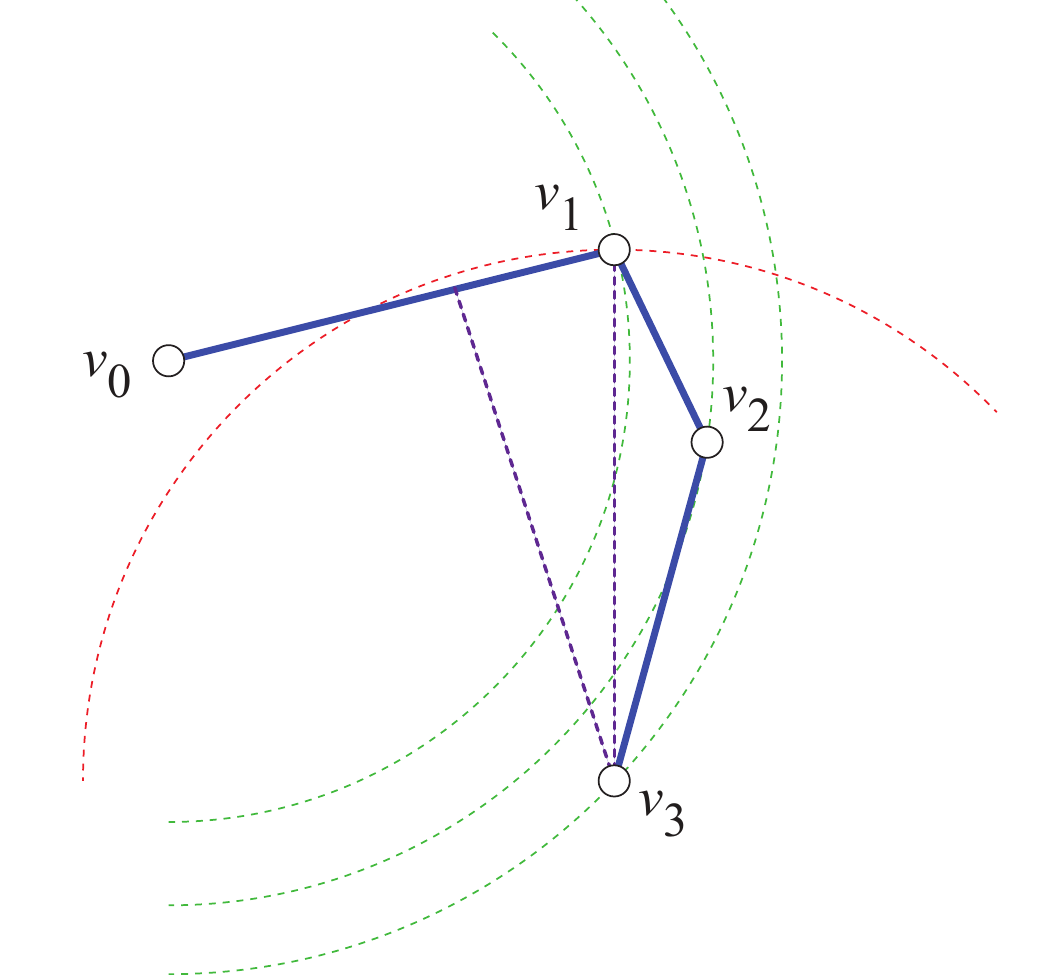}
\caption{$Q=(v_0,v_1,v_2,v_3)$ is radially monotone, 
but its reverse, $(v_3,v_2,v_1,v_0)$, is not rm.
Shortcutting the former rm path to $(v_0,v_1,v_3)$ destroys radial monotonicity.}
\figlab{RMrev}
\end{figure}

Another property one might hope holds is that shortcutting a rm path
$Q=(v_0,\ldots,v_{i-1},v_i,v_{i+1},\ldots,v_k)$ to 
$Q'=(v_0,\ldots,v_{i-1},v_{i+1},\ldots,v_k)$
would retain radial monotonicity, but this is false in general, as also illustrated in Fig.~\figref{RMrev}.

\subsection{Searching for rm Paths}
We will search for rm paths to form a rm cut forest to unfold a spherical cap.
The search will be incremental, growing existing paths.
So we next turn to properties that will allow a partial rm path to be extended,
on either end.
Again we concentrate on a planar convex domain $C$.

There are two basic strategies: Start from an interior vertex of $C$, and
grow toward $\bC$; or start from a boundary vertex, and grow toward the interior.
The former is attractive because, as Lemma~\lemref{ThetaBounds}
will show, a partial path can always be grown ``forward."
Meanwhile, ``backward" growth is not always possible,
as will be shown in Section~\secref{BackwardGrowth}.
However, so far I have only been able to prove that a
rm spanning forest exists by growing from the boundary inward.

\subsection{Forward Extension: O-cone $\Th_j$}
\seclab{o-cone}
For path $Q=(v_0,v_1,\ldots,v_k)$,
let $\Th_j$ be the range of angles within which $(v_{j+1}-v_j)$ must
lie for $Q$ to be rm.
We call $\Th_j$ the \emph{o-cone}  at $v_j$,
because its bounding rays are orthogonal to the \emph{cone} at $v_j$:
the smallest cone with apex at $v_j$ within which all of $Q$ lies.
See Fig.~\figref{oCones6}.

For segment $(v_j,v_{j+1})$ to be rm w.r.t. $v_i$, that segment must fall outside
the circle centered on $v_i$ of radius $|v_j-v_i|$. Thus each $v_i$
contributes a halfplane constraint at $v_j$, and it is the intersection of
those halfplanes that determine the o-cone $\Th_j$.
And clearly the extremes of the o-cone are orthogonal to the extremes of the cone.

\begin{figure}[htbp]
\centering
\includegraphics[width=0.75\linewidth]{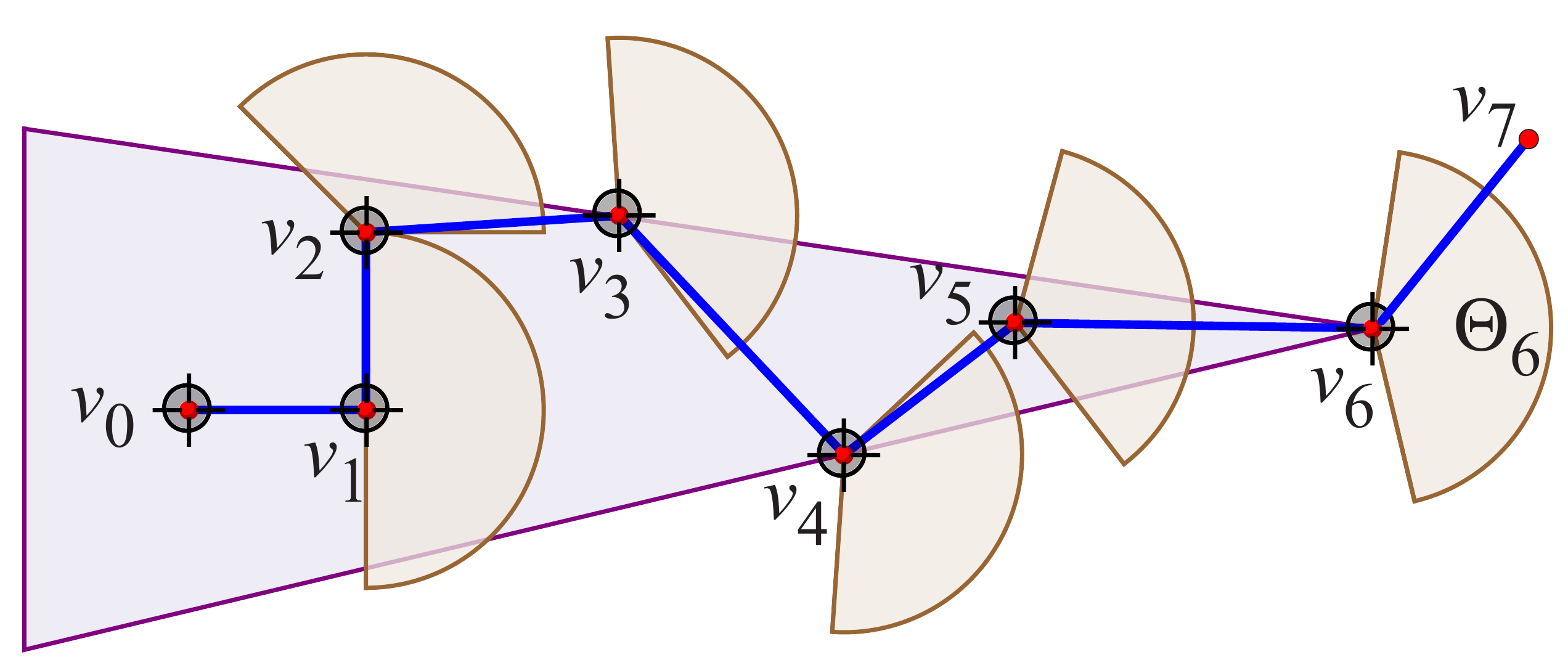}
\caption{The o-cones $\Th_j$ of a radially monotone path.
The cone at $v_6$ (illustrated) passes through $v_3$ and $v_4$.}
\figlab{oCones6}
\end{figure}

Let $|\Th_j|$ be the measure of the o-cone angle.
The following lemma is not used in the sequel, and is included just for intuition.
\begin{lemma}
For any radially monotone path $Q=(v_0,\ldots,v_k)$, $|\Th_j| \in [\pi/2,\pi]$ for all $j$.
\lemlab{ThetaBounds}
\end{lemma}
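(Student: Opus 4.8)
The plan is to convert the statement about the o-cone $\Th_j$ into one about the cone at $v_j$ and then extract that from the distance-monotonicity in the definition of radial monotonicity. As the text has already noted, a direction $u$ lies in $\Th_j$ exactly when $u\cdot(v_i-v_j)\le 0$ for every $i<j$; that is, $\Th_j$ is an intersection of closed halfplanes through the origin whose boundary lines are orthogonal to the rays $v_i-v_j$. Hence the two extreme rays of $\Th_j$ are orthogonal to the two extreme rays of the cone $K_j$ at $v_j$ (the smallest cone with apex $v_j$ containing $v_0,\dots,v_{j-1}$), and $|\Th_j|=\pi-|K_j|$ whenever $|K_j|\le\pi$. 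So the claim reduces to $0\le|K_j|\le\pi/2$; the lower bound is immediate, and the content is $|K_j|\le\pi/2$.

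For the upper bound I would first prove the pairwise estimate $\angle(v_a,v_j,v_b)\le\pi/2$ for all $a,b\in\{0,\dots,j-1\}$, where $\angle(v_a,v_j,v_b)$ is the angle at $v_j$ in the triangle on $\{v_a,v_b,v_j\}$ (it is $0$ when $a=b$). Put $c=\min\{a,b\}$ and $d=\max\{a,b\}$. Since $Q$ is radially monotone with respect to $v_c$, the distances from $v_c$ to $v_c,v_{c+1},\dots,v_k$ are non-decreasing, so $|v_cv_d|\le|v_cv_j|$ (both indices are at least $c$, and $d<j$). In that triangle the side $v_cv_j$ is opposite $v_d$ and the side $v_cv_d$ is opposite $v_j$, so the angle at $v_d$ is at least the angle at $v_j$; as the three angles sum to $\pi$, the angle at $v_j$ is at most $\pi/2$ (degenerate collinear configurations only make it smaller). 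I would then upgrade this to $|K_j|\le\pi/2$: viewing $v_0-v_j,\dots,v_{j-1}-v_j$ as unit vectors, if they did not all lie in one open halfplane the origin would lie in the convex hull of three of them; those three either contain an antipodal pair or divide the circle into three arcs each less than $\pi$, and in both cases two of the vectors subtend an angle $\ge 2\pi/3>\pi/2$ at $v_j$, contradicting the estimate. So the vectors lie in an open halfplane, the smallest arc containing them has its endpoints among the vectors, and the angle between those endpoints equals that arc's length, which is $|K_j|$ and is $\le\pi/2$ by the estimate. This also shows $|K_j|<\pi$, so the first paragraph's identity applies and yields $|\Th_j|=\pi-|K_j|\in[\pi/2,\pi]$.

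I expect the harder part to be the passage from the pairwise bound to the cone width rather than the distance argument: once one notices that radial monotonicity with respect to $v_c$ makes $v_j$ at least as far from $v_c$ as any vertex between them, the estimate $\angle(v_a,v_j,v_b)\le\pi/2$ is a one-line consequence of ``longer side opposite larger angle.'' The subtlety is that in the plane a pairwise angle bound does not by itself confine a family of directions to a cone of width $\le\pi/2$ -- they could wrap more than halfway around -- so one must rule that out, which amounts to checking that the object the paper calls ``the cone at $v_j$'' genuinely is a convex cone of width below $\pi$. Keeping the degenerate collinear and coincident-index cases in view throughout is the other place where a little care is needed.
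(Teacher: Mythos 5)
Your proof is correct, but it takes a genuinely different route from the paper's. The paper argues by induction on the path length combined with a continuity argument: it notes $|\Th_1|=\pi$ and then studies how $|\Th_{k+1}|$ varies as $v_{k+1}$ is placed inside $\Th_k$ and slid away from $v_k$, concluding that the o-cone measure sweeps continuously from (near) $\pi/2$ up to (near) $\pi$ without ever dropping below $\pi/2$. You instead give a direct, non-inductive argument: you reduce to bounding the cone width $|K_j|\le\pi/2$, establish the pairwise angle bound $\angle(v_a,v_j,v_b)\le\pi/2$ via the observation that radial monotonicity w.r.t. $v_c$ forces $|v_c v_d|\le|v_c v_j|$ (so in triangle $v_c v_d v_j$ the angle at $v_j$ is at most the angle at $v_d$, hence at most $\pi/2$), and then carefully upgrade the pairwise bound to a width bound on the whole cone by showing the directions cannot wrap past a halfplane. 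Your version is arguably cleaner and more rigorous: the paper's continuity step about $|\Th_{k+1}|$ varying ``between these extremes'' is somewhat informal, whereas your argument is elementary geometry throughout, and it isolates the subtlety (that pairwise angle bounds in the plane do not automatically bound cone width) which the paper glosses over by implicitly tracking the convex hull inductively. The paper's proof buys a bit more dynamic intuition about how the o-cone evolves as the path grows, which is thematically useful for the forward-extension discussion that follows, but as a standalone proof of the stated bounds yours is the more airtight.
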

\begin{proof}
Initially $|\Th_1| =\pi$, and assume the claimed bound holds
up to $v_k$. Now consider extending $Q$ with $v_{k+1}$.
The cone for $Q$ encompasses the convex hull of $Q$, as illustrated in
Fig.~\figref{ThetaBounds}.
Let $\a_k$ be the angle of the cone. Then the angle of the o-cone is $\pi-\a_k$.
If $v_{k+1}$ is close to $v_k$ on the upper or lower boundary of $\Th_k$, 
then $|\Th_{k+1}|$ approaches $\pi/2$ from above as the distance
$|v_k-v_{k+1}| \to 0$.
If $v_{k+1}$ lies on the line containing $(v_{k-1}, v_k)$, and so extends the previous
link collinearly, then $|\Th_{k+1}|$ approaches $|\Th_k|$ as $|v_k-v_{k+1}| \to 0$.
There is a continuous variation in $|\Th_{k+1}|$ between these extremes for
$v_{k+1}$ arbitrarily close to $v_k$.

Now consider moving $v_{k+1}$ from near $v_k$ along a line that falls within $|\Th_k|$.
The cone apex angle $\a_{k+1}$ narrows monotonically, and so $|\Th_{k+1}|$ monotonically
increases. So $|\Th_{k+1}|$ is always $\ge \pi/2$.
And clearly $\a_{k+1} \ge 0$ and so $|\Th_{k+1}| \le \pi$.
\end{proof}
\begin{figure}[htbp]
\centering
\includegraphics[width=1.0\linewidth]{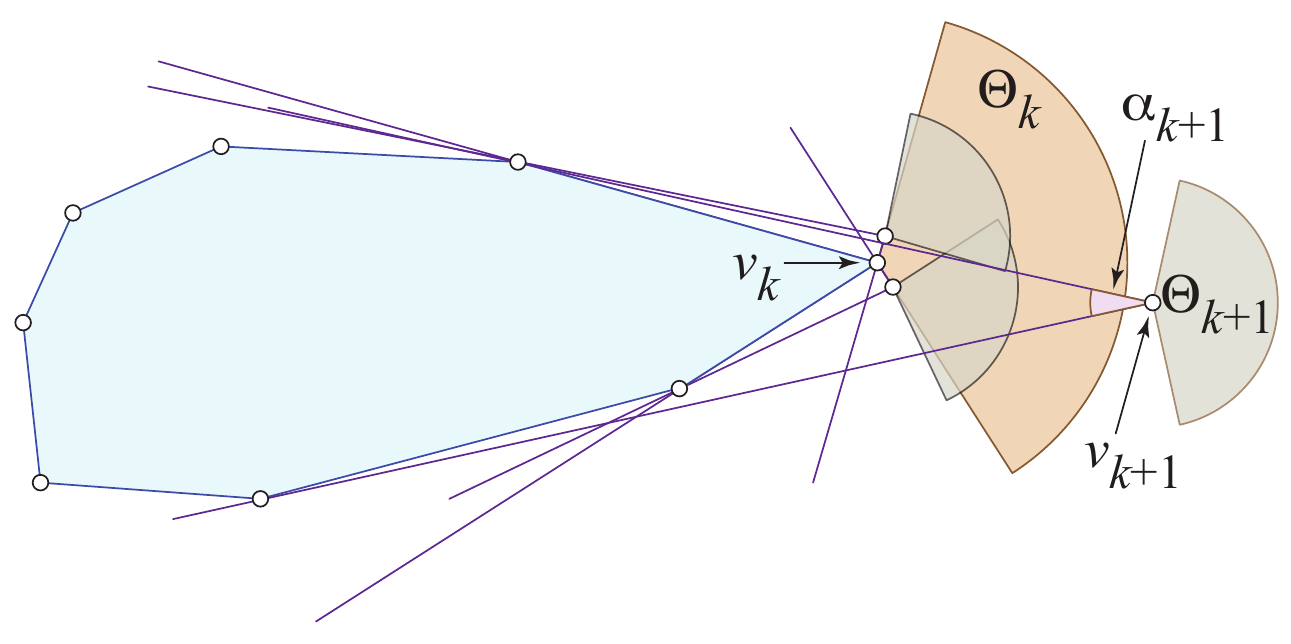}
\caption{$\Th_{n+1} = \pi - \a_{n+1}$ varies from $\pi/2$ near $v_n$, to approaching $\pi$ distant from $v_n$.}
\figlab{ThetaBounds}
\end{figure}

\subsection{Backward Growth}
\seclab{BackwardGrowth}
Consider now a rm path $Q=(v_0,v_1,v_2,\ldots,v_k)$ that we would like
to extend ``backwards," with $v'$ prior to $v_0$, while retaining
rm for $Q'=(v',v_0,v_1,\ldots,v_k)$.
Define $R=R(Q)$ to be the region of the plane within which $v'$ can lie
while retaining $Q'$ rm.
Each edge $(v_i,v_{i+1})$ of $Q$ contributes a halfplane constraint to $R$:
$v'$ must lie in the halfplane whose boundary is orthogonal to 
$(v_i,v_{i+1})$, passes through $v_i$, and excludes $v_{i+1}$.
For otherwise, $\angle v', v_i,v_{i+1} < \pi/2$, and rm w.r.t. $v'$
would be violated at $v_i$.
$R$ is then the intersection of all these halfplanes;
see Fig.~\figref{RMNegCone}.
\begin{figure}[htbp]
\centering
\includegraphics[width=0.75\linewidth]{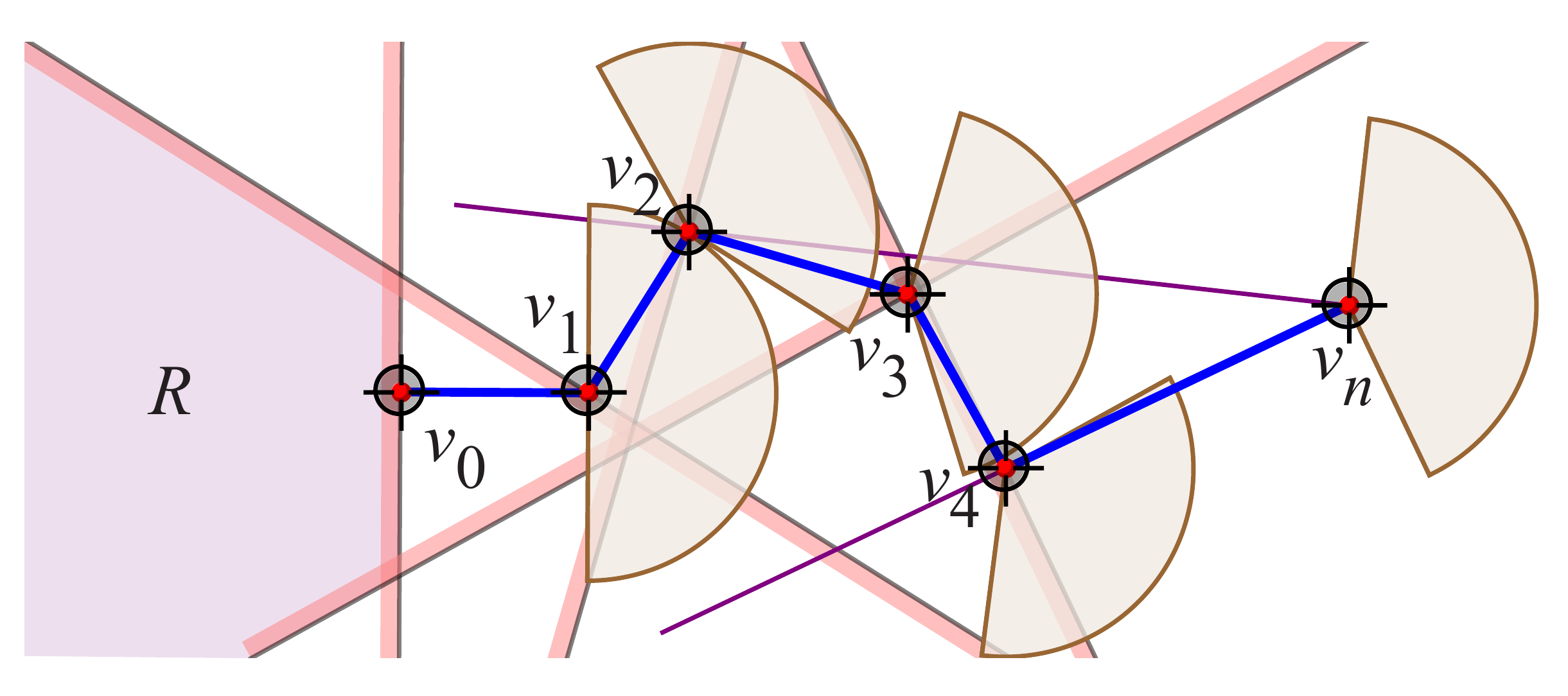}
\caption{$R$ is determined here by the halfplanes generated
by $(v_0,v_1)$, $(v_1,v_2)$, and $(v_3,v_4)$.}
\figlab{RMNegCone}
\end{figure}

It is possible that $Q$ cannot be extended backwards at all: $R$ could equal $\{v_0\}$,
as illustrated in 
Fig.~\figref{RMEmptyRootExtension}.
\begin{figure}[htbp]
\centering
\includegraphics[width=0.75\linewidth]{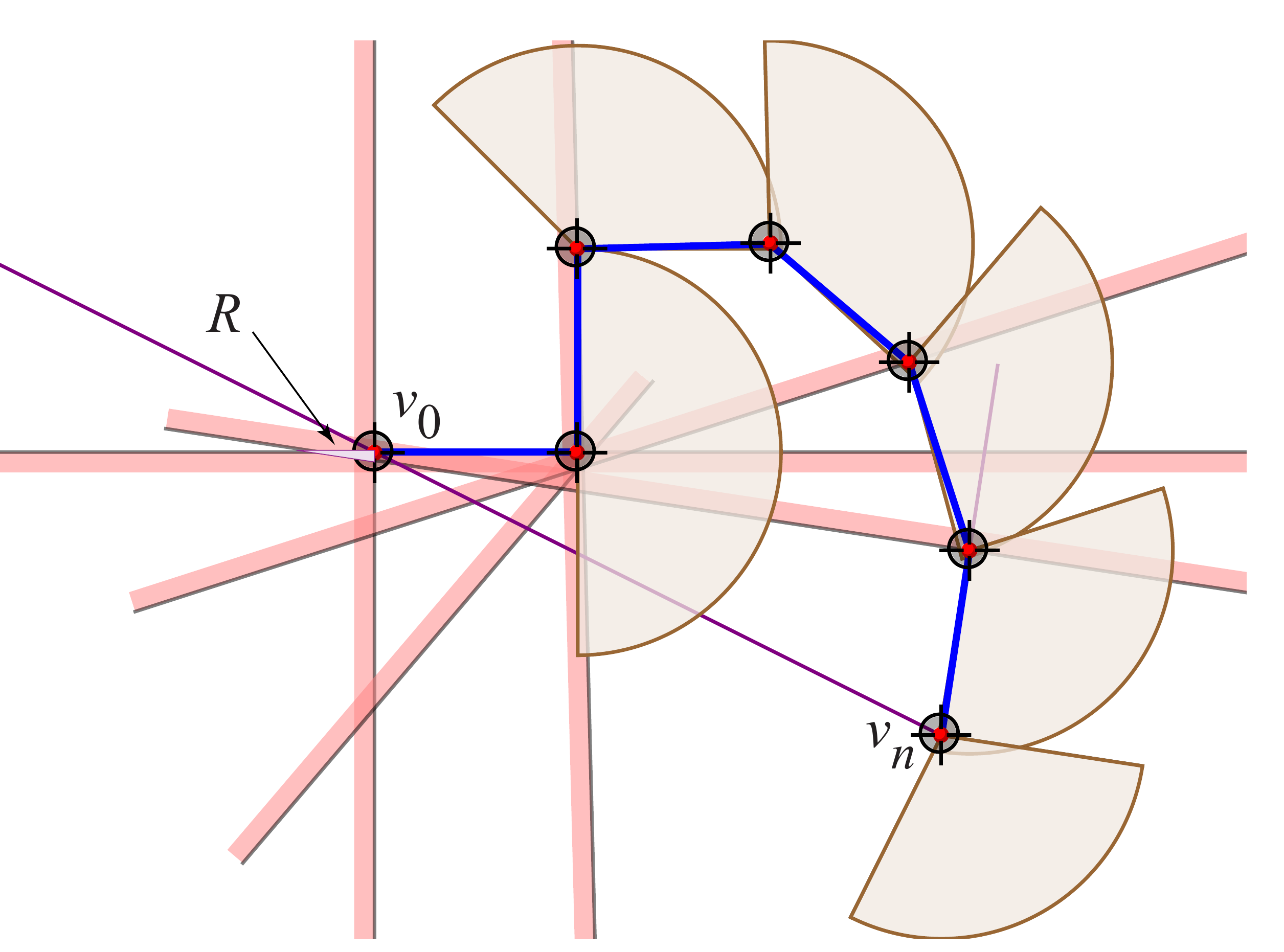}
\caption{$R$ is the small triangle with one corner $v_0$.}
\figlab{RMEmptyRootExtension}
\end{figure}

\subsection{Logarithmic Spirals}
\seclab{spirals}
Although there is a sense in which radially monotone paths are somewhat 
``straight," in that they increase in distance from $v_0$ and never turn too
sharply, the example in Fig.~\figref{RMEmptyRootExtension} shows that
intuition cannot be pushed too far. In this section we identify the
extreme rm path, which turns out to be an approximately $75^\circ$ logarithmic spiral.

A logarithmic spiral can be expressed in polar coordinates by the equation $r= a e^{b \q}$.
For our purposes we can take the scale factor $a$ to be $1$.
$b=1/\tan \phi$, where $\phi$ is the constant angle between
the radial vector to a point $p$ on the curve, and the tangent of the curve at $p$.
This constant angle is the determining characteristic of such spirals.
Fig.~\figref{Spiral80deg} shows a typical spiral. When $\phi = \pi/2$,
the spiral becomes a circle. As $\q \to -\infty$, $r \to 0$ and the spiral approaches
the origin.
\begin{figure}[htbp]
\centering
\includegraphics[width=0.6\linewidth]{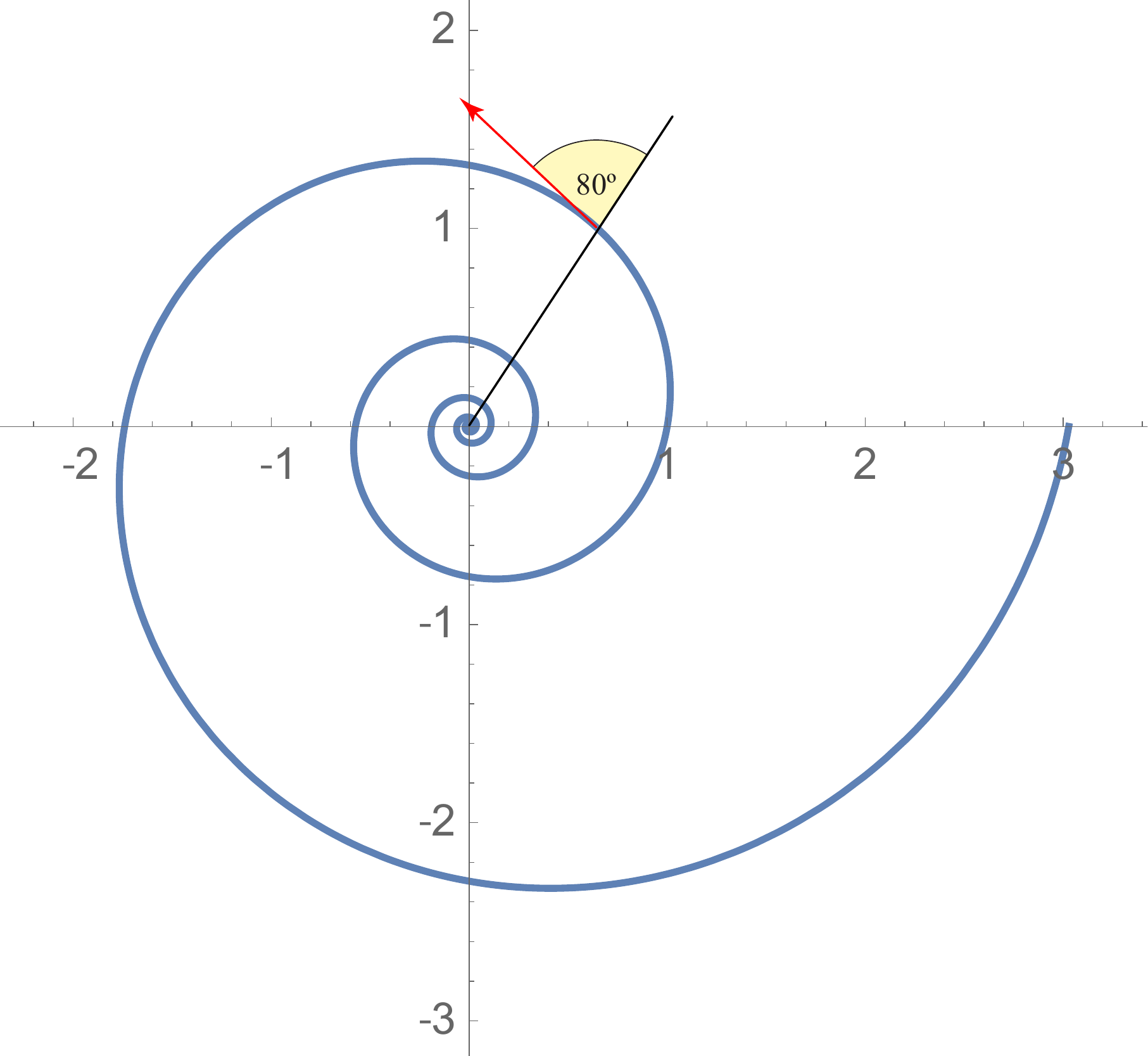}
\caption{A logarithmic spiral with $\phi=80^\circ$. Here $\q$ is plotted out to $2 \pi$.}
\figlab{Spiral80deg}
\end{figure}

We can extend the definition of radial monotonicity to apply to smooth curves:
a directed curve is radially monotone if it is rm w.r.t. to every point $p$ on the curve, i.e.,
the distance from $p$ to points beyond $p$ on the curve (non-strictly) increases
monotonically.
It turns out that the spiral in Fig.~\figref{Spiral80deg} is not radially monotone,
but a numerical calculation shows that log spirals with 
$\phi \le 74.655^\circ$ are radially monotone.
Fig.~\figref{LogSpiralExtreme} shows the extreme spiral,
what I will call a $75^\circ$-spiral.
\begin{figure}[htbp]
\centering
\includegraphics[width=0.95\linewidth]{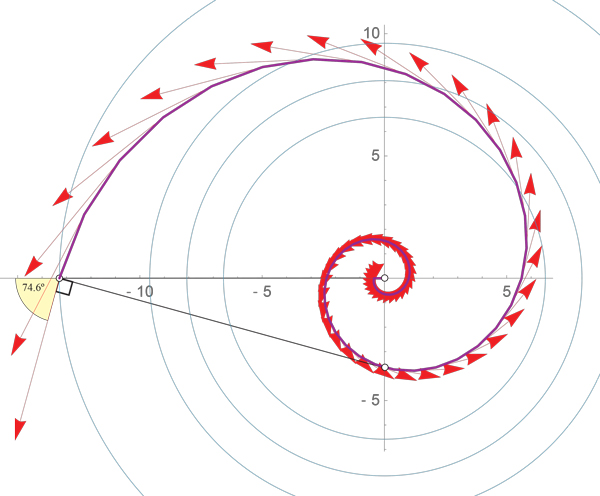}
\caption{$74.655^\circ$ spiral. The tangent at any point 
derived from polar angle $\q$ on the curve
makes a $90^\circ$ angle w.r.t. the point at polar angle $\q-3 \pi/2$.}
\figlab{LogSpiralExtreme}
\end{figure}

\section{Radially Monotone Forests Exist for Non-Obtuse Triangulations}
\seclab{NonObTri}
In this section we derive our main 2D result, which is the inspiration for
the 3D algorithm to follow.

\subsection{Setting}
Again we concentrate on be a triangulated domain $C$ whose boundary $\bC$ is a convex polygon.
Later will impose some conditions on the shape of $C$ and on its triangulation.
Our goal is to find a radially monotone spanning forest $\F$ for $C$.
$\F$ is a collection of disjoint trees, each with exactly one vertex on $\bC$ (its root),
following edges of the triangulation, and spanning all interior vertices (those not on $\bC$).
Note that each interior vertex has a unique path to $\bC$, because each interior
vertex lies in a unique tree rooted on $\bC$.

As mentioned previously, we will eventually turn to 3D when $C$ will be a spherical cap, and cutting $\F$
will permit $C$ to be flattened into the plane as a single piece.
Our goal then will be to avoid overlap in this edge-unfolding.
and a rm-forest guarantees an
infinitesimally safe opening.

In 2D, however, $C$ is already flat and
cutting $\F$ just has the property that the dual graph of
$C$'s triangulation---with each
triangle a node and two triangles adjacent across an uncut edge---is a tree,
and so a single piece.
The reason that cutting $\F$ leaves $C$ in one piece is the restriction that
each tree only touch the boundary at one vertex. 
For if a tree touched two boundary vertices,
then it would form a cycle with the portion of $\bC$ between those vertices,
disconnecting a piece from the remainder.
We insist that $\bC$ be convex because one can rather easily force overlap
in the 3D situation by designing a nonconvex boundary appropriately.
(However, as is clear from Fig.~\figref{CutsLay_s2_pi3_n500},
a weaker condition than convexity may suffice.)


Before commencing, it is important to remark that
not every triangulation with a convex boundary has an rm-forest. 
See Appendix~1 for a counterexample.
Thus we do need to seek conditions that guarantee existence.

\subsection{Overview of Algorithm}
Although one could imagine a search over all possible spanning forests,
instead we will implement a tightly regimented
greedy search.
First, $C$ is placed in a minimum enclosing circle
with center $x$.
Second, the internal vertices are processed in order of their distance to $x$,
with those closest to $\bC$ first, and the closest vertex to $x$ last.
So the rm trees are grown inward toward $x$.
Third, we grow \emph{hourglass paths}, which are a restricted
class of radially monotone paths.
Before describing the details, we sketch the algorithm at a high level 
in the boxed pseudocode below.

\begin{algorithm}[htbp]
\caption{Find rm cut forest $\F$ for planar $C$}
\DontPrintSemicolon

    \SetKwInOut{Input}{Input}
    \SetKwInOut{Output}{Output}

    \Input{Convex triangulated domain $C$, with bounding circle center $x$}
    \Output{Radially monotone cut forest $\F$}
    
    \BlankLine
    \tcp{$\F$ is grown from $\bC$ inward}
     \BlankLine
    
    $\F \leftarrow \varnothing$
    
    Sort interior vertices by distance from center $x$, those nearest $\bC$ first.\;
     
    \BlankLine
     
     \tcp{Grow $\F$:}

     \ForEach{vertex $v_0$ in sorted order}{
   
     \ForEach{vertex $v_1$ already in $\F$ or on $\bC$}{
     
     Let $e=(v_0,v_1)$, if $e$ is a triangulation edge.\;
     Check if the path from $v_0$ to $\bC$ in $\F+e$ is radially monotone.\;
     If so, record its worst turnangle $\t$ (with $\t>90^\circ$ not rm). \;
    
      }
       
      Choose the $e^*$ that has the best (minimum) $\t$. \;
      
      $\F \leftarrow \F + e^*$
            
     }
     
\end{algorithm}

The order of growth of the trees in the forest $\F$
is determined by the concentric-circle sorting w.r.t. $x$.
The algorithm is greedy in the sense that among the options 
(incident triangulation edges) when
connecting $v_0$ to some further-away $v_1 \in \F$,
the ``best connection" is selected.
For a path of length at least $2$, best is defined
as the smallest worst turnangle $\t_i$,
the angle from the vector $v_i-v_0$ to the vector $v_{i+1}-v_i$.
When $\t_i > \pi/2$, the path is not rm.
Smaller $\t_i$ means straighter paths, so this is a natural choice.

We need to define what the turnangle means when the path is
of length $1$, $v_0 v_1$ with $v_1 \in \bC$.
We use the turnangle from the vector $v_1-v_0$ to the tangent
of the circle centered at $x$ through $v_1$.
The reason for this choice (which ignores the orientation of $\bC$ at $v_1$)
will be made clear below.

Note that, when a connection from $v_0$ to $v_1$ is explored for radial monotonicity, 
it is only necessary to check for rm w.r.t. $v_0$, because by construction
the path is already known to be rm w.r.t. $v_i$, $i>0$, because it was
earlier added to $\F$.

\subsection{Hourglass Paths for Halfplanes}
\seclab{HourGlassHp}
To provide intuition for hourglass paths, we first describe them 
for $C$ a halfplane, bounded on the right by a vertical line $L=\bC$.
Imagine the halfplane meshed with a triangulation.

It will be occasionally more convenient to label the vertices of a path to increase
from $\bC$ inward rather than the reverse.
We will use $u_i$ whenever indexing inward.
So $Q=(v_0,v_1,\ldots,v_k) = (u_k,u_{k-1},\ldots,u_0)$ with $u_0 \in \bC$.
So our rm paths will grow from $u_0$ inward to $u_k$
(but the direction for radial monotonicty will still be interior-to-boudary, $v_0$ to $v_k$).

We define an \emph{hourglass} $H$ as a double cone bounded by two lines
meeting at right angles. The \emph{baseline} of an hourglass is the line
passing through the apexes of the cones so that the bounding rays
make angles of $\pi/4$ with the baseline. In our halfplane example, the
baselines are all parallel to $L$.

We will now imagine growing a path $Q$ from $u_0 \in L=\bC$ inward.
We place an hourglass $H$ centered on each $u_i$,
and call the cone of $H$ pointing right, toward $u_{i-1}$ and the halfplane boundary,
the \emph{out-cone}, and the cone of $H$ pointing left, toward $u_{i+1}$ and the interior,
the \emph{in-cone} of $H$.
Finally, define an \emph{hourglass path} $Q=(u_0,\ldots,u_k)$
as one for which each edge $u_i u_{i+1}$ falls inside the out-cone of
the hourglass at $u_i$ and the in-cone of the hourglass at $u_{i+1}$.
See Fig.~\figref{HgHalfplane}.
\begin{figure}[htbp]
\centering
\includegraphics[width=0.90\linewidth]{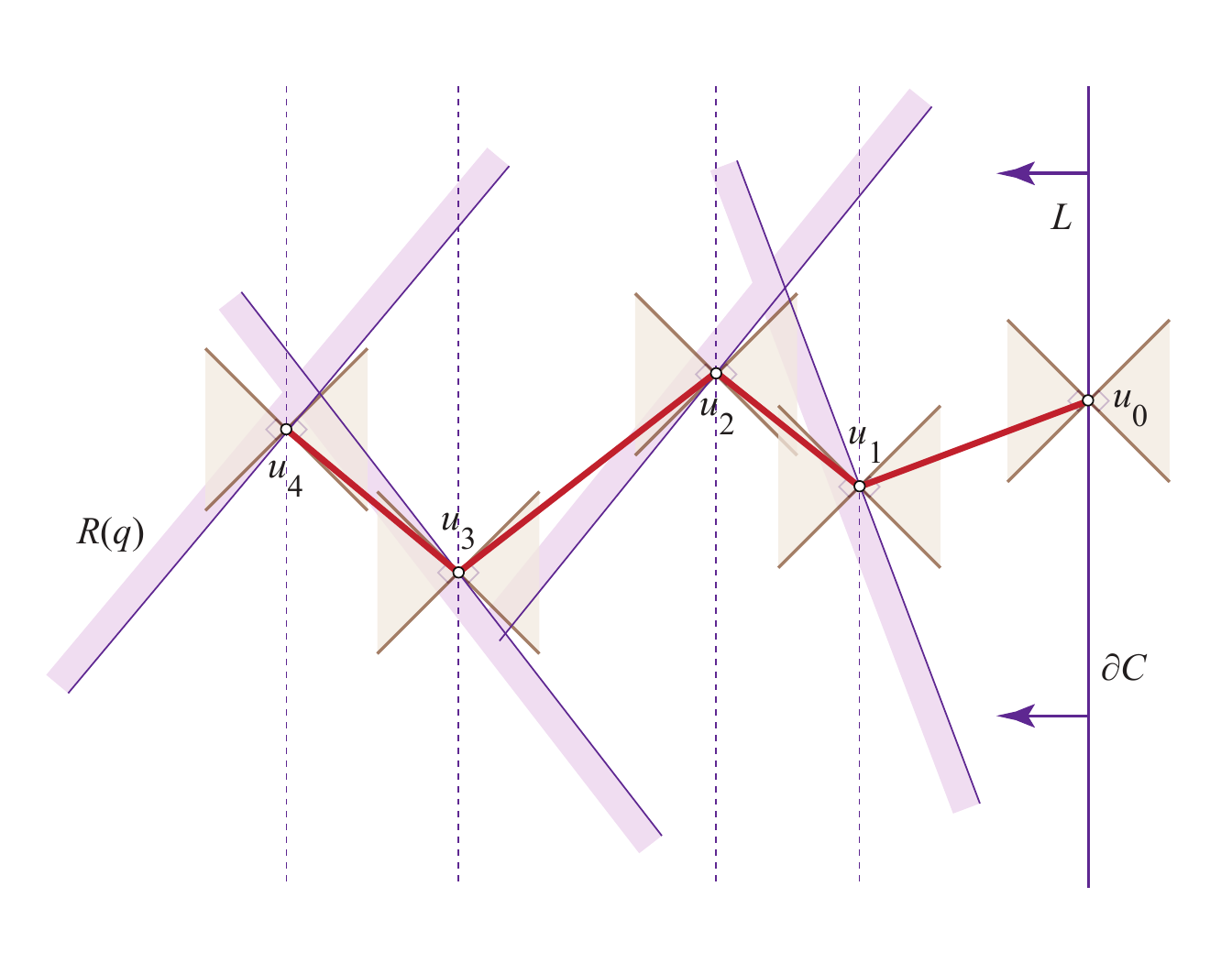}
\caption{The out-cone of the hourglass at $u_4$ falls inside the intersection
of the orthogonal halfplanes (purple) through $u_1,\ldots, u_4$.}
\figlab{HgHalfplane}
\end{figure}
Note that, if $u_i u_{i+1}$ falls inside the out-cone of
$u_i$ , then it necessarily falls within the in-cone of $u_{i+1}$ .
We retain the definition as stated because this property relies on the
parallel baselines of the hourglasses, which will not hold in more general situations.

Recall from Section~\secref{BackwardGrowth} that we can extend $Q$ 
``backwards" (inward) to $Q=(u_0,\ldots,u_k,u_{k+1})$ only if
the edge $u_k,u_{k+1}$ falls within $R(Q)$, the region formed by halfplane
intersections (and which could be a single point).
The key point is that hourglass paths can always extend:

\begin{lemma}
If $Q=(u_0,\ldots,u_k)$ is an hourglass path in $C$ a halfplane, with $u_0 \in L$,
then (a)~$Q$ is a radially monotone path, and
(b)~$R(Q)$ contains the out-cone at $u_k$.
\lemlab{HgOutCone}
\end{lemma}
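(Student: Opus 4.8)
The plan is to reduce both parts to a single angular observation. Put coordinates so that $L$ is vertical with the interior of $C$ to its left, and let $d$ be the direction, horizontal here, pointing from $L$ into the interior of $C$; let $K$ be the closed convex cone of vectors making an angle at most $\pi/4$ with $d$. The defining property of an hourglass path in this halfplane setting is that each directed edge $u_iu_{i+1}$ satisfies $u_{i+1}-u_i\in K$. Since $K$ is a convex cone it is closed under nonnegative linear combinations, so for any indices $\ell\le m$ the chord vector $u_m-u_\ell=\sum_{i=\ell}^{m-1}(u_{i+1}-u_i)$ also lies in $K$ (this sum is $0$ when $\ell=m$). This fact --- every forward chord of $Q$ points into the interior to within $\pi/4$ --- is essentially the whole content of the lemma; the rest is bookkeeping, made a little fussy only because radial monotonicity is stated in the $v$-labeling ($v_0=u_k,\dots,v_k=u_0$) while hourglasses are described in the inward $u$-labeling.

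For part~(a) I would use the equivalent angular criterion established earlier: $Q$ is rm with respect to $v_i$ iff $\angle(v_i,v_j,v_{j+1})\ge\pi/2$ for all $j>i$; this criterion already handles the distance from $v_i$ to every point of $Q$, not just its vertices. Rewritten in $u$-labels it asks, for every $m$ and every $\ell$ with $1\le\ell<m$, that the angle at $u_\ell$ between the ray toward $u_m$ and the ray toward $u_{\ell-1}$ be at least $\pi/2$. The first ray has direction $u_m-u_\ell\in K$ by the chord fact; the second has direction $u_{\ell-1}-u_\ell=-(u_\ell-u_{\ell-1})$, and since $u_\ell-u_{\ell-1}$ is an edge of $Q$ it lies in $K$, so $u_{\ell-1}-u_\ell$ makes an angle at most $\pi/4$ with $-d$. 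Because $d$ and $-d$ point in opposite directions, the angle between the two rays is at least $\pi-\pi/4-\pi/4=\pi/2$. Since this holds for all admissible $\ell,m$ (the case $m=1$ is vacuous), $Q$ is rm with respect to each of $v_0,\dots,v_{k-1}$, which is the definition of a radially monotone path.

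For part~(b), recall that $R(Q)=\bigcap_{m=1}^{k}H_m$, where $H_m=\{p:(p-u_m)\cdot(u_{m-1}-u_m)\le0\}$ is the halfplane bounded by the line through $u_m$ orthogonal to the edge $(u_m,u_{m-1})$, taken on the side away from $u_{m-1}$. Let $p$ be any point of the out-cone at $u_k$ --- the cone of half-angle $\pi/4$ at $u_k$ with axis $d$ --- so $p=u_k+tw$ for some $t\ge0$ and $w\in K$. Then $p-u_m=(u_k-u_m)+tw$, and by the chord fact $u_k-u_m\in K$ (it is $0$ when $m=k$), so $p-u_m\in K$ as a nonnegative combination of members of $K$. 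On the other side, $u_{m-1}-u_m=-(u_m-u_{m-1})$ makes an angle at most $\pi/4$ with $-d$, so the vectors $p-u_m$ and $u_{m-1}-u_m$ enclose an angle at least $\pi/2$, giving $(p-u_m)\cdot(u_{m-1}-u_m)\le0$; that is, $p\in H_m$. As this holds for every $m$ we get $p\in R(Q)$, so $R(Q)$ contains the out-cone at $u_k$.

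I do not anticipate a real obstacle: the proof is driven entirely by ``$\pi/4+\pi/4=\pi/2$'' and the convexity of a circular cone of directions, and even the ``all points of $Q$'' subtlety in the definition is sidestepped by invoking the angular criterion rather than arguing about distances. The points to watch are the $u$-versus-$v$ index translation and a handful of degenerate cases --- $\ell=m$ or $m=k$ yielding a zero vector, and $p$ possibly coinciding with some $u_m$ --- in each of which the inequality holds trivially; and because $K$ is a closed cone one obtains only the non-strict bound $\ge\pi/2$, which is precisely what radial monotonicity as defined requires.
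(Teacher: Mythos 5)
Your proof is correct and follows essentially the same approach as the paper's: the entire argument for both parts reduces to the fact that every edge of an hourglass path in the halfplane points within $\pi/4$ of the inward horizontal direction, so any relevant pair of directions (a forward chord and a reversed edge) makes an angle of at least $\pi - \pi/4 - \pi/4 = \pi/2$. The only genuine difference is one of rigor: the paper's proof of part~(a) establishes radial monotonicity only with respect to $u_k$ explicitly and then asserts ``Since the argument holds for any pair of vertices of $Q$, $Q$ is indeed rm,'' whereas you make the needed generalization precise by observing that the convexity of the cone $K$ implies every chord $u_m - u_\ell$ (not just the chords from $u_k$) lies in $K$, and then run the angular argument for arbitrary $\ell < m$. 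You also resolve the paper's inconsistent ``out-cone'' terminology in this subsection (the description paragraph has the out-cone pointing toward $\partial C$, while the lemma statement and the sentence following the proof need it to point into the interior) by silently taking the inward-pointing cone with axis $d$, which is what the lemma must mean for ``$R(Q)$ contains the out-cone at $u_k$'' to be a useful extension statement.
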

\begin{proof}
The situation is as illustrated in Fig.~\figref{HgHalfplane}.
(a)
Note that $Q$'s direction for the purposes of radial monotonicity
is from $u_k$ to $u_0$. Because each hourglass cone is angled $45^\circ$
w.r.t. the vertical, it must be that the whole path $Q$ falls within the
out-cone at $u_k$. Then consider the angle $\angle u_k, u_i, u_{i-1}$
for any $u_i$. This angle must be at least $90^\circ$, exactly $90^\circ$
if every edge of $Q$ falls on (say) the lower cone boundaries.
Since the argument holds for any pair of vertices of $Q$, $Q$ is indeed rm.

(b)
$R(Q)$ is the intersection of halfplanes orthogonal to $u_{i-1} u_i$ and through $u_i$.
Falling within the hourglass in- and out-cones assures that each edge of $Q$ makes
an angle within $\pm 45^\circ$ of the horizontal, so the halfplanes comprising
$R(Q)$ make angles within $\pm 45^\circ$ of the vertical. Thus their
intersection includes the out-cone at $u_k$.
\end{proof}

\noindent
The import of this lemma is that an hourglass path can be extended inward
with any edge in the in-cone of the hourglass at $u_k$, and remain radially monotone.

So far we have ignored the triangulation mesh of the halfplane $C$.
A triangulation is called a \emph{non-obtuse triangulation} if no
triangle angle strictly exceeds $\pi/2$, i.e., is obtuse.
If the halfplane $C$ is non-obtusely triangulated, then the in-cone at $u_k$
necessarily contains some triangulation edge (because the cone
angle is $90^\circ$).

Now imagine executing Algorithm~1 on this halfplane $C$, 
except with the concentric circles replaced by the vertical lines through each
vertex of the triangulation (effectively, infinite-radii concentric circles).
Suppose we have grown a forest that connects every vertex strictly right of $v$ to
$\bC$, and we seek to connect $v$ to grow this forest.
It should be clear that the out-cone of the hourglass at $v$ must contain an edge
of the non-obtuse triangulation, and so can indeed be connected by
extending a radially monotone path.

\subsection{Hourglass Paths for Convex $C$}
\seclab{HourGlassC}
Now we turn back to the less contrived situation of a convex domain $C$,
but again non-obtusely triangulated.
Every planar straight-line graph on $n$ vertices has a conforming non-obtuse triangulation
of $O(n^{2.5})$ triangles~\cite{b-nonobtri-16}.
We will follow Algorithm~1 just as in the halfplane case, but now
the proofs are no longer straightforward.
A sample result of applying the algorithm 
is shown in Fig.~\figref{NonObcocircs}.
\begin{figure}[htbp]
\centering
\includegraphics[width=0.65\linewidth]{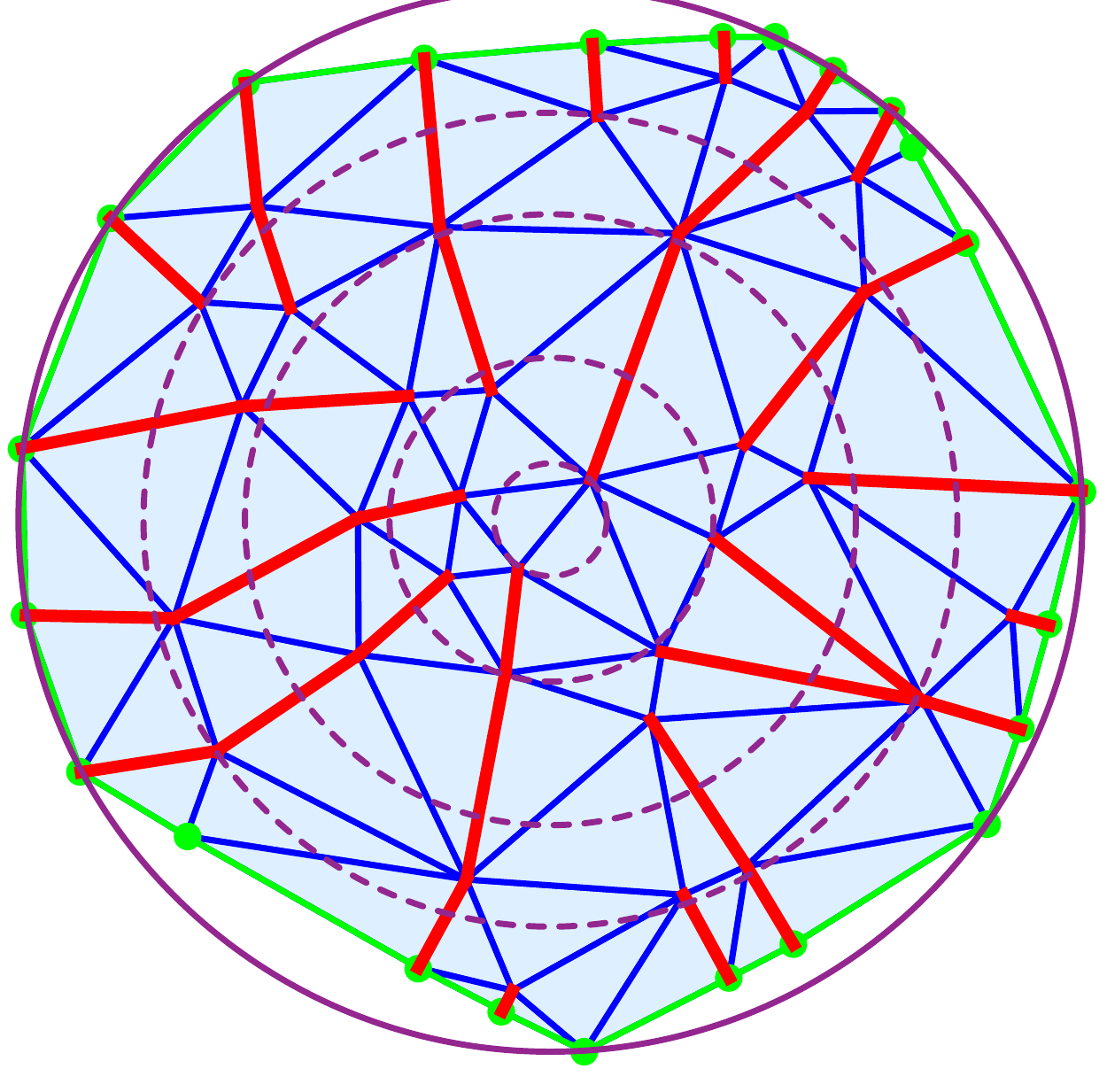}
\caption{A $56$-vertex example, with several concentric circles shown.
The cut forest $\F$ is marked (red).}
\figlab{NonObcocircs}
\end{figure}

\subsection{Theorem Statement}
Although I have no counterexample for arbitrary convex domains $C$,
the proof below restricts $C$ to those that ``nicely" fit within
the minimal bounding circle $B$.
Define a convex domain $C$ as \emph{round} if
the neighborhood of every vertex $v \in \bC$
contains the in-cone of the hourglass $H$ at $v$,
where the baseline of $H$ is tangent to the circle through $v$
centered at the center $x$ of $B$.
\begin{theorem}
Let $C$ be a non-obtusely triangulated
round convex domain.
Then $C$ has an radially monotone spanning forest.
\thmlab{NonObConcirc}
\end{theorem}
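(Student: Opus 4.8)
The plan is to run Algorithm~1 and prove by induction on the sorted order of interior vertices that the forest $\F$ it builds is always a radially monotone spanning forest, i.e.\ that the greedy step never gets stuck. The key structural tool is the hourglass: with each vertex $v$ of $C$ we associate the hourglass $H(v)$ whose baseline is tangent at $v$ to the circle through $v$ centered at the bounding-circle center $x$. The roundness hypothesis is exactly what guarantees that near $\bC$ the in-cone of $H(v)$ lies inside $C$, so that extension edges are not blocked by the boundary. I would first isolate the invariant to maintain: at the moment vertex $v_0$ is to be attached, every partial path of $\F$ currently present is an \emph{hourglass path} in the generalized sense (each edge $u_i u_{i+1}$ lies in the out-cone of $H(u_i)$ and the in-cone of $H(u_{i+1})$), and hence, by an analogue of Lemma~\lemref{HgOutCone}, is radially monotone and has a backward-extension region $R(Q)$ containing the out-cone at its current innermost vertex $u_k$.

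The heart of the argument is then the \emph{extension step}. When processing $v_0$, I must show there is a triangulation edge $e=(v_0,v_1)$ with $v_1$ already in $\F$ or on $\bC$ such that adjoining $e$ keeps the resulting path radially monotone w.r.t.\ $v_0$. Two sub-claims are needed. First, a \textbf{containment claim}: because $v_0$ is at least as far from $x$ as every vertex already in $\F$ (concentric-circle sort), and because hourglasses are defined by circle tangents, the out-cone of $H(v_1)$ — into which, by the invariant, $R(Q)$ for the path at $v_1$ extends — meets the region near $v_0$ in a cone at $v_0$ of opening at least $\pi/2$. This is where the geometry of nested circles replaces the parallel baselines of the halfplane model, and it is the step I expect to be the main obstacle: in the halfplane case the baselines are parallel so an out-cone is literally a translate of an in-cone, whereas here the baselines tilt, and I must bound how much the tilt can shrink the usable cone, using that $C$ lies inside $B$ (so the curvature of the circles is bounded below) and that $v_0$ is interior. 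Second, a \textbf{non-obtuse triangulation claim}: any cone of opening $\ge \pi/2$ with apex at a vertex $v_0$ of a non-obtuse triangulation contains at least one triangulation edge incident to $v_0$ — this follows because the triangles around $v_0$ partition a neighborhood into angles each $\le \pi/2$, so a $\ge\pi/2$ wedge cannot fit strictly between two consecutive edges. Combining the two claims produces an admissible edge $e$; the greedy rule then picks the one with smallest worst turnangle, which is in particular admissible, so the invariant is restored with $v_0$ appended (and one checks the new edge $(v_0,v_1)$ itself lies in the out-cone of $H(v_0)$, again by the concentric-circle ordering and roundness, so $v_0$'s own future extensions remain governed by an hourglass).

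Finally I would assemble the induction: initially $\F=\varnothing$ vacuously satisfies the invariant; the boundary vertices serve as roots, and the length-$1$ case is handled by the stipulated convention that the turnangle at a root $v_1\in\bC$ is measured against the tangent to the circle through $v_1$, which is precisely the baseline of $H(v_1)$ — this makes a length-$1$ path automatically an hourglass path and explains why the orientation of $\bC$ at $v_1$ is ignored. Since every interior vertex is eventually processed and successfully attached to some tree touching $\bC$ at exactly one vertex, $\F$ spans all interior vertices, each tree is rooted at a single boundary vertex, and every leaf-to-root path is radially monotone by the invariant together with Lemma~\lemref{HgOutCone}(a). Hence $\F$ is a radially monotone spanning forest, proving the theorem. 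The two places needing genuine care are the containment claim (tilted baselines) and verifying that the greedy ``smallest worst turnangle'' choice cannot accidentally fall outside the hourglass regime — I would argue the latter by noting any admissible extension edge is, by definition of $R(Q)\supseteq$ out-cone, already within the hourglass cones, so minimizing turnangle only selects among such edges.
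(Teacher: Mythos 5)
Your high-level strategy matches the paper's: maintain the invariant that all partial paths in $\F$ are hourglass paths (with circle-tangent baselines), use roundness to start paths at $\bC$, and use non-obtuseness to find an extension edge in a $\pi/2$ cone at each new vertex. However, there is a genuine gap, and a misjudgment of where the difficulty lies.

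The gap: you write that each partial path ``by an analogue of Lemma~\lemref{HgOutCone}, is radially monotone and has a backward-extension region $R(Q)$ containing the out-cone at its current innermost vertex.'' This analogue is precisely the core content of the theorem and cannot be cited as a routine transfer from the halfplane case. In the halfplane setting the hourglass baselines are parallel, so the $\pm45^\circ$ constraints compose trivially; with circle-tangent baselines they rotate as the path spirals inward, and it is not at all obvious that the composed constraints still yield radial monotonicity, nor that $R(Q)$ is not clipped. The paper isolates these as Lemma~\lemref{hgrm} and Lemma~\lemref{NoClipping}, shows that the extremal case is a $45^\circ$ logarithmic spiral, and establishes both claims by explicit (numerical) analysis of the functions $\b(r,\a)$ against that extremal spiral. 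Your proposal never supplies any substitute for this analysis; it simply asserts it. Conversely, you flag as ``the main obstacle'' the step that because baselines tilt, an out-cone is no longer a translate of an in-cone. That step is the paper's Lemma~\lemref{OutconeIncone}, which is handled in a few lines; it is not where the difficulty resides.

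Two further issues. First, a sign error: the sort processes vertices nearest $\bC$ (farthest from $x$) first, so when $v_0$ is processed it is \emph{closer} to $x$ than every vertex already in $\F$, not ``at least as far from $x$.'' The argument still wants the out-cone of $v_0$ (pointing outward) to contain a triangulation edge to some already-processed $v_1$, and Lemma~\lemref{OutconeIncone} then puts $v_0$ in the in-cone of $v_1$; with your sign the roles of the cones come out garbled. Second, your closing argument that ``any admissible extension edge is, by definition of $R(Q)\supseteq$ out-cone, already within the hourglass cones, so minimizing turnangle only selects among such edges'' is incorrect: $R(Q)$ merely \emph{contains} the relevant cone and may be strictly larger, so an rm-preserving edge need not be an hourglass edge, and the greedy minimum-turnangle rule could in principle exit the hourglass regime. (The paper does not explicitly address this either, but you cannot dismiss it with the reasoning you give.)
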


\subsection{Proof of Theorem~\protect\thmref{NonObConcirc}}
Let $C$ be a round convex domain, with bounding circle $B$
centered on $x$.
We define the hourglass $H$ at a vertex $v$ to have
baseline tangent to the circle through $v$ centered on $x$.
Let $Q=(u_0,\ldots,u_k)$ be a path of edges in $C$, with $u_0 \in \bC$.
We say that $Q$ is an \emph{hourglass path} if
each edge $u_i u_{i+1}$ falls within the in-cone at $u_i$ and
the out-cone at $u_{i+1}$.
See Fig.~\figref{Hgpath}.
\begin{figure}[htbp]
\centering
\includegraphics[width=0.90\linewidth]{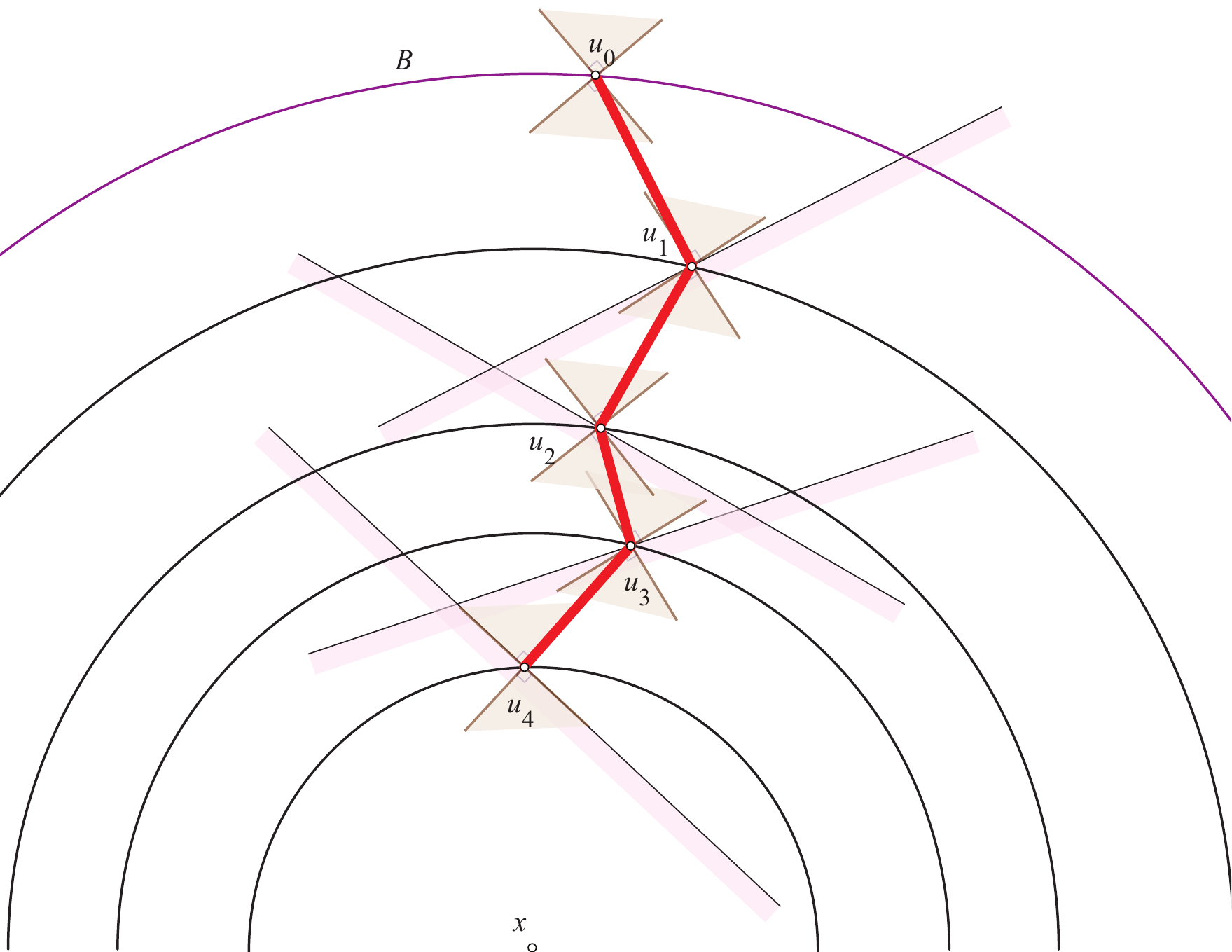}
\caption{An hourglass path walks between concentric circles.}
\figlab{Hgpath}
\end{figure}
We need to prove three results to achieve Theorem~\thmref{NonObConcirc}:
\begin{enumerate}
\squeezelist
\item An hourglass path is radially monotone: Lemma~\lemref{hgrm}.
\item The in-cone at the last vertex $u_k$ is inside the region $R(Q)$,
so that $Q$ can be extended inward: Lemma~\lemref{NoClipping}.
\item For the next vertex $v$ to be processed, any connection
within its out-cone to an earlier processed vertex $v'$ falls within
the in-cone of $v'$: Lemma~\lemref{OutconeIncone}.
\end{enumerate}

\begin{lemma}
Let $v$ be a vertex at distance $1$ from $B$'s center $x$.
Then for every vertex $v'$ within the out-cone of $v$
at distance $r > 1$ from $x$,
the in-cone of $v'$ includes $v$.
\lemlab{OutconeIncone}
\end{lemma}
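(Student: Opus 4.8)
The plan is to reduce the claim to an elementary statement about angles between two hourglasses whose baselines are tangent to concentric circles centered at $x$. Place $v$ at distance $1$ from $x$; without loss of generality put $v$ on the positive $y$-axis, so that the baseline of the hourglass $H$ at $v$ is horizontal (tangent to the unit circle at $v$), and the out-cone of $H$ opens downward toward $x$, bounded by the two rays through $v$ making $\pm 45^\circ$ with the horizontal. Let $v'$ be any point inside this out-cone with $|v'-x| = r > 1$. I must show that the edge $(v,v')$ — directed from $v'$ inward-to-outward is $(v', v)$, but as an hourglass edge it is the link $u_i u_{i+1}$ with $u_i = v'$, $u_{i+1} = v$ — lies in the in-cone of the hourglass $H'$ at $v'$. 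Equivalently, writing the direction of the segment from $v'$ to $v$, I must show it makes an angle of at most $45^\circ$ with the baseline direction at $v'$, where that baseline is tangent at $v'$ to the circle of radius $r$ about $x$ (the in-cone at $v'$ is the cone of half-angle $45^\circ$ about the baseline, opening outward, i.e., away from $x$).

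The key steps, in order: (1) Parametrize $v'$ by its polar angle $\psi$ as seen from $x$, so $v' = x + r(\sin\psi, \cos\psi)$ after aligning coordinates as above; the baseline direction at $v'$ is the tangent to the circle, namely $(\cos\psi, -\sin\psi)$ (up to sign), and the in-cone at $v'$ is $\{\text{directions within }45^\circ\text{ of the outward-tangent cone}\}$. (2) Express the constraint "$v'$ is in the out-cone of $v$" as a pair of linear inequalities bounding $\psi$ and $r$ jointly — the out-cone boundary rays from $v$ are the two lines of slope $\pm 1$ through $v$, and membership forces $\psi$ to be small when $r$ is near $1$ and permits larger $|\psi|$ as $r$ grows, but always with $|\psi|$ bounded. (3) Compute the direction of $v - v'$ and the angle it makes with the tangent at $v'$; show this angle is maximized on the boundary of the out-cone of $v$ (a monotonicity argument: pushing $v'$ to the cone boundary and/or increasing $r$ only worsens the angle), and check the boundary case directly. (4) Verify that in the extreme boundary case the angle equals exactly $45^\circ$, so the in-cone containment is tight but holds — this is exactly the phenomenon that makes $45^\circ$ (equivalently the right-angle hourglass) the correct choice, and it dovetails with the $74.655^\circ$ log-spiral as the extremal rm curve discussed in Section~\ref{sec:spirals}.

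The main obstacle I expect is Step (3): showing the worst case occurs on the boundary of $v$'s out-cone and identifying it cleanly. The naive approach — writing the angle as an explicit function of $(r,\psi)$ and differentiating — produces an unpleasant expression involving $\arctan$ of ratios of trigonometric polynomials. I would instead argue geometrically: the in-cone of $v'$ is the set of directions $d$ with $\langle d, u\rangle \ge 0$ for the two inward-orthogonal constraint directions $u$ generated by the $\pm45^\circ$ tangent rays at $v'$; each such constraint, "the segment $v'v$ makes angle $\le 45^\circ$ with one side of the tangent," is equivalent to "$v$ lies in a fixed halfplane through $v'$," whose boundary line is the $\pm45^\circ$ ray through $v'$ relative to the tangent at $v'$. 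One then checks that this bounding line, as $v'$ ranges over the out-cone boundary of $v$, sweeps out precisely a region not separating $v$ from $v'$ — i.e., that the two hourglasses' cone boundaries are "nested" in the right sense because both are tilted at $45^\circ$ relative to tangents of concentric circles, and the concentric-circle tangents rotate monotonically with polar angle. This nesting, together with convexity of each halfplane constraint, reduces everything to the single equality check at the extreme point, avoiding the calculus. The roundness hypothesis on $C$ enters only to guarantee that the relevant portion of the in-cone actually lies inside $C$ near $v'$, so the containment is not just a statement about abstract cones but about realizable triangulation edges.
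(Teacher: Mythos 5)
Your plan is derailed by a confusion about which direction the hourglass cones open. You take the in-cone at $v'$ to be ``the cone of half-angle $45^\circ$ about the baseline'' and aim to show $v-v'$ makes an angle of at most $45^\circ$ with the baseline (tangent) direction at $v'$. In the paper's definition, however, the hourglass cones open \emph{perpendicular} to the baseline, along the radial from $x$: in the halfplane warm-up of Section~\ref{sec:HourGlassHp} the baselines are vertical and the paper states explicitly that edges inside the cones make angles within $\pm 45^\circ$ of the \emph{horizontal}. So the in-cone at $v'$ is the set of directions within $45^\circ$ of $x - v'$, and the condition to verify is $\angle(v - v',\, x - v') \le 45^\circ$, not $\angle(v - v',\,\text{tangent at }v') \le 45^\circ$. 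These two angles sum to $90^\circ$, so the inequality you set out to prove is the complement of the true one; carried out faithfully, your calculation would appear to refute the lemma. (You also say the out-cone of $v$ opens ``toward $x$,'' which is backwards --- it opens away from $x$, toward $\bC$ --- though the parametrization in your step (1) is consistent with the correct direction, so this may be only a misstatement.)

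The sign error propagates into your step (3) monotonicity claim. With $x=(0,0)$, $v=(0,1)$, $v'=(a,b)$, $r^2=a^2+b^2$, one gets $\cos\angle(v-v',\,x-v') = (r^2-b)\big/\bigl(r\sqrt{r^2+1-2b}\bigr)$; on the out-cone boundary $b=1+a$ this collapses to $(2a+1)\big/\sqrt{(2a+1)^2+1}$, which \emph{increases} from $1/\sqrt 2$ as $r\to 1^+$ toward $1$ as $r\to\infty$. So the angle to the radial direction is \emph{largest} at $r\to 1$ --- exactly the extreme the paper's proof flags --- and ``increasing $r$ only worsens the angle'' is wrong for the quantity you actually need; your step (4) would then examine the wrong endpoint. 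Once the cone test is corrected, the rest of your strategy --- parametrize, reduce to an angle inequality, isolate the out-cone boundary --- is sound and in fact more careful than the paper's proof, which is essentially an appeal to a figure. The resulting inequality reduces to $r^4 - 2r^2 b + 2b^2 - r^2 \ge 0$, and the out-cone constraint $|a|\le b-1$ (equivalently $2b^2 \ge r^2-1+2b$) bounds this below by $(r^2-1)(r^2+1-2b)$, which is nonnegative since $r>1$ and $b\le r$.
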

\begin{proof}
Refer to Fig.~\figref{HgOutCone}. The vertices within the out-cone of $v$
each cover $v$ with their in-cones.
The extreme case occurs when $r \to 1$ and a vertex $v'$ lies on the boundary
of the out-cone of $v$. For all $v'$ within the out-cone of $v$, the edge $v v'$ lies in
both the out-cone of $v$ (by assumption) and within the in-cone of $v'$.
\end{proof}
\begin{figure}[htbp]
\centering
\includegraphics[width=0.75\linewidth]{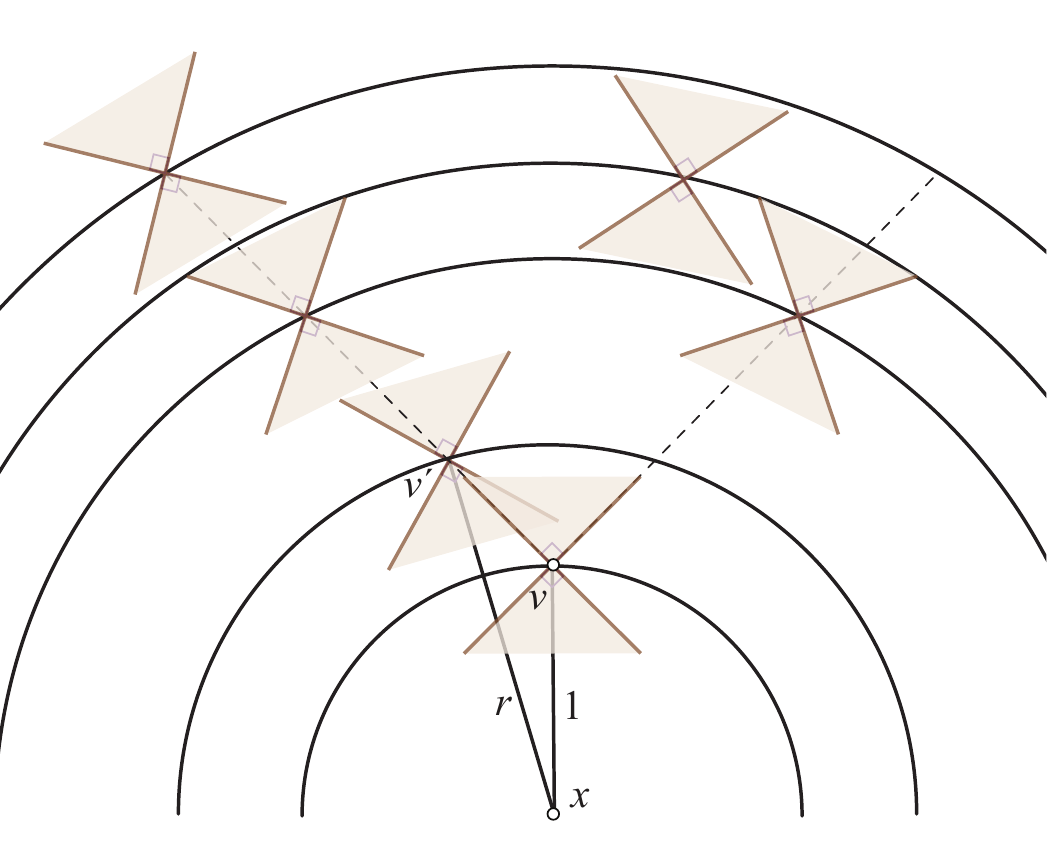}
\caption{$v$ is covered by the in-cone of any vertex $v'$ 
that falls within the out-cone of $v$.}
\figlab{HgOutCone}
\end{figure}
The consequence of this lemma is that a connection to an hourglass path
that terminates at $v'$ by the edge $v v'$ will maintain the grown path as
an hourglass path.

The next two lemmas seem technically difficult, and my
current arguments depend on numerical computations,
but numerical arguments which I believe are convincing
and could be converted to analytical proofs with effort.
I will continue to call them ``lemmas" with the understanding they are 
(justified) claims at this point.

\begin{lemma}
Hourglass paths are radially monotone.
\lemlab{hgrm}
\end{lemma}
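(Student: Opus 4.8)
The plan is to establish that an hourglass path $Q=(u_0,\ldots,u_k)$ (with $u_0\in\bC$) is radially monotone by showing that, for each index $i<j$, the angle $\angle(u_j,u_i,u_{i-1})\ge\pi/2$; equivalently, in the $v$-indexing where $v_0=u_k$ is the interior endpoint, that $\alpha_m(v_\ell)\ge\pi/2$ for all $m>\ell$, which is exactly the equivalent rm criterion stated in Section~\secref{PropsRM}. Because the hourglass at each $u_i$ has its baseline tangent to the circle through $u_i$ centered at $x$, the edge $u_iu_{i+1}$ lies within $45^\circ$ of that tangent direction; the crux is that as we walk inward the concentric circles turn, so the baselines are no longer parallel (unlike the halfplane case of Lemma~\lemref{HgOutCone}), and the accumulated turning must be controlled.

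**First I would** reduce to a worst-case configuration. Fix the ``later'' vertex, say $u_j$, and ask: over all hourglass paths from a circle of radius $r_j=|u_j-x|$ outward to $\bC$, which one comes closest to violating $\angle(u_j,u_i,u_{i-1})=\pi/2$ at some intermediate $u_i$? Each edge is constrained to a $90^\circ$ in-cone/out-cone wedge about the local tangent, so the extremal path hugs one side of every wedge — it is the discrete analogue of a constant-angle curve, i.e. a logarithmic spiral, which is precisely why Section~\secref{spirals} singled out the $74.655^\circ$ spiral. So I would argue that the extreme hourglass path, in the limit of small edges, converges to an arc of a logarithmic spiral whose constant angle $\phi$ is determined by the $45^\circ$ hourglass opening together with the rate at which the tangent direction of the concentric circles rotates; and then invoke the numerical fact already recorded in the excerpt that log spirals with $\phi\le 74.655^\circ$ are radially monotone, checking that the $\phi$ arising here stays safely below that threshold.

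**The hard part will be** making the ``extremal path is a spiral'' step rigorous rather than merely suggestive: one must show that pushing every edge to the boundary of its allowed wedge genuinely maximizes the relevant deviation from monotonicity (a variational/exchange argument, perturbing one edge at a time and tracking the monotone effect on $\angle(u_j,u_i,u_{i-1})$), and that the piecewise-linear extremal path is dominated by its smooth spiral limit. A secondary technical point is that the bound must hold for \emph{every} pair $u_i,u_j$ simultaneously, not just adjacent ones; but since the spiral is rm with respect to \emph{all} of its points, once the path is sandwiched inside a $75^\circ$-spiral arc the simultaneous condition comes for free. I expect this is exactly where the author resorts to numerical verification — computing $|\Th_j|$ or the worst angle along sampled extremal hourglass paths for the relevant range of radii — which is consistent with the remark preceding the lemma that the argument ``depend[s] on numerical computations.'' If I wanted a fully analytic proof I would set up the recurrence for the cone half-angle $\alpha_k$ of $Q$ (as in the proof of Lemma~\lemref{ThetaBounds}) with the extra per-step rotation coming from the change of baseline, solve it in the continuum limit to recover the spiral's $\phi$, and confirm $\phi<74.655^\circ$ by an explicit estimate rather than a numeric one.
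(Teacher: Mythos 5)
Your plan---reduce to the extremal case, identify the extremal hourglass path as a limiting logarithmic spiral, and verify that spiral is radially monotone by numerical computation---is the same strategy the paper pursues. Concretely, the paper fixes the innermost vertex $a$ of $Q$ on a circle of radius $1$ about $x$, parameterizes an arbitrary point $b$ of $Q$ by $r=|ab|$ and an angle $\a$, shows radial monotonicity is equivalent to a derived angle $\b(r,\a)\le 45^\circ$, and then numerically checks that the $(r,\a)$ pairs actually achievable by hourglass paths avoid the forbidden $\b>45^\circ$ region; the extreme achievable pairs are traced by the $45^\circ$-spiral in the smooth limit, and the spiral only touches the boundary of the forbidden region at the degenerate point $r=1$, $\a=45^\circ$.

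Two corrections to your sketch. First, the constant angle $\phi$ of the extremal spiral is exactly $45^\circ$, forced by the $90^\circ$ hourglass opening centered on the circle tangent; the ``rate at which the tangent direction of the concentric circles rotates'' changes the scale and pitch along the spiral but has no effect on $\phi$. Second, the paper does not invoke the $74.655^\circ$ self-approaching threshold of Section~\secref{spirals}; it runs the sharper, purpose-built $\b\le45^\circ$ check. Your appeal to ``log spirals with $\phi\le 74.655^\circ$ are rm'' is true but loose overkill, and---more importantly---it only certifies the smooth curve, not the discrete path: a polygonal path whose vertices lie on a radially monotone curve is not automatically radially monotone, so you would still need the paper's chord-level analysis (or an equivalent). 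The hard step you correctly flag---that the piecewise-linear extremal is dominated by its smooth spiral limit and that less extreme hourglass paths are safer---is precisely where the paper's own proof falls back on numerical plots rather than analysis, so you have located the genuine gap in the argument accurately.
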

\begin{proof}
Let $Q$ be an hourglass path, ending at innermost vertex $a$, which lies
on a circle of radius $1$. Let $b$ be an arbitrary point on $Q$.
We need to prove that the angle $ab$ makes with the next edge $bc$ of $Q$
beyond (outward) of $b$ is at most $90^\circ$. We know that edge
lies in the out-cone of $b$, because $Q$ is an hourglass path.
The situation is illustrated in Fig.~\figref{HgRmAngCalc}.
From that figure, we need to show that $\b \le 45^\circ$
for any $r = |ab|$.

Fig.~\figref{HgRmPlot1} plots the angle $\b$ as a function of $r$ and $\a$,
which two parameters (together with the assumption w.l.o.g. that
$|ax|=1$) completely determine $\b$.
It is clear that in fact $\b$ can exceed $45^\circ$.
However, the region of $(r,\a)$ values in which $\b > 45^\circ$
cannot occur with an hourglass curve.

The extreme values of $\a$ occur when $Q$ is a $45^\circ$-spiral,
that is, when the path turns as much as possible clockwise within the
confines of the hourglass constraints at each vertex. The maximum
turn occurs in the smooth case, when the spacing between the vertices along $Q$
approaches zero.
It is easy to compute the $(r,\a)$ values achieved by this $45^\circ$-spiral.
Fig.~\figref{HgRmPlot2} shows that only at $r=1$ and $\a=45^\circ$
does the spiral values touch the boundary of the forbidden $\b$ region.
Other less extreme $Q$ avoid the forbidden region entirely.
The figure also shows that other hourglass paths are less extreme
than the $45^\circ$-spiral.
\end{proof}
\begin{figure}[htbp]
\centering
\includegraphics[width=0.75\linewidth]{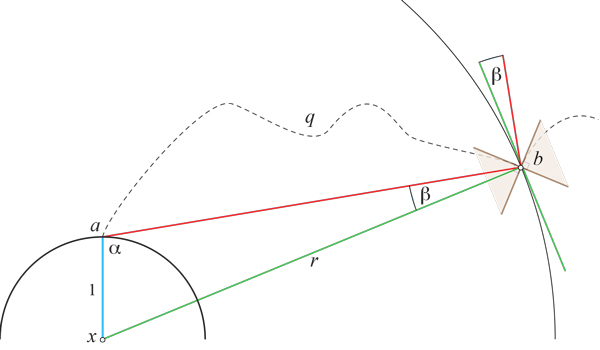}
\caption{Lemma~\protect\lemref{hgrm} is proved if $\b \le 45^\circ$.}
\figlab{HgRmAngCalc}
\end{figure}

\begin{figure}[htbp]
\centering
\includegraphics[width=0.75\linewidth]{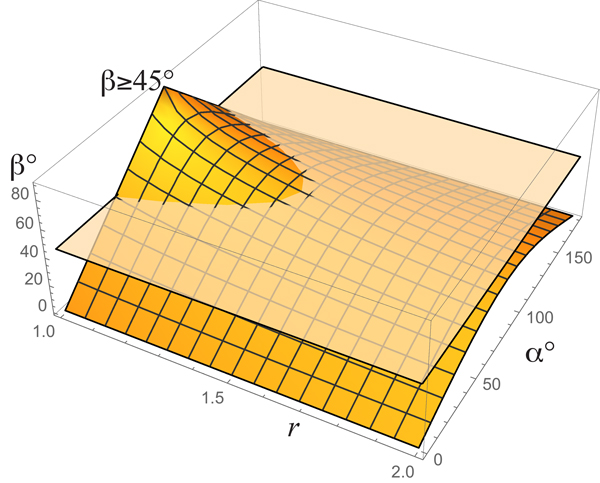}
\caption{The $\b \ge 45^\circ$ region for all $(r,\a)$ combinations.}
\figlab{HgRmPlot1}
\end{figure}

\begin{figure}[htbp]
\centering
\includegraphics[width=0.75\linewidth]{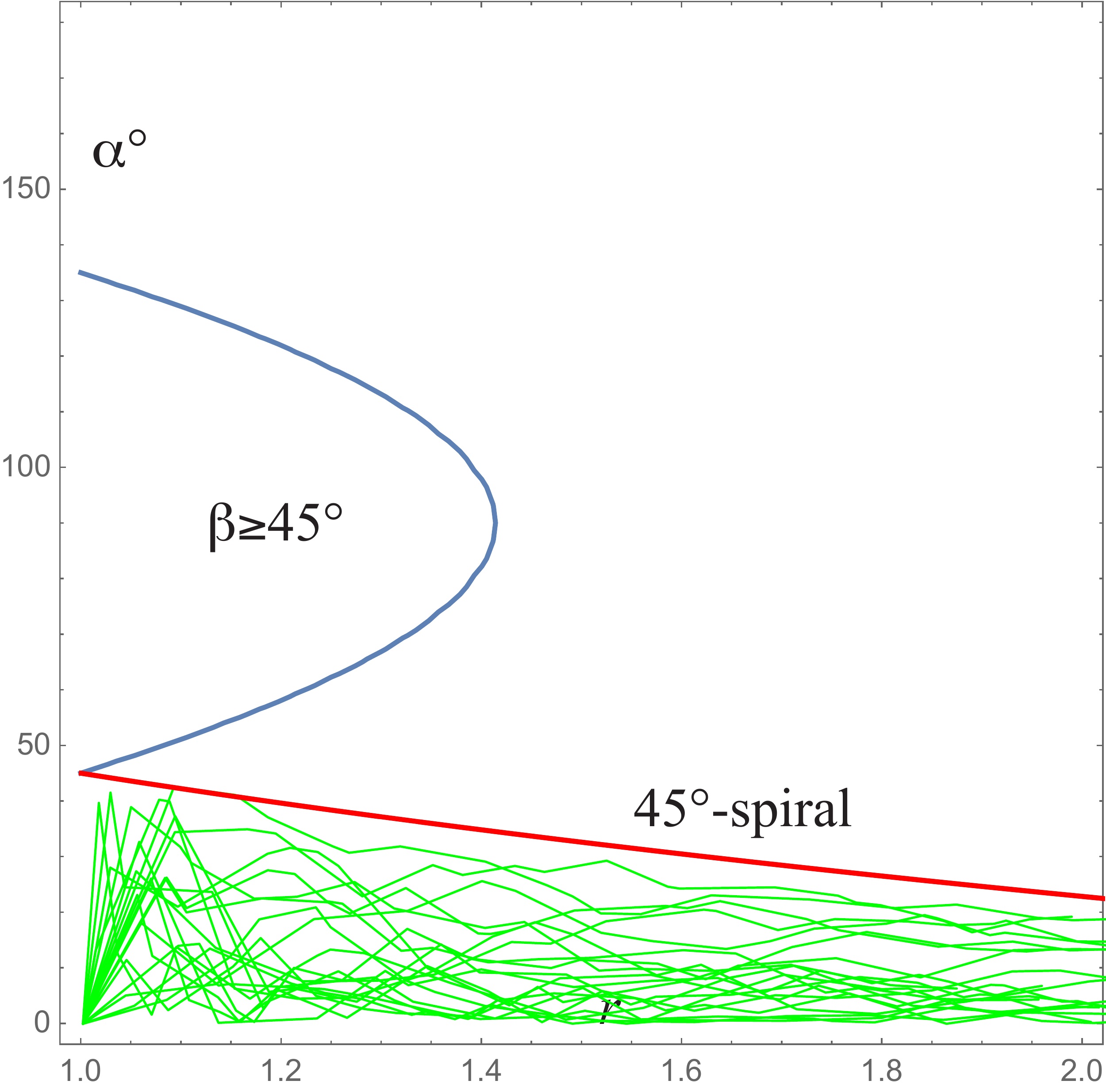}
\caption{Hourglass paths never enter the $\b \ge 45^\circ$ region.
The green curves show $(r,\a)$ pairs for random hourglass paths.}
\figlab{HgRmPlot2}
\end{figure}

Finally we show that the in-cone of the last vertex of an hourglass curve
is not ``clipped," which means it can be extended.
\begin{lemma}
Let $Q$ be an hourglass curve, ending at innermost vertex $a$.
Then the region $R(Q)$ includes the in-cone of the hourglass at $a$
as far as the center $x$.
\lemlab{NoClipping}
\end{lemma}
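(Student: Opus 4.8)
The plan is to make $R(Q)$ fully explicit, reduce the desired containment to three extreme points, and then dispose of those points according to how far the corresponding constraint line lies from the center $x$.

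First I would record, from the discussion in Section~\secref{BackwardGrowth}, that if $Q=(u_0,\dots,u_k)$ with $a=u_k$ and we normalize so that $|ax|=1$, then $R(Q)=\bigcap_{m=1}^{k}H_m$, where $H_m=\{\,q:\langle q-u_m,\;u_{m-1}-u_m\rangle\le0\,\}$ is the halfplane bounded by the line through $u_m$ orthogonal to the edge $u_{m-1}u_m$, lying on the side away from $u_{m-1}$. I read ``the in-cone at $a$ as far as $x$'' as the triangle $T(a)$ with apex $a$ whose opposite side is the chord through $x$ perpendicular to $ax$: the in-cone has half-angle $45^\circ$, so this chord (of half-length $|ax|=1$) is exactly where the truncation at $x$ cuts the cone. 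Writing $x\pm w$ for the other two vertices of $T(a)$, with $w$ a unit vector orthogonal to $a-x$, and using that $R(Q)$ is convex, it suffices to show $a\in H_m$ and $x\pm w\in H_m$ for every $m$. The vertex $a$ is immediate: Lemma~\lemref{hgrm} (radial monotonicity of $Q$ with respect to its first vertex $v_0=u_k=a$) says exactly that $\angle(a,u_m,u_{m-1})\ge90^\circ$, i.e.\ $\langle a-u_m,u_{m-1}-u_m\rangle\le0$, so $a\in H_m$ for $m<k$, while $a=u_k\in\partial H_k$.

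For the corners $x\pm w$ I would first observe: \textbf{(i)}~$T(a)$ lies in the closed disk $D$ of radius $1$ about $x$ --- a point $a+t(x-a)+sw\in T(a)$ has $0\le t\le1$ and $|s|\le t$, hence squared distance $(1-t)^2+s^2\le 1-2t+2t^2\le1$ from $x$; and \textbf{(ii)}~since $Q$ is an hourglass path, the edge $u_{m-1}u_m$ lies in the out-cone at $u_m$, so $u_{m-1}-u_m$ makes an angle at most $45^\circ$ with the outward radial $u_m-x$, whence $x$ lies in the interior of $H_m$ and $\operatorname{dist}(x,\partial H_m)\ge|u_mx|/\sqrt2$. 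These settle two cases. If $|u_mx|\ge\sqrt2$ then $D\subseteq H_m$, so $x\pm w\in H_m$. And for $m=k$: since $u_k-u_{k-1}$ is within $45^\circ$ of $x-a$, every ray from $a$ in an in-cone direction of $a$ lies in $H_k$; as $a\in\partial H_k$, this gives $T(a)\subseteq H_k$. So only the regime $1<|u_mx|<\sqrt2$ (necessarily $m<k$, as $a$ is innermost) remains, and this is where I expect the real difficulty: there $D$ no longer fits inside $H_m$ and the slack afforded by Lemma~\lemref{hgrm} at $a$ alone does not suffice.

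For that regime I would follow the template of Lemma~\lemref{hgrm}: argue that for fixed $a$ and $u_m$ the worst edge direction at $u_m$ is the extreme one, and that the extremal hourglass path is the $45^\circ$-spiral, reducing the claim to the case where the tail $u_m,\dots,u_k$ discretizes a $45^\circ$-spiral through $a$. With $x$ at the origin and $a=(1,0)$ this spiral is $P(\theta)=e^{\theta}(\cos\theta,\sin\theta)$, $\theta\ge0$, traversed with $\theta$ decreasing to $0$, and in the smooth limit $H_m=\{\,q:\langle q-P(\theta),P'(\theta)\rangle\le0\,\}$; one computes
\[
\langle(1,0)-P(\theta),\,P'(\theta)\rangle=e^{\theta}\bigl((\cos\theta-\sin\theta)-e^{\theta}\bigr),\qquad
\langle(0,\pm1)-P(\theta),\,P'(\theta)\rangle=e^{\theta}\bigl(\pm(\sin\theta+\cos\theta)-e^{\theta}\bigr),
\]
each of which is $\le0$ for all $\theta\ge0$ --- the only nontrivial inequality being $\sin\theta+\cos\theta\le e^{\theta}$, which holds because $f(\theta)=e^{\theta}-\sin\theta-\cos\theta$ satisfies $f(0)=f'(0)=0$ and $f''(\theta)=e^{\theta}+\sin\theta+\cos\theta>0$ on $[0,\infty)$. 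Hence all three vertices of $T(a)$ lie in each $H_m$ and $T(a)\subseteq R(Q)$. The single non-rigorous point, exactly as with Lemma~\lemref{hgrm}, is the extremality of the $45^\circ$-spiral; rather than leaning on numerics in the style of Fig.~\figref{HgRmPlot2}, I would try to prove it by pushing every vertex of the tail to the boundary of its hourglass and letting the inter-vertex spacing tend to $0$, and then checking that this only rotates $\partial H_m$ toward --- never across --- the far corner $x+w$ of $T(a)$.
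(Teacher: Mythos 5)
Your proposal follows the same overall route as the paper: reduce to the extremal $45^\circ$-spiral and then verify that the halfplane generated at each point of the spiral leaves the truncated in-cone $T(a)$ intact. But where the paper's proof is explicitly numerical --- it checks the condition $\cos\beta\ge\sqrt2/2$ by plotting (Fig.~\figref{HgRmAngPlot}) and concedes the argument ``depends on numerical computations'' --- you replace the plot by a genuine analytic argument. Your contributions beyond the paper are: making $R(Q)=\bigcap_m H_m$ explicit and reducing by convexity to the three corners of $T(a)$; observing that the corner $a$ is handled outright by Lemma~\lemref{hgrm}, that $m=k$ and the far constraints $|u_m x|\ge\sqrt2$ are elementary (nice use of $T(a)\subset D$ and $\operatorname{dist}(x,\partial H_m)\ge|u_mx|/\sqrt2$); and, most usefully, computing in closed form the three inner products along the spiral $P(\theta)=e^{\theta}(\cos\theta,\sin\theta)$ and showing the only nontrivial sign condition is $\sin\theta+\cos\theta\le e^{\theta}$, which you prove via $f(0)=f'(0)=0$, $f''>0$. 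This is a cleaner and stronger verification than the paper's, and it also explains why the paper only worries about a single corner $c$ (only one of $x\pm w$ is nontrivial). You correctly flag that the remaining non-rigorous step --- extremality of the $45^\circ$-spiral among hourglass paths --- is exactly the same gap the paper leaves open in both Lemmas~\lemref{hgrm} and~\lemref{NoClipping}, so the proposal matches the paper's level of rigor while improving the analytic part.
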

\begin{proof}
We assume the same setup as in the previous lemma:
$|ax|=1$.
Again the issue is decided by the extreme curve, a $45^\circ$-spiral.
Fig.~\figref{ConCircCalc} illustrates the challenge: each edge along the spiral
generates a halfplane constraint that contributes to $R(Q)$.
The triangle subset of the in-cone of $a$ must not be clipped by any of
these halfplanes--i.e., the halfplanes must include corner $c$ of the triangle--or future inward growth of the hourglass path could
be compromised.

For a point $b$ at polar coordinates $(r,\q)$, $r=s e^{\pi/2-\q}$, we need the angle 
$\b = \angle xbc \le 45^\circ$.
(Note: this $\b$ is not the same as $\b$ in the previous lemma.)
The angle $\b$ ``consumes" a portion of the $45^\circ$ between $r$ and the halfplane.
Thus only if $\b > 45^\circ$ does the halfplane clip $c$.
Fig.~\figref{HgRmAngPlot} shows that $\cos \b \ge \sqrt{2}/2$, only
equaling that value when $\q=90^\circ$, when $a=b$.
This makes sense, as the first tangent to the spiral curve at point $a$
is at $45^\circ$ w.r.t. the horizontal, and then the halfplane passes through,
but does not clip off, corner $c$.
\end{proof}

\noindent
That the same degeneracy $a=b$ occurs in both
of the previous two lemmas suggests there may be a uniting viewpoint.

\begin{figure}[htbp]
\centering
\includegraphics[width=1.0\linewidth]{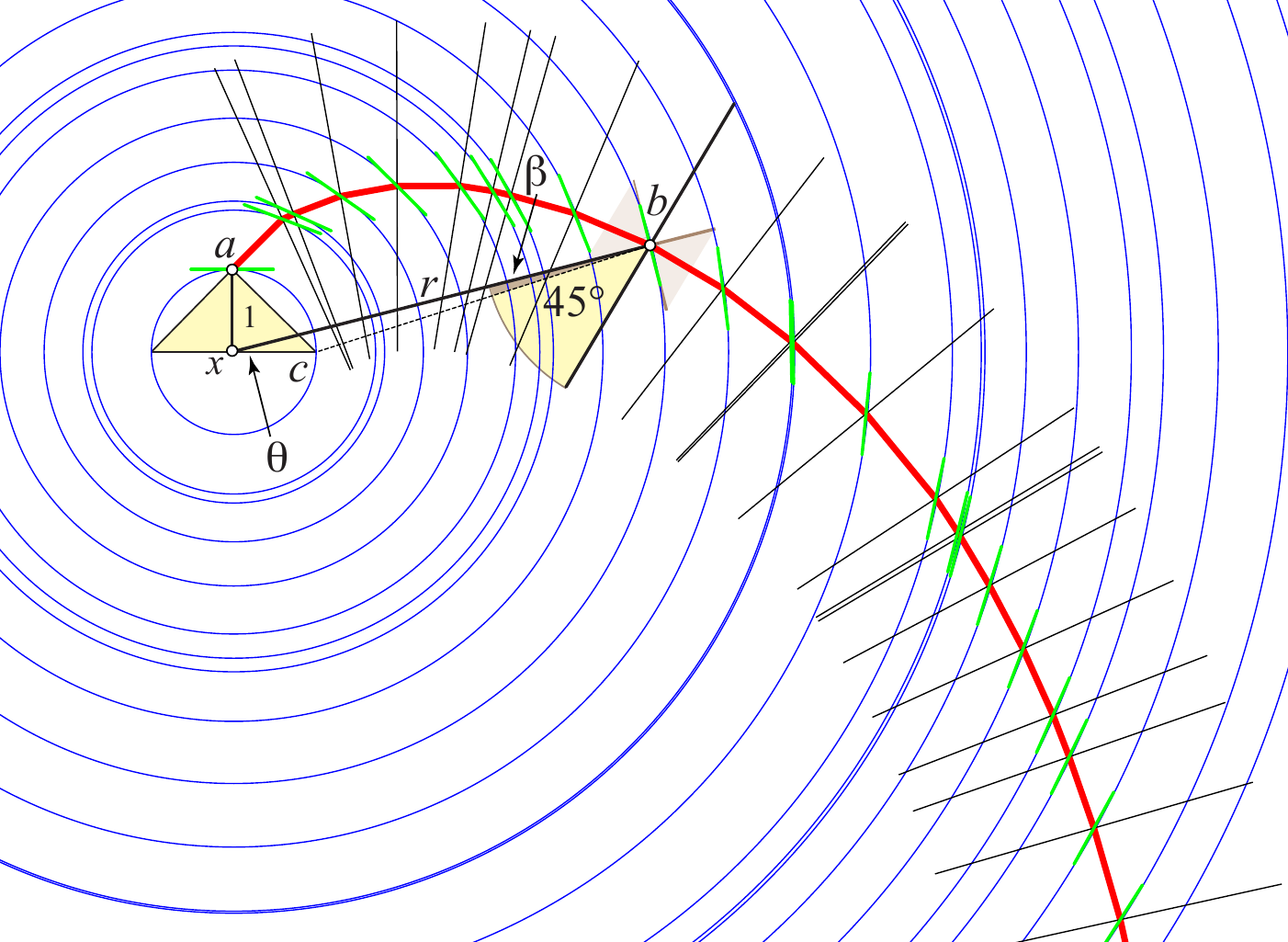}
\caption{Circle tangents are green. Orthogonal halfplane boundaries black.
The halfplanes do not clip corner $c$.}
\figlab{ConCircCalc}
\end{figure}


\begin{figure}[htbp]
\centering
\includegraphics[width=0.75\linewidth]{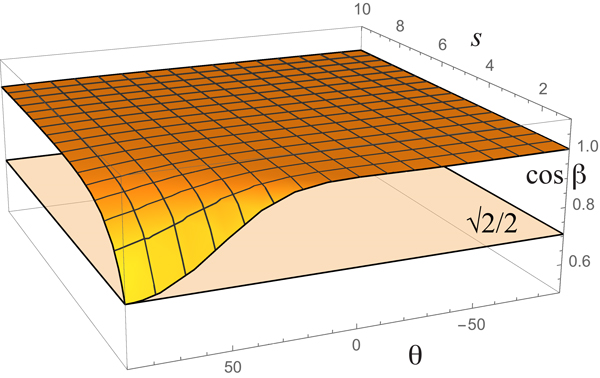}
\caption{When $\cos \b > \sqrt{2}/2$, $\b < 45^\circ$.}
\figlab{HgRmAngPlot}
\end{figure}

Now we have proven Theorem~\thmref{NonObConcirc} which
proves that Algorithm~1 works on round, non-obtusely triangulated
convex domains: the algorithm always finds a spanning forest
composed of hourglass paths, which are radially monotone.
The restriction to round convex domains was imposed to enable
the first edge of a path starting from a vertex on $\bC$ to fall within the hourglass in-cone
there. It is likely that this roundness assumption is not necessary.
Moreover, I believe even the convex assumption is not necessary:
likely only star-shapedness from $x$ is needed.

The insistence on a non-obtuse triangulation, however, is crucial,
as the example in Appendix~1 demonstrates.

One might wonder whether the $45^\circ$-spiral, which plays
a prominent role in the analysis, can actually occur
in a non-obtuse triangulation. The answer is {\sc{yes}}:
see Fig.~\figref{OctSpiral10}.
\begin{figure}[htbp]
\centering
\includegraphics[width=0.75\linewidth]{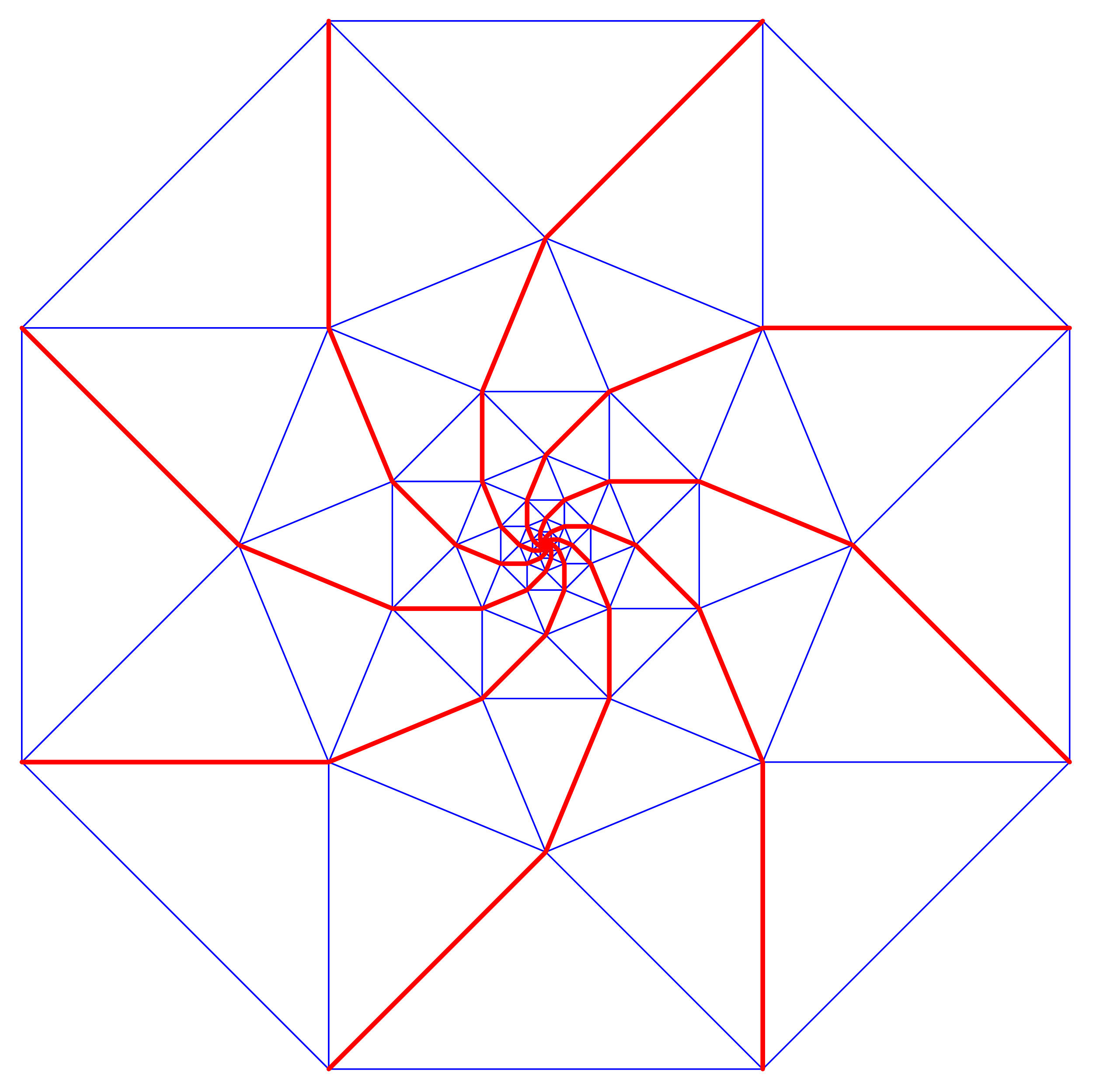}
\caption{Radially monotone $45^\circ$-spiral spanning forest in
a non-obtuse triangulation.}
\figlab{OctSpiral10}
\end{figure}

\subsection{More Examples}
Several more examples are shown in Fig.~\figref{NonOb4Ex}.
\begin{figure}[htbp]
\centering
\includegraphics[width=1.0\linewidth]{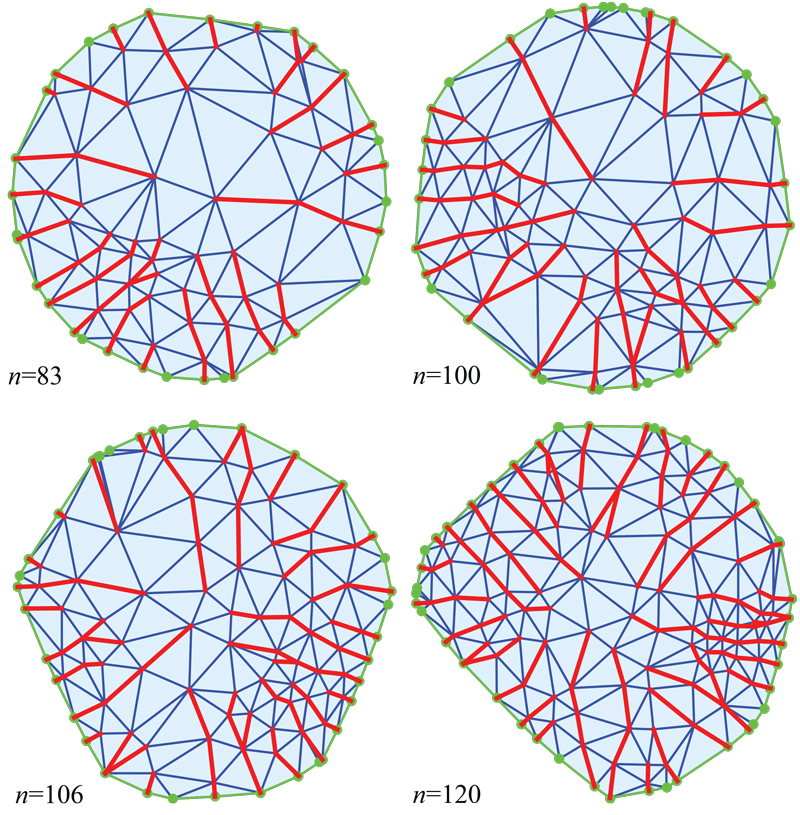}
\caption{Radially monotone spanning forests found by Algorithm~1.}
\figlab{NonOb4Ex}
\end{figure}

\section{Radial Monotone Paths in 3D}
\seclab{rm3D}
Although the whole point of radial monotone paths is to cut open
polyhedra via such paths, I have yet to define what the notion
means on a polyhedron in $\mathbb{R}^3$.
There is more than one way to generalize the notion,
but I have focused on one generalization that I find natural.
We assume henceforth that $C$ is a convex cap with boundary $\bC$ a topological
circle, i.e., $C$ is a simply connected subset of the faces of a convex polyhedron.
The earlier Fig.~\figref{CutsLay_s2_pi3_n500} shows such a convex cap.
All of our empirical explorations also assume that the polyhedron from which
$C$ is derived is spherical: all vertices on a sphere $\S$.

Under these circumstances, we define a path $Q$ of edges of $C$,
$Q=(v_0,\ldots,v_k)$ with $v_k \in \bC$ to be radially monotone
if the \emph{medial path} $M(Q)$ is radially monotone.
The next section explains and justifies this definition.

\subsection{$LMR$-Chains}
\seclab{LMR-Chains}
Let $\o_i$ be the curvature at $v_i$.
As usual, we view $v_0$ as a leaf of a cut forest, which will then serve as
the end of a cut path, and the ``source" of opening that path.

Let $\l_i$ be the angle at $v_i$ left of $Q$, and $\r_i$ the angle right of $Q$ there.
So $\l_i + \o_i + \r_i = 2 \pi$.
Define $L$ to be the planar path from the origin with left angles $\l_i$,
$R$ the path with right angles $\r_i$, and
$M$ the \emph{medial path} with left angles $\l_i - \o_i/2$
(and therefore right angles  $\r_i + \o_i/2$).
(Each of these paths are understood to depend on $Q$: $L=L(Q)$ etc.)
We label the vertices of the paths $\ell_i, m_i, r_i$, 
with $m_0 m_1$ on the $x$-axis.
See Fig.~\figref{LMREx1}.
\begin{figure}[htbp]
\centering
\includegraphics[width=0.75\linewidth]{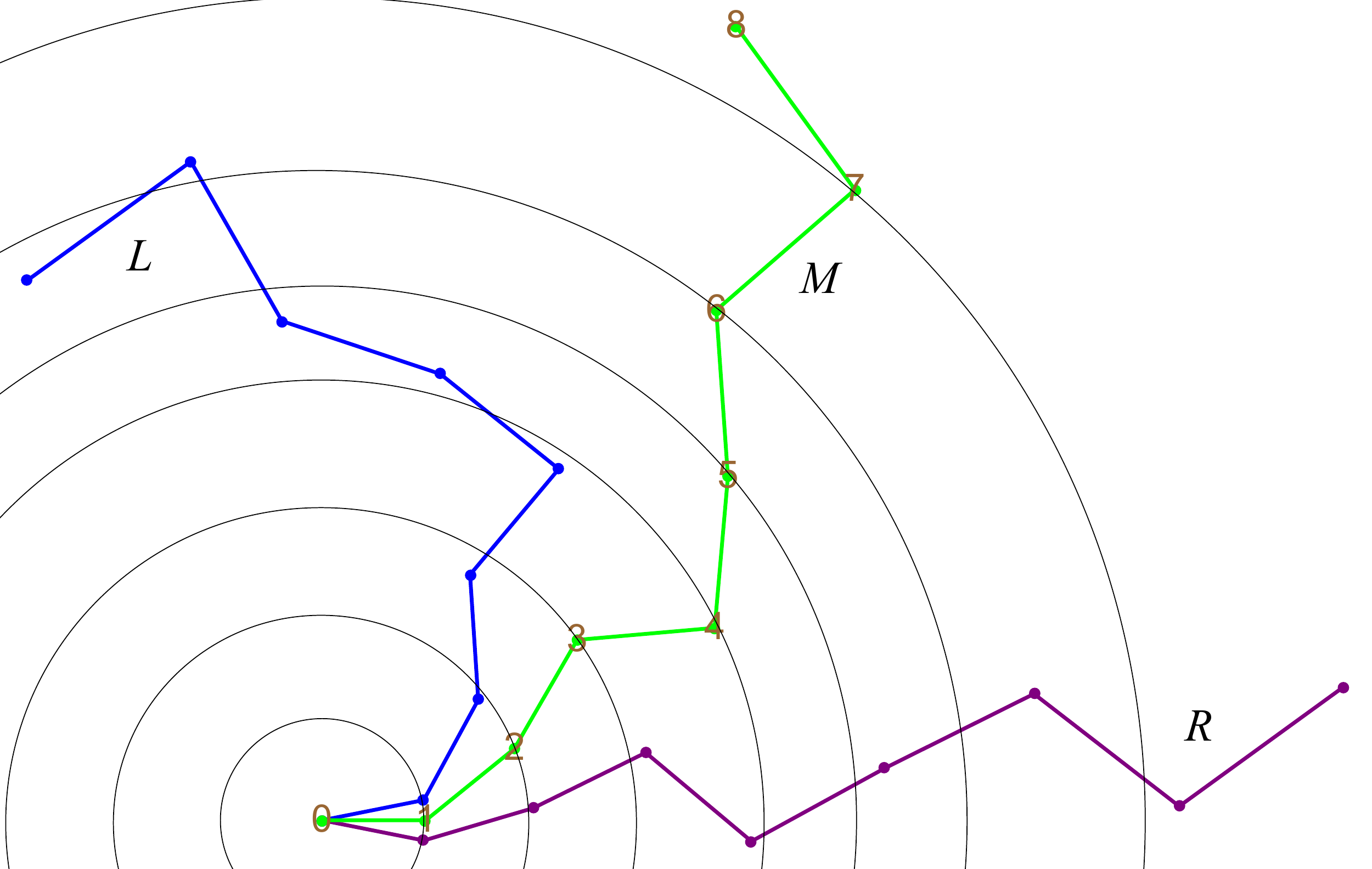}
\caption{The three paths $L,M,R$. Here $M$ is radially monotone but $L$ is not.}
\figlab{LMREx1}
\end{figure}

The intuition behind the focus on $M(Q)$ is two-fold:
(1)~It is important to allow either or both of $L$ and $R$ to be non-rm;
(2)~One can prove that $L \cap R = \{v_0\}$ for much more than infinitesimal
openings caused by the curvature encountered along $Q$.

Let $\O=\sum_i \o_i$ along $Q$: The total curvature of all the vertices on the cut path.
Let $\t(M)$ be the maximum $\pm$ turn of the edge vectors in $M$: 
the largest absolute value the vector
$m_i-m_{i-1}$ makes with the $x$-axis.

Our goal is to prove this theorem:
\begin{theorem}
If path $M$ is radially monotone, and in addition,
$\t(M) \le \pi/2$ and $\O \le \pi$, then 
none of the three paths $L$, $M$, and $R$ cross one another.
\thmlab{LMR-nonint}
\end{theorem}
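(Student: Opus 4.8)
The plan is to prove the three pairwise non-crossing assertions, handling $M$ versus $L$ and $M$ versus $R$ first (they are mirror images of each other) and then deducing $L$ versus $R$ from a separation argument. The device throughout is to realise $L$ and $R$ as the outcome of a sequence of \emph{wedge openings} performed on $M$: running over the vertices of the path, rigidly rotate the forward sub-path $(m_i,m_{i+1},\ldots,m_k)$ about $m_i$ by $+\o_i/2$ to build $L$ and by $-\o_i/2$ to build $R$, the signs arranged so that each medial angle $\l_i-\o_i/2$ turns back into the left angle $\l_i$, respectively the right angle $\r_i$. Thus $L$ and $R$ differ from $M$ only through accumulated rotations whose total turning is at most $\tfrac12\sum_i\o_i=\O/2\le\pi/2$.

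I would prove $M\cap L=\{v_0\}$ using the concentric-circles characterisation of radial monotonicity from Section~\secref{PropsRM}. Because $M$ is radially monotone with respect to each of its vertices $m_i$, the forward sub-path $(m_i,\ldots,m_k)$ meets every circle centred at $m_i$ at most once, so an arbitrarily small counterclockwise rotation of that sub-path about $m_i$ moves it strictly to one angular side of its former position without re-meeting it; this is exactly the paper's boxed principle that radial monotonicity implies a safe infinitesimal opening, localised at $m_i$. Performing these openings one vertex at a time and propagating the invariant ``the portion opened so far lies weakly on the counterclockwise side of the portion not yet opened,'' with the running rotation never reaching $\pi$ because $\O\le\pi$, yields that $L$ lies weakly on the counterclockwise side of $M$ and meets it only at $v_0$. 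The mirror argument with clockwise rotations gives $R\cap M=\{v_0\}$ with $R$ weakly on the clockwise side.

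For $L$ versus $R$ I would exploit that a radially monotone path acts like a separating arc. By radial monotonicity of $M$ with respect to $v_0$, the points of $M$ within any given distance of $v_0$ form an initial sub-arc running from $v_0$ out to the bounding circle, so inside every disk centred at $v_0$ the curve $M$ is a simple arc separating the disk into a ``counterclockwise part'' and a ``clockwise part.'' The previous step places $L$ in the closure of the counterclockwise part and $R$ in the closure of the clockwise part; here is where the turn bound $\t(M)\le\pi/2$ and the curvature bound $\O\le\pi$ are needed, to prevent either opened flap from winding around the far side of $M$ and leaking into the wrong part. It then follows that $L\cap R\subseteq M$, and since $L\cap M=R\cap M=\{v_0\}$ we conclude $L\cap R=\{v_0\}$, which completes the proof.

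The step I expect to be the main obstacle is making the two preceding paragraphs rigorous in tandem: showing that the accumulated wedge openings genuinely keep $L$ on one side of $M$ and $R$ on the other, with neither flap wrapping past $v_0$ or past the advancing tip of $M$. As so often in this paper the hypotheses are tight only at an extremal, tightly wound configuration, and it is there that $\t(M)\le\pi/2$ and $\O\le\pi$ are exactly used; turning the informal ``angular budget'' bookkeeping into a genuine non-crossing argument, rather than merely checking that a crossing is not forced by angle counting, is the crux, and I would expect it to require a monotonicity quantity tracked along $M$ in the spirit of Lemmas~\lemref{hgrm} and~\lemref{NoClipping}.
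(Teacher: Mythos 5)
Your $M$-versus-$L$ step (and its mirror, $M$ versus $R$) matches the paper's Lemma~\lemref{LM-noint} in spirit: the paper also realises $L$ as $M$ subjected to an accumulation of rigid wedge rotations, iterating by prepending a vertex $v'$ to the front of $M$, rotating the previously built $L$ rigidly counterclockwise by the new curvature contribution, and reading off from the concentric circles centred at $v'$ that the rotated copy can only move farther from $M$. The packaging differs---the paper iterates from the innermost end outward while you iterate from $v_0$ along the path---but this is the same engine, and both presentations leave the same amount of rigour to the reader.

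The $L$-versus-$R$ step is where you genuinely diverge, and the gap you flag in your own write-up is real and is not closed by the rest of your plan. You treat $M$ as a separating arc, place $L$ weakly on its counterclockwise side and $R$ weakly on its clockwise side, and hope the hypotheses stop a flap from ``leaking'' past the far end of $M$. But $M$ is a bounded arc, not a line, so weak sidedness of $L$ and $R$ relative to $M$ does not on its own preclude the two flaps from meeting in the region beyond $M$'s tip; ruling this out requires exactly the quantitative angular bookkeeping you defer, and nothing in your $M$-versus-$L$ argument supplies it. The paper's Lemma~\lemref{LR-noint} avoids the separator picture entirely. It normalises $m_0 m_1$ along the $+x$-axis, uses $\t(M)\le\pi/2$ to confine every edge vector of $M$ to the $+x$ halfplane, and then uses $\O\le\pi$ to conclude that the counterclockwise (respectively clockwise) corrections turning $M$'s edge directions into $L$'s (respectively $R$'s) keep $L$'s edge vectors pointing into the upper halfplane and $R$'s into the lower. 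Hence $L\subset\{y\ge 0\}$ and $R\subset\{y\le 0\}$, and the two paths can meet only at the common origin $v_0$. That halfplane decomposition is simultaneously the place where the two hypotheses actually do their work and the missing lemma in your route; substituting it for your separator step is what would make your proof close.
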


\noindent
The reason for the angle restrictions is that, without some restrictions, the claim could be false.
Fig.~\figref{LMRSpiralCross}
shows a $70^\circ$-spiral $M$ with $\t(M) > (3/2)\pi$ and
$\O=2 \pi$, where $L$ crosses $R$.
\begin{figure}[htbp]
\centering
\includegraphics[width=0.5\linewidth]{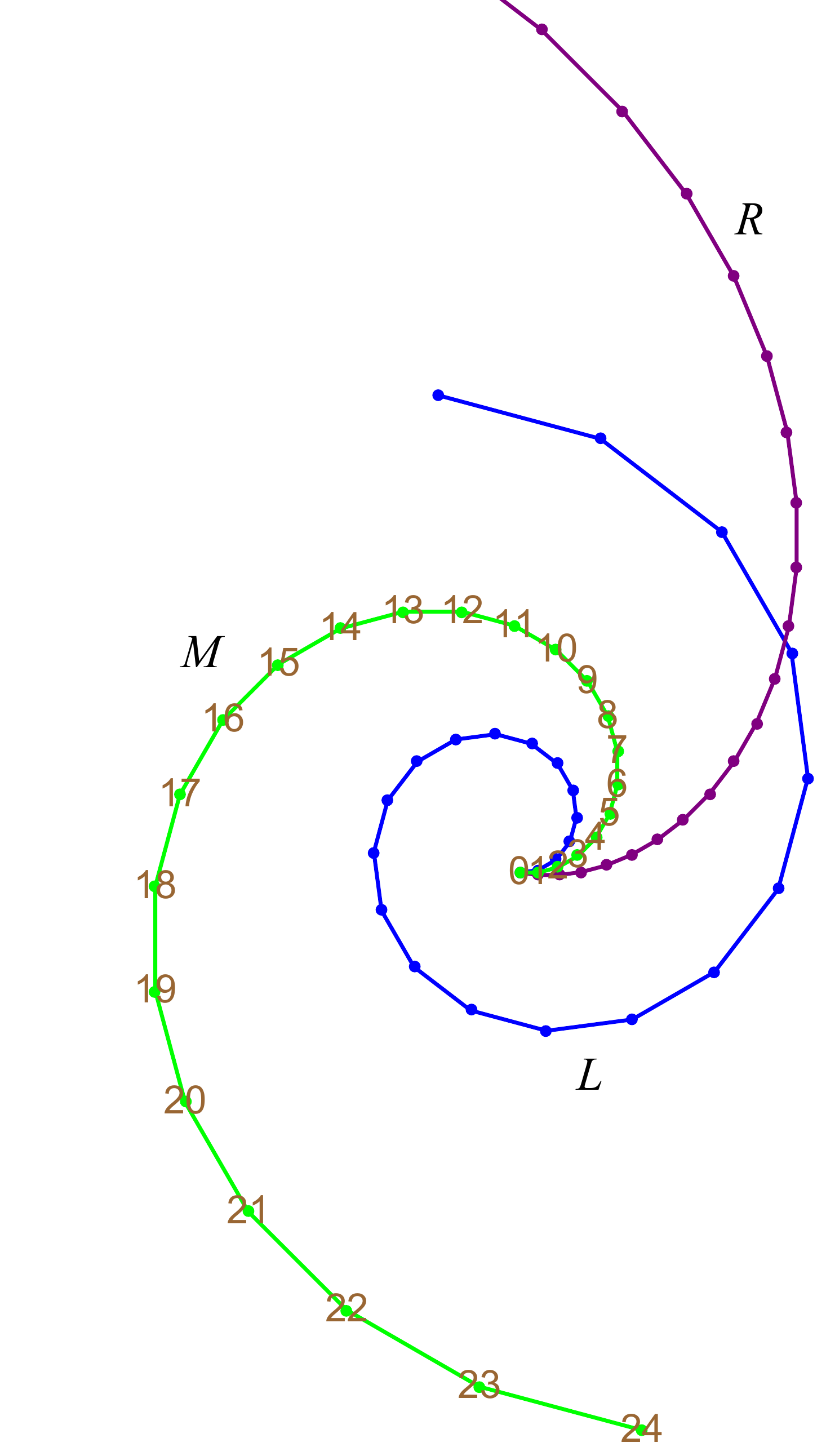}
\caption{A $70^\circ$-spiral $M$, with $\O=2 \pi$.
$L$ crosses $R$ right-to-left.}
\figlab{LMRSpiralCross}
\end{figure}
Restricting $\t(M)< \pi$ and $\O \le \pi$ avoids crossing;
see Fig.~\figref{LMRSpiralPi}.
\begin{figure}[htbp]
\centering
\includegraphics[width=0.75\linewidth]{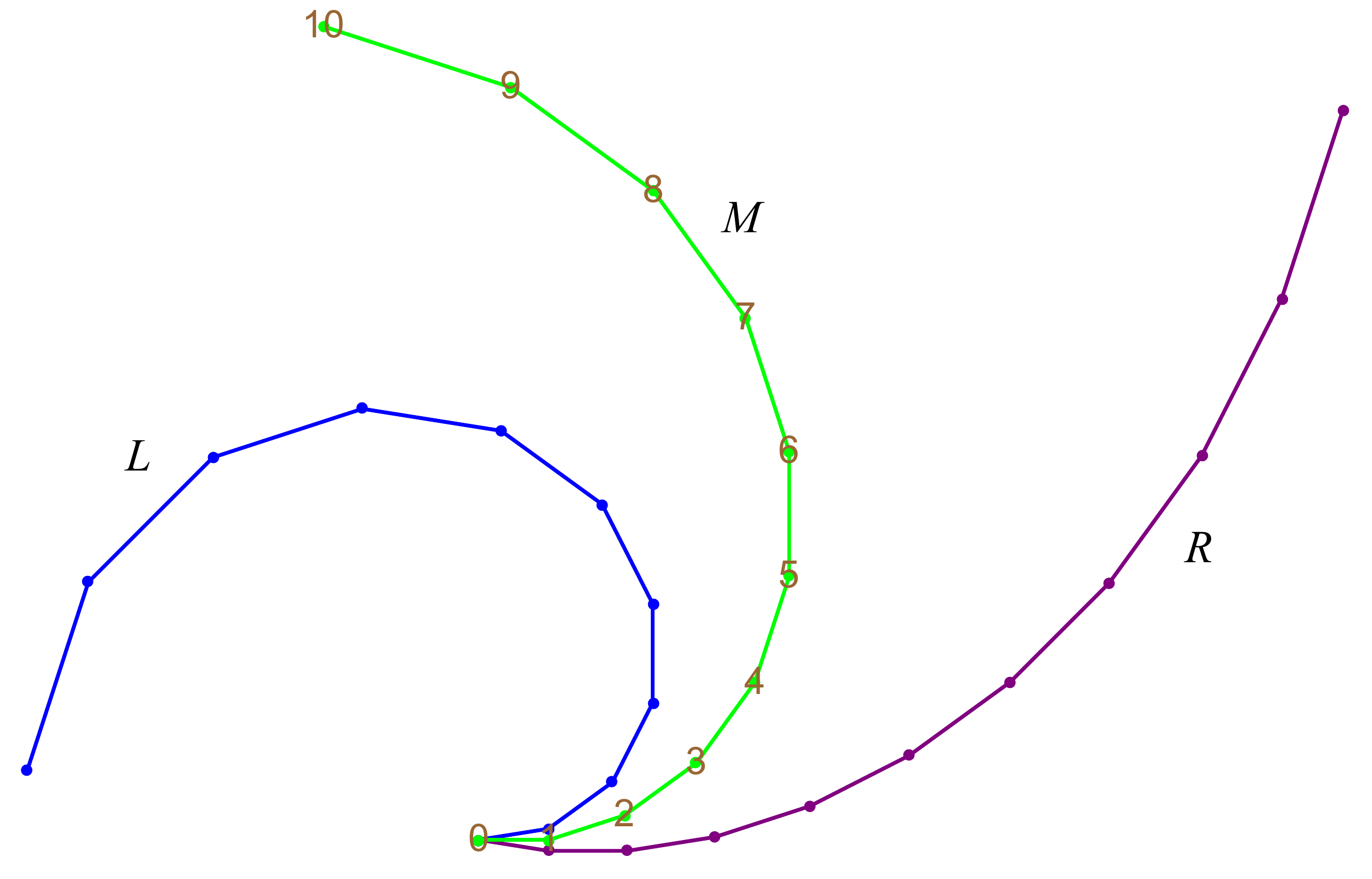}
\caption{ $\O = \pi$ so the last tangents to $L$ and $R$ are $180^\circ$ apart.
Here $\t(M)< \pi$.}
\figlab{LMRSpiralPi}
\end{figure}
I do not believe the angle restrictions are close to tight.
For example, the theorem seems to hold for 
$\t(M)< 2\pi$ and $\O \le \pi$, and for $\t(M)< \pi$ and $\O \le 2\pi$.
The severe angle restrictions are imposed to obtain a clean proof.
And they seem to suffice for non-overlap of random spherical polyhedra.

We prove Theorem~\thmref{LMR-nonint} via two lemmas.
\begin{lemma}
Under the conditions stated,
$L$ cannot cross $M$ left-to-right.
\lemlab{LM-noint}
\end{lemma}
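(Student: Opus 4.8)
The plan is to set up a comparison between the paths $L$ and $M$ edge by edge, using the fact that $L$ is obtained from $M$ by adding the half-curvatures $\o_i/2$ at each successive vertex. First I would introduce, for each index $j$, the partial sums $\Omega_j = \sum_{i \le j} \o_i$ and note that the turn angle of $L$ at $m_i$ (resp.\ $\ell_i$) exceeds that of $M$ at the corresponding vertex by exactly $\o_i/2$, so that the edge vector $\ell_j - \ell_{j-1}$ is the edge vector $m_j - m_{j-1}$ rotated counterclockwise (into the left halfspace) by $\Omega_{j-1}/2$, up to the common length $|v_{j-1} v_j|$. Thus $L$ lies ``ahead in angle'' of $M$ at every joint, with the angular lead growing monotonically from $0$ up to at most $\Omega/2 \le \pi/2$.

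The key step is then to argue that this bounded, monotone angular lead, together with $\t(M) \le \pi/2$, forces $L$ to stay weakly on the left side of $M$ and never cross it left-to-right. Concretely, I would parametrize both paths by arc length measured from the common origin $\ell_0 = m_0 = v_0$, and track the signed angle $\phi(s)$ that the tangent of $L$ makes relative to the tangent of $M$ at equal arc length; we have $\phi \in [0, \Omega/2] \subseteq [0,\pi/2]$ and $\phi$ nondecreasing. Since $\t(M) \le \pi/2$ means $M$'s tangent stays within $\pm\pi/2$ of the $x$-axis, the tangent of $L$ stays within $[-\pi/2, \pi]$ of the $x$-axis; in particular $L$'s tangent never points ``backward'' relative to the direction separating the two sides, so the lateral displacement of $L$ from $M$ (integrating the sine of the accumulated angular difference) is a nondecreasing quantity that starts at $0$. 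A left-to-right crossing would require this lateral displacement to become negative, contradicting monotonicity.

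I expect the main obstacle to be making the ``lateral displacement is monotone'' argument fully rigorous, because $L$ and $M$ do not naturally share a coordinate frame away from the origin: the honest statement is about the position of $\ell_j$ relative to the directed polygonal path $M$, and one must rule out $L$ looping around and re-approaching $M$ from the far side. The clean way to handle this is to use the radial monotonicity of $M$ (which is a hypothesis): because $M$ crosses every circle centered at $m_0$ at most once and $\t(M) \le \pi/2$, the path $M$ together with the constraint $\Omega/2 \le \pi/2$ confines $L$ to a wedge-like region on the left of $M$, and within that region ``left-to-right crossing'' is precisely excluded. I would also need to check the degenerate boundary case $\Omega = \pi$ (tangents of $L$ and $R$ antipodal) separately, showing $L$ can at worst touch $M$ but not properly cross it — mirroring the $a=b$ degeneracies seen in Lemmas~\lemref{hgrm} and~\lemref{NoClipping}. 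The symmetric statement that $R$ cannot cross $M$ right-to-left will follow by the left-right mirror of this argument, and the two together with a short convexity observation will give Theorem~\thmref{LMR-nonint}.
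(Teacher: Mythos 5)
Your proposal and the paper's proof share the same underlying observation---that $L$ is obtained from $M$ by accumulating counterclockwise half-curvature rotations at each joint, so $L$ stays angularly ``ahead of'' $M$---but you package it as a forward, edge-by-edge comparison while the paper uses a short backward induction. The paper's proof proceeds by extending the path at the source end: given $L$ and $M$ for $(v_0,\ldots,v_k)$, one prepends a vertex $v'$ with curvature $\o'$, and then the entire old $L$ is \emph{rigidly} rotated counterclockwise relative to the new configuration, so that $L$ moves strictly further from $M$ when measured along concentric circles centered at $v'$. The rigid rotation is exactly what makes this clean: the geometric relationship established by the induction hypothesis is carried along unchanged, and radial monotonicity of $M$ (circles centered at the current source) gives the non-crossing conclusion at every step. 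Your forward approach, by contrast, fixes $m_0$ once and then has to track the divergence of $L$ from $M$ over the whole path, which is why you correctly anticipate the ``looping around and re-approaching from the far side'' difficulty and why the ``lateral displacement is nondecreasing'' step is the weak link.

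That lateral-displacement step is a genuine gap as written: parametrizing both paths by arc length and integrating $\sin\phi(s)$ does not directly measure the signed distance from $\ell(s)$ to the \emph{polygonal set} $M$, which is what ``crossing'' actually concerns; $L$ and $M$ do not share a moving frame, so the claimed monotone quantity is not well-defined without further work. You already flag this and propose to fall back on the radial monotonicity of $M$ and a wedge-confinement argument, which is essentially reinventing the paper's concentric-circle picture---but localized at $m_0$ rather than re-centered at each new source $v'$, which is what makes the paper's inductive version go through so much more easily. If you reorganize your argument as a backward induction that prepends one vertex at a time and exploits the rigid rotation, the looping issue and the need for a well-defined lateral displacement both disappear, and the proof collapses to the paper's two sentences. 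The symmetric $R$-versus-$M$ statement and the $L$-versus-$R$ separation are then handled, as you anticipate, by a mirror argument and by Lemma~\lemref{LR-noint}'s halfplane observation.
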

\begin{proof}
(Crossing right-to-left will be excluded by the next lemma.)
The proof is by induction.
It clearly holds for $k=1$, when $M$ is just a single edge,
and $L$ is rotated $\o_0$ ccw of $M$.

Suppose the claim is true for $k$.
So the situation is like that shown in Fig.~\figref{LMREx1}.
Imagine extending $M$ prior to $v_0$, to a vertex $v'$.
Then the $L$ path is rigidly rotated ccw by $\o'$.
The new path $L'$ is strictly further away from $M$ along the concentric
circles centered on $v'$.
See Fig.~\figref{LMREx1a}.
\end{proof}
\begin{figure}[htbp]
\centering
\includegraphics[width=0.75\linewidth]{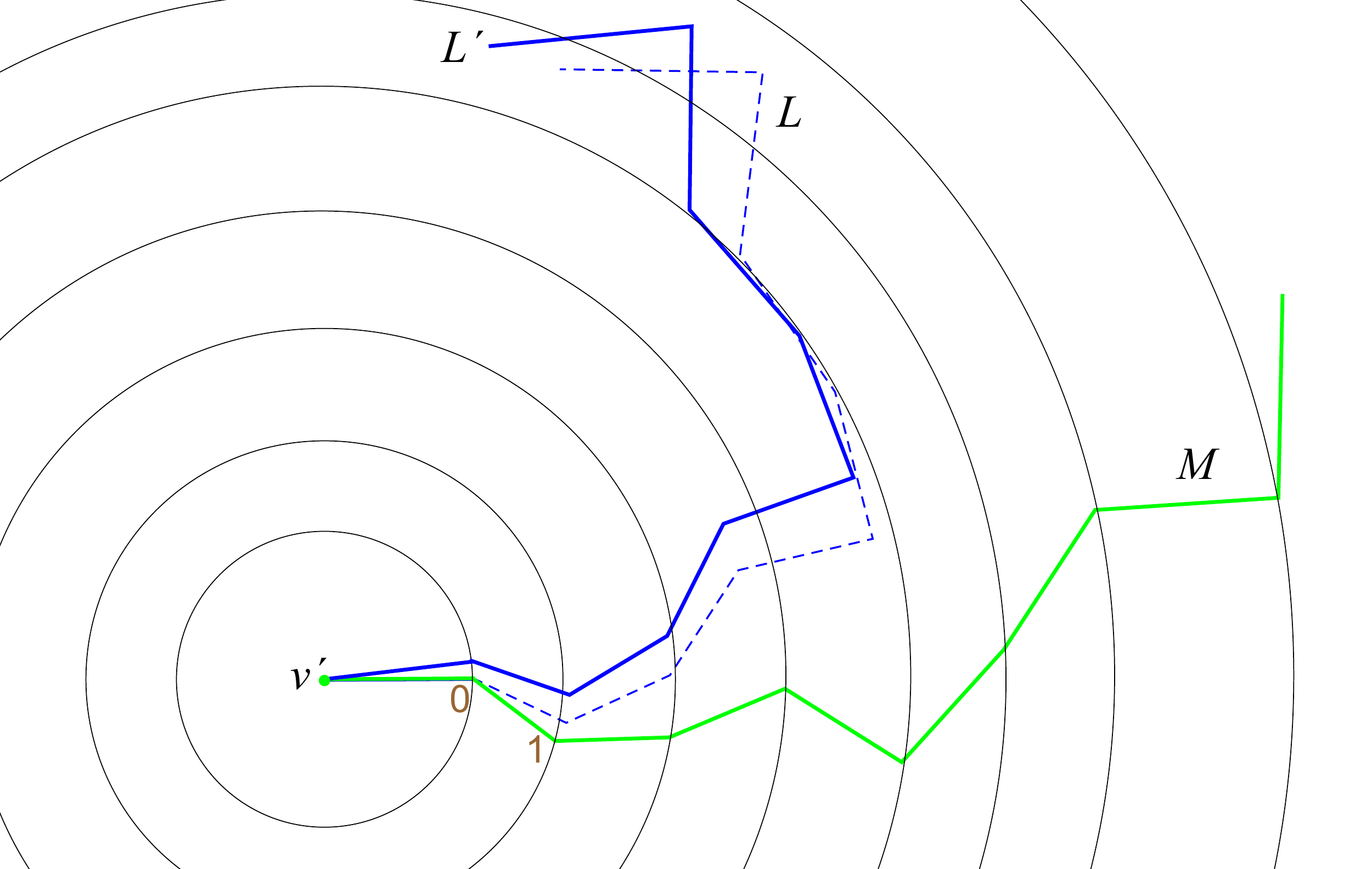}
\caption{Extending the $L$ and $M$ paths in Fig.~\protect\figref{LMREx1}
by one edge $v' v_0$. ($R$ not shown.)}
\figlab{LMREx1a}
\end{figure}

\begin{lemma}
Under the conditions stated,
$L$ cannot cross $R$ right to left.
\lemlab{LR-noint}
\end{lemma}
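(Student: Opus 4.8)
The plan is to prove Lemma~\lemref{LR-noint} by the same inductive device used for Lemma~\lemref{LM-noint} --- prepending a new leaf $v'$ with curvature $\o'$ one edge at a time --- and to run this induction \emph{in tandem} with the part of Lemma~\lemref{LM-noint} that was deferred (that $L$ also cannot cross $M$ right-to-left). The organizing idea is that $M$ acts as a separator between $L$ and $R$: by Lemma~\lemref{LM-noint} together with its mirror image (obtained by interchanging the roles of ``left'' and ``right,'' which swaps $L\leftrightarrow R$, $\l_i\leftrightarrow\r_i$, ccw$\leftrightarrow$cw, and ``left-to-right''$\leftrightarrow$``right-to-left''), $L$ stays weakly on the clockwise side of $M$ and $R$ weakly on the counterclockwise side. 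So the only way $L$ and $R$ can meet away from the common origin is if one of them ``wraps around'' an end of $M$, and the hypotheses $\t(M)\le\pi/2$ and $\O\le\pi$ are exactly what forbids such wrapping.

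In more detail, the base case $k=1$ is immediate: $L$ and $R$ are single segments issuing from the origin at an angle $\o_0\le\O\le\pi$ apart, so they meet only at the origin. For the inductive step, consider the path $(v',v_0,\ldots,v_k)$ with new curvature $\o'$ at $v'$, so the new total curvature is at most $\O\le\pi$ and the old total along $(v_0,\ldots,v_k)$ is at most $\pi-\o'$. Prepending the edge $v'v_0$ rigidly re-roots $L$, $M$, $R$ at $v'$ and fans $L$ and $R$ apart by an additional $\o'$; crucially, $M$ remains radially monotone after the prepend, because under $\t(M)\le\pi/2$ and $\O\le\pi$ the vertex $v'$ lies in the backward-extension region $R(M)$ of Section~\secref{BackwardGrowth} --- this is where the angle hypotheses first bite, and it is the same fact already exploited in the proof of Lemma~\lemref{LM-noint}. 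Then, as there, rotating $L$ counterclockwise about $v'$ relative to the new medial path moves it monotonically outward along the circles centered at $v'$ (using radial monotonicity of the new $M$ w.r.t.\ $v'$), so $L'$ does not cross $M'$; symmetrically $R'$ does not cross $M'$; and the inductive hypothesis, with the bound that the total $L$-to-$R$ angular spread never exceeds $\O\le\pi$, should yield that $L'$ cannot reach around $M'$ to cross $R'$ right-to-left.

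The heart of the argument --- and the step I expect to be the main obstacle --- is turning ``$M$ separates and the spread is $\le\pi$'' into an actual contradiction with a hypothetical right-to-left crossing. The natural route: suppose $L$ crosses $R$ right-to-left at a point $p$, form the closed curve $\gamma$ running from the origin out along $L$ to $p$ and back along $R$, and observe that $M$ starts at the origin heading into the region bounded by $\gamma$ (its initial edge direction lies strictly between those of $L$ and $R$) and can never leave it, since $M$ crosses neither $L$ nor $R$. Now $\t(M)\le\pi/2$ forces every edge of $M$ to have nonnegative $x$-component, and $M$ is radially monotone, so $m_k$ lies at $x$-distance at least the sum of those components from the origin; meanwhile $\O\le\pi$ confines the fan between $L$ and $R$ to a half-turn, which --- together with the edge-length bookkeeping, since $L$, $M$, $R$ share the same edge lengths --- should bound the reach of $\gamma$ in a way the trapped, radially-increasing, $x$-advancing curve $M$ must violate. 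Making this quantitative is delicate precisely because $L$ and $R$, unlike $M$, need not be radially monotone and may oscillate; this is why the extremal configuration --- a $45^\circ$-ish spiral $M$ with $L$ and $R$ opened to the limiting total $\pi$, i.e.\ the picture of Fig.~\figref{LMRSpiralPi} --- must be identified as the worst case and checked, and it is the point at which the argument leans on numerical verification. I would therefore expect the clean write-up to reduce, after the topological trapping step, to a one-parameter extremal comparison against that limiting spiral, exactly paralleling the role the $45^\circ$-spiral played in Lemmas~\lemref{hgrm} and~\lemref{NoClipping}.
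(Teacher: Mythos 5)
Your proposal takes a genuinely different route from the paper, and it leaves the decisive step unproven. The paper's proof of this lemma is a short, direct, \emph{non-inductive} halfplane argument that does not use $M$ as a separator, does not invoke a ``mirror'' of Lemma~\lemref{LM-noint}, and sets up no Jordan-curve trapping or extremal-spiral comparison. With the convention that $m_0 m_1$ lies along the $+x$-axis, the hypothesis $\t(M)\le\pi/2$ places every edge vector of $M$ in the closed $+x$-halfplane; the edge vectors of $L$ are those of $M$ rotated counterclockwise by cumulative half-curvatures totaling at most $\O/2\le\pi/2$, and those of $R$ are rotated clockwise by the same bound. The paper concludes from this that $L$ lies entirely in the closed upper halfplane and $R$ entirely in the closed lower halfplane, hence they can meet only at the origin. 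That is the whole argument: the two hypotheses are used to confine $L$ and $R$ to opposite halfplanes, not to control a ``wraparound'' of $L$ past the end of $M$.

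Your plan, by contrast, first needs the deferred mirror statement (that $L$ cannot cross $M$ right-to-left, equivalently that $R$ cannot cross $M$ left-to-right) before $M$ can act as a separator; it then sets up a topological trapping step; and it then must convert ``$M$ separates, and the $L$-to-$R$ angular spread is at most $\pi$'' into an actual contradiction with a hypothetical $L$--$R$ crossing. You explicitly flag that last step as ``the main obstacle'' and defer it to a numerical comparison against an extremal spiral. That is a genuine gap: without a proof there, the lemma is not established. There is also a concrete problem earlier in the sketch: $L$ and $R$, unlike $M$, are not assumed radially monotone and need not be simple curves, so the ``closed curve $\gamma$'' formed from $L$ to the crossing point $p$ and back along $R$ may be self-intersecting and need not bound a region in which $M$ is trapped; ``$M$ crosses neither $L$ nor $R$'' does not by itself force $M$ to stay inside such a $\gamma$. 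The paper's halfplane observation sidesteps all of this machinery. (The paper's own step from ``edge vectors rotated by at most $\pi/2$'' to ``the \emph{path} $L$ stays in the upper halfplane'' is itself terse and would merit a more careful justification, but it is in any case a far more elementary route than the one you propose.)
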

\begin{proof}
With the convention that $m_1- m_0$ aims along the $+x$-axis, 
the restriction
$\t(M) \le \pi/2$ implies that all the edge vectors of $M$ point
in the $+x$-halfplane.
The restriction $\O \le \pi$ turns the vectors of $M$  to point
into the $+y$-halfplane for $L$; 
recall $\l_i$ is turned ccw $\o_i/2$ from the
corresponding $M$ angle.
Similarly the vectors of $M$ are turned cw  $\o_i/2$  and so point
into the $-y$-halfplane.
Thus $L$ lies entirely in the upper halfplane and $R$ in the lower, so they
cannot cross.
\end{proof}

\noindent
These two lemmas prove Theorem~\thmref{LMR-nonint}.

\section{Algorithm~2: Finding a Radially Monotone Cut Tree}
\seclab{Algorithm2}
Using the definition in the previous section, we now suggest an
``algorithm" to find a radially monotone cut tree for a convex polyhedron.
The algorithm does not provably succeed, even on spherical non-obtusely
triangulated polyhedra. But its strong empirical performance suggests some
variant might provably succeed.
For now, I just describe the decisions made to achieve a definite algorithm.

There are five decisions that drive the algorithm:
\begin{enumerate}
\squeezelist
\item Attention is restricted to spherical polyhedra $\P$, vertices inscribed
in a sphere $\S$.
\item Latitude circles on $\S$ are used to mimic the concentric circles
employed in the 2D Algorithm~1.
\item The algorithm constructs trees from paths $Q$ on $\P$ whose planar medial paths $M(Q)$ are radiallly monotone.
\item The full $\P$ is treated as a convex cap $C$ with boundary $\bC$.
\item Hourglass paths are abandoned.
\end{enumerate}
We now remark on each of these decisions.

\paragraph{(1)}Spherical polyhedra are special because of their relationship to
Delaunay triangulations, a point to which we will return later.
Outside of this class, the algorithm is far less successful.
For example, it fails to find a rm cut tree for
more than $50$\% of $100$-vertex random polyhedra
inscribed in an ellipsoid with axes $(1,1,\frac{1}{4})$.
See Fig.~\figref{Cuts_Lay_s19_n100_z25_Overlap}.
\begin{figure}[htbp]
\centering
\includegraphics[width=1.0\linewidth]{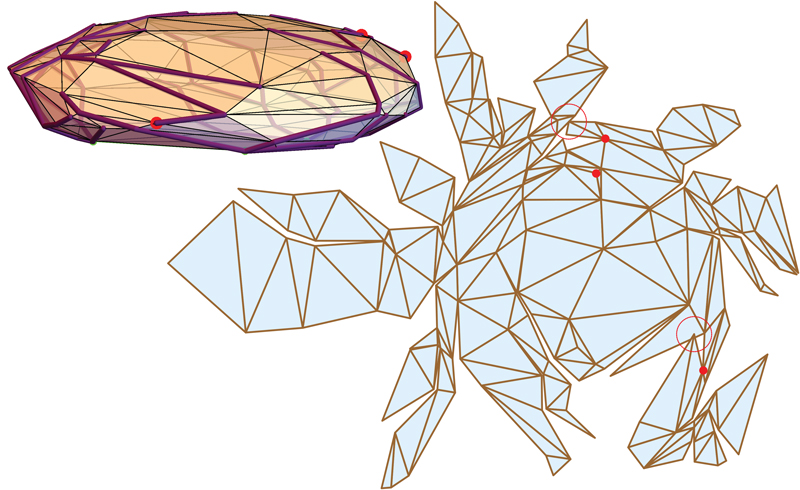}
\caption{Overlap. The $100$-vertex polyhedron is drawn from an ellipsoid with
axes $(1,1,\frac{1}{4})$. The three marked vertices could not be 
joined radially monotonically to
the cut forest. Two of the non-rm cuts cause overlap in the unfolding.}
\figlab{Cuts_Lay_s19_n100_z25_Overlap}
\end{figure}

\paragraph{(2)}In order to mimic the successful 2D Algorithm~1, we process the vertices
in order of their geodesic distance from the north pole $N$,
furthest first;
see Fig.~\figref{SphereLatitude}.
We view $v$ as lying on a planar circle centered on $N$ with radius $\g$.
\begin{figure}[htbp]
\centering
\includegraphics[width=0.5\linewidth]{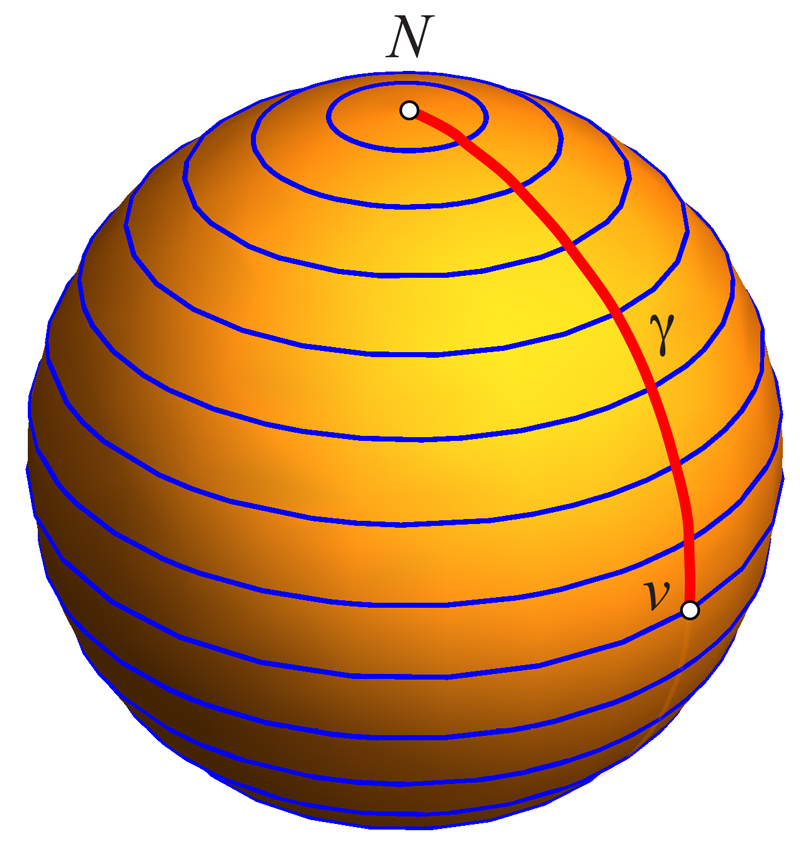}
\caption{Concentric circles representing latitude lines on $\S$.}
\figlab{SphereLatitude}
\end{figure}

\paragraph{(3)}We have already justified the concentration on medial rm cut paths
in Section~\secref{LMR-Chains}:
the unfolded two sides of the cut do not intersect
(Theorem~\thmref{LMR-nonint}).

\paragraph{(4)}Fig.~\figref{CutsLay_s2_pi3_n500} showed a convex cap
with a jagged boundary $\bC$. One can extend a convex cap of polyhedron $\P$
by removing just one triangle $\triangle abc$ from $\P$:
then the convex ``cap" is all but that one triangle.
Even further, one can remove an infinitesimal slice around the
boundary of that triangle, the nonconvex degenerate quadrilateral
$\bC= (a,b,c,b)$. Then $C$ includes every face of the full $\P$.

Returning to our first figure, Fig.~\figref{CutsLay_s1_n200}, the green V-shape
represents that infinitely thin quadrilateral $\bC$.
The rm cut forest $\F$ is grown from $\bC$, and connected to a tree
by the two edges $ab$ and $bc$. 

For our experiments with random spherical polyhedra, we added north-
and south-pole points $N$ and $S$,
just for coding convenience. $N$ plays the role of the 2D bounding
circle center $x$ in Algorithm~1,
and $S$ is a vertex of $\triangle abc$.
This is illustrated more clearly in
Fig.~\figref{CutsLay_s1_n50},
which shows the three trees grown from $\bC$.
\begin{figure}[htbp]
\centering
\includegraphics[width=1.0\linewidth]{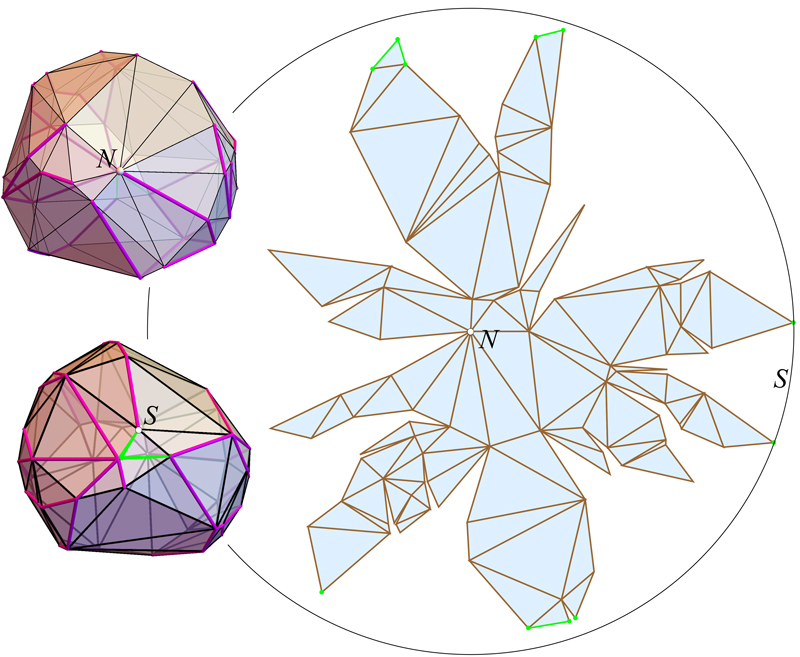}
\caption{$\P$ has $n=50$ vertices. $N$ and $S$ label the north and south poles respectively.}
\figlab{CutsLay_s1_n50}
\end{figure}

\paragraph{(5)}The reason I abandoned hourglass paths is that, under
various generalizations, they did not always seem to exist.
They were a specific tool to prove that Algorithm~1 finds rm forests in 2D,
but seem less useful in 3D.

\subsection{More Examples}
Figs.~\figref{CutsLay_s2_n300}, \figref{Lay_s2_n500},
\figref{Lay_s2_n1000}, and~Figs.~\figref{Cuts_s2_n1500} and \figref{Lay_s2_n1500},
show four more example unfoldings.
\begin{figure}[htbp]
\centering
\includegraphics[width=1.0\linewidth]{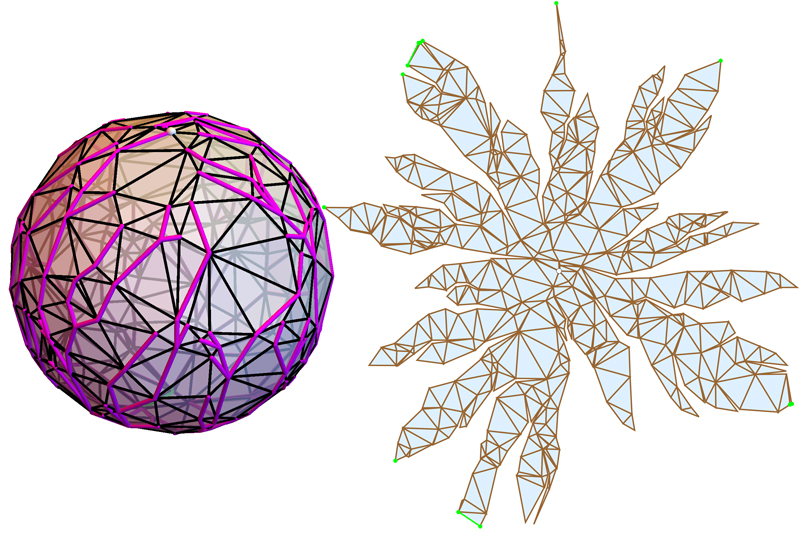}
\caption{Cut tree and unfolding of a polyhedron of $n=300$ vertices.}
\figlab{CutsLay_s2_n300}
\end{figure}
\begin{figure}[htbp]
\centering
\includegraphics[width=1.0\linewidth]{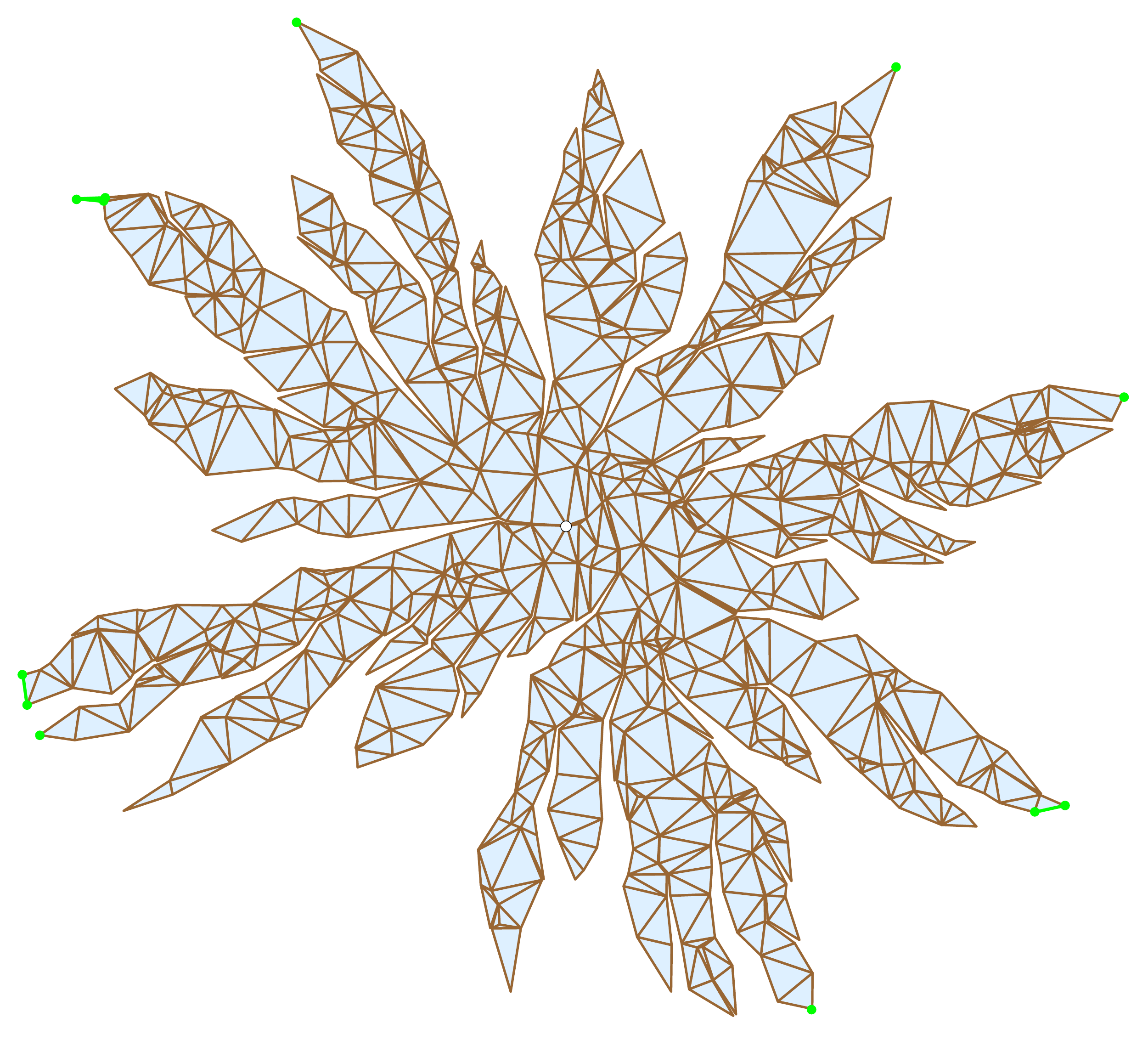}
\caption{Unfolding of a polyhedron of $n=500$ vertices.}
\figlab{Lay_s2_n500}
\end{figure}
\begin{figure}[htbp]
\centering
\includegraphics[width=1.0\linewidth]{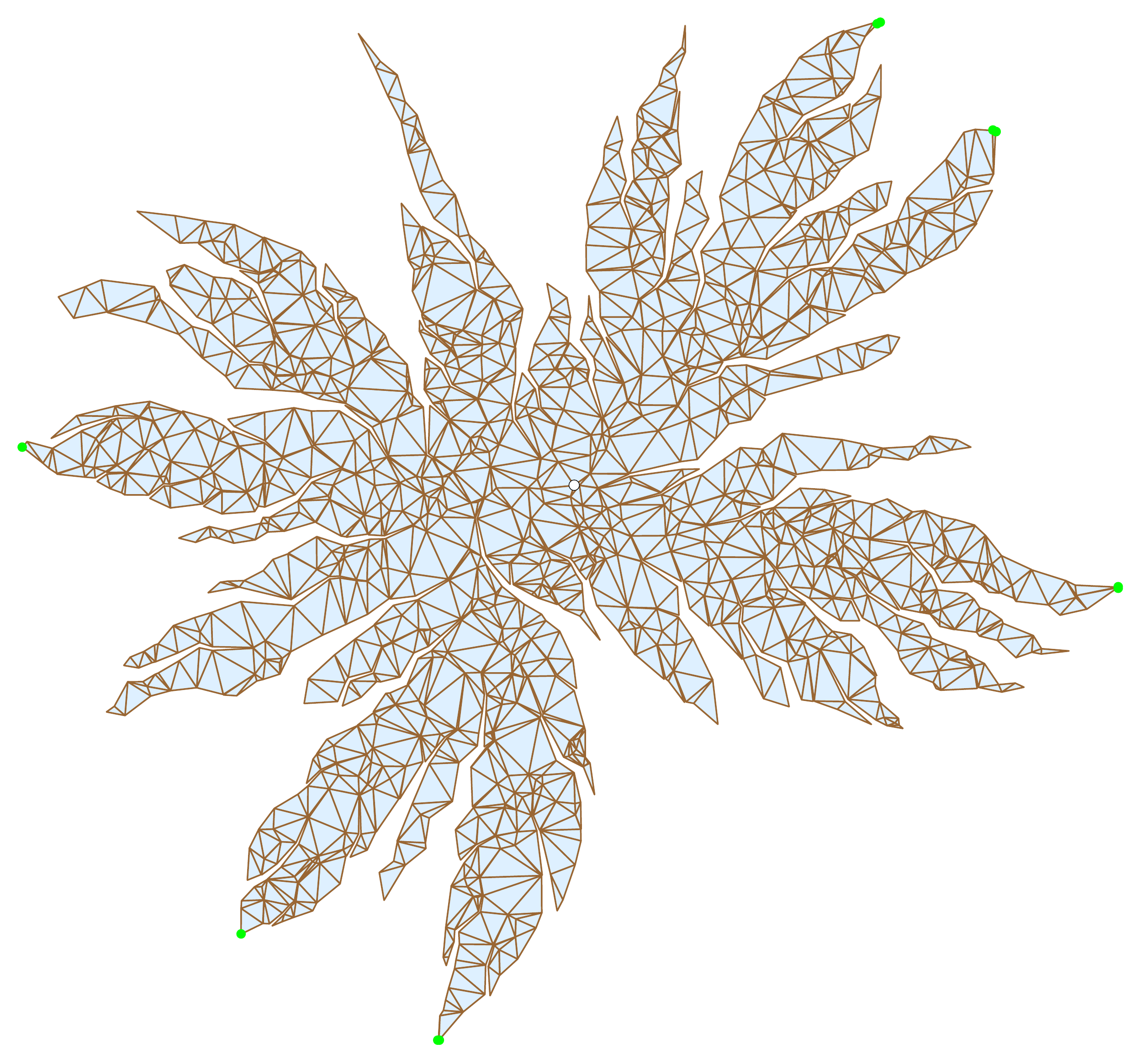}
\caption{Unfolding of a polyhedron of $n=1000$ vertices.}
\figlab{Lay_s2_n1000}
\end{figure}
\begin{figure}[htbp]
\centering
\includegraphics[width=1.0\linewidth]{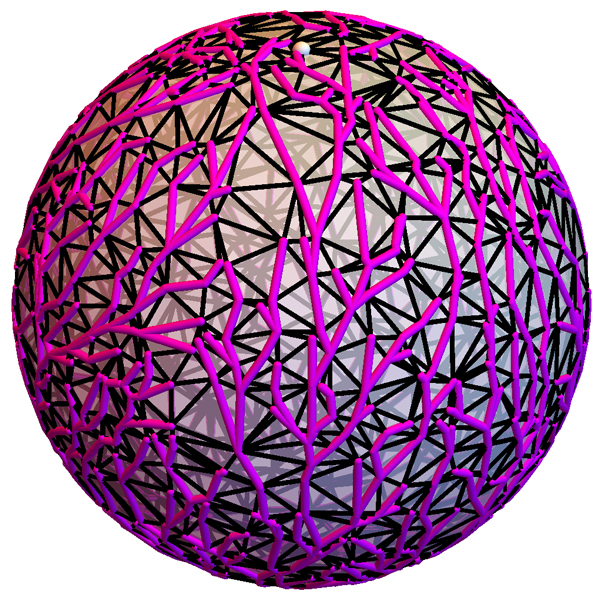}
\caption{Spanning radial monotone cut tree for polyhedron of $n=1500$ vertices.
North pole marked on top; tree root not visible at south pole.
Unfolding in Fig.~\protect\figref{Lay_s2_n1500}.}
\figlab{Cuts_s2_n1500}
\end{figure}
\begin{figure}[htbp]
\centering
\includegraphics[width=1.0\linewidth]{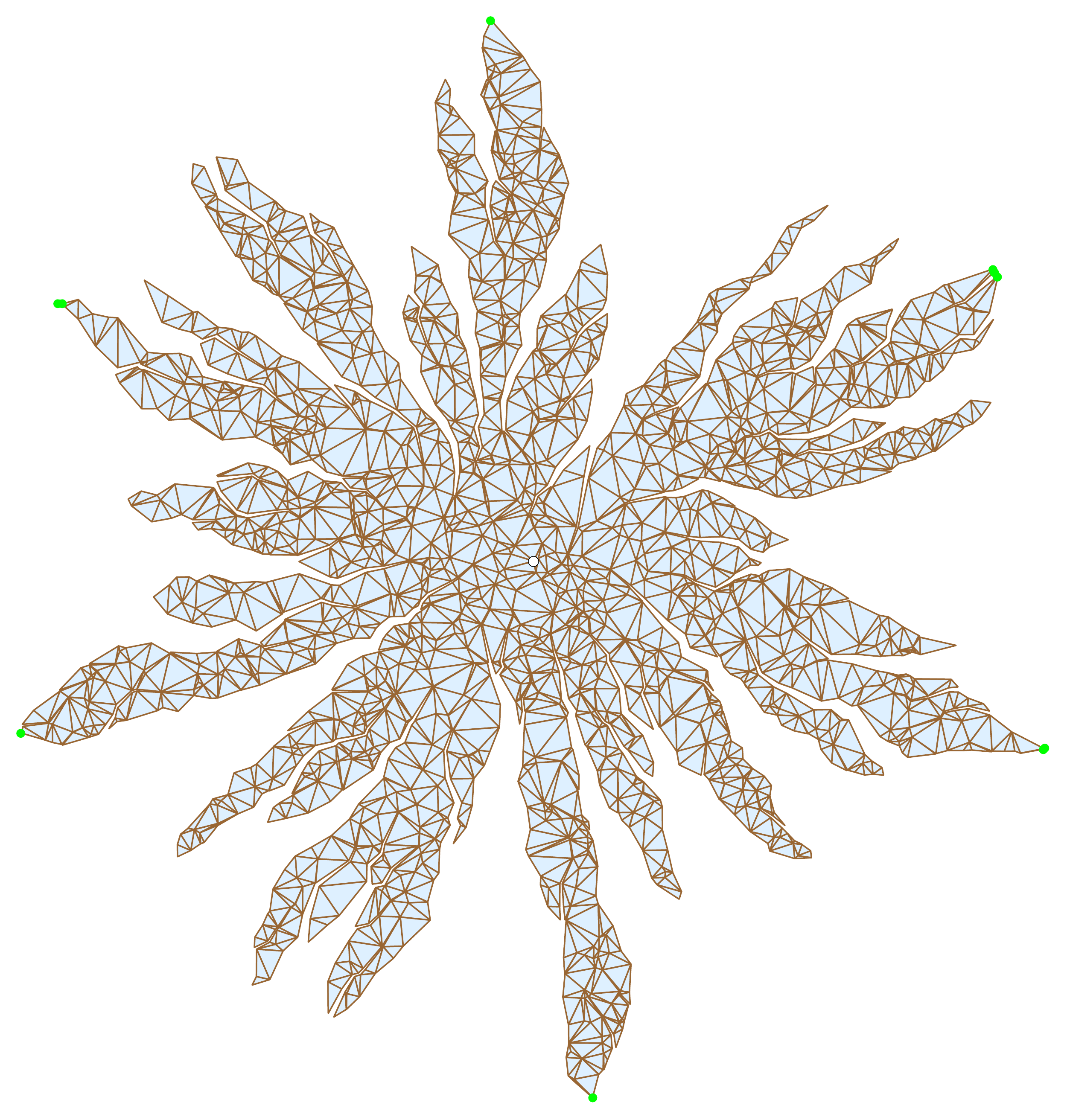}
\caption{Unfolding of polyhedron (Fig.~\protect\figref{Cuts_s2_n1500})
of $n=1500$ vertices.}
\figlab{Lay_s2_n1500}
\end{figure}

\subsection{Algorithm~2}
The algorithm is detailed below.

\begin{algorithm}[htbp]
\caption{Find rm cut forest $\F$ for polyhedron $\P$}
\DontPrintSemicolon

    \SetKwInOut{Input}{Input}
    \SetKwInOut{Output}{Output}

    \Input{Convex polyhedron $\P$ inscribed in sphere $\S$}
    \Output{(Usually) Radially monotone cut tree $\T$}
    
    Find ``bottommost'' triangle $\triangle abc$ (most southward normal). \;
    Set $\bC=(a,b,c,b)$. \;
    
    \BlankLine
    \tcp{Grow $\F$ from $\bC$ upward/inward}
    \BlankLine
    
    Sort interior vertices by geodesic distance on $\S$
    from north-pole $N$, those nearest $\bC$ first.
     
    \BlankLine
     
    \tcp{Grow $\F$:}
    
     $\F \leftarrow \varnothing$

     \ForEach{vertex $v_0$ in sorted (vertically ascending) order}{
   
     \ForEach{vertex $v_1$ already in $\F$ or on $\bC$}{
     
     \tcp{$v_1$ is below $v_0$.}
     
     If $v_0$ connects by a triangulation edge to $v_1$, set $e=(v_0,v_1)$.\;
     Check if the planar medial path $M(Q)$ from $v_0$ to $\bC$ in $\F+e$ is radially monotone.\;
     If so, record its worst turnangle $\t$ (with $\t>90^\circ$ not rm). \;
    
      }
       
      Choose the $e^*$ that has the best (minimum) $\t$. \;
      Or: report failure to find a radially monotone connection \& exit.
      
      $\F \leftarrow \F + e^*$
            
     }
     
     Return $\T \leftarrow \F + ab + bc$
     
\end{algorithm}

\subsection{Empirical Results}
\seclab{Empirical}
Algorithm~1 only provably works on planar convex domains when
they are non-obtusely triangulated, but the 3D examples we have so far
provided contain obtuse triangles: They were generated simply as convex
hulls of random points on a sphere $\S$,
partly because it is by no means straightforward to generate non-obtuse
triangulations, and partly because Algorithm~2 works with empirically
high frequency on spherical polyhedra.

My experiments show that Algorithm~2 finds
a radial monotone cut tree $\T$ for more than $95$\% of random spherical polyhedra
of $n=100$ vertices. For example, 
in one run, $20$ out of $1{,}000$ polyhedra
led to one (never more than one) vertex forced to select a non-rm connection.
In all cases, the ``cause" was an obtuse triangle, which if split,
led to a rm-tree and a non-overlapping unfolding.
Another run of $400$ random $200$-vertex spherical polyhedra
found $21$ single-vertex non-rm connections ($95$\%).

Fig.~\figref{Lay_s502_n100_nosplit} shows a typical example.
Fig.~\figref{Lay_s502_n100_nosplit}(a) shows the one vertex that makes a non-rm cut, in this case with turnangle $94^\circ$.
This angle in fact causes (barely visible) overlap in the unfolding.
Fig.~\figref{Lay_s502_n100_nosplit}(b) shows the same polyhedron with many obtuse angles split,
splitting the most common situation: one obtuse triangle adjacent to a non-obtuse
triangle. Now a different radially monotone cut tree is found, in particular,
resolving the problematic overlap.
\begin{figure}[htbp]
\centering
\includegraphics[width=1.0\linewidth]{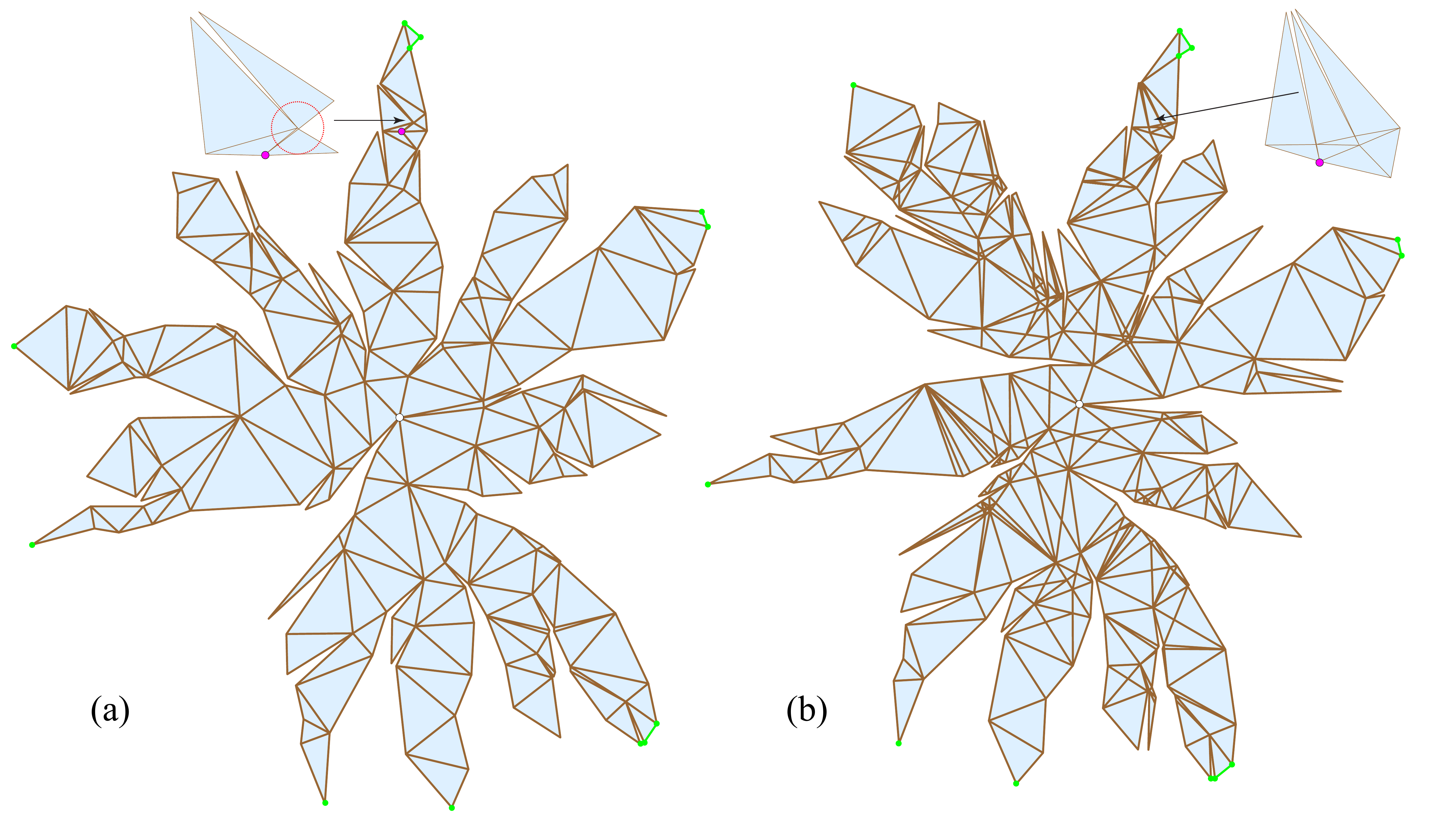}
\caption{(a)~$\P$ of $n=100$ vertices, with non-rm vertex cut marked.
(b)~The same $\P$ with many obtuse angles split, now of $n=203$ vertices.}
\figlab{Lay_s502_n100_nosplit}
\end{figure}

\paragraph{Caveats.}
\begin{sloppypar}
Although it is quite clear what constitutes a random spherical polyhedron---the convex
hull of random points on a sphere---it is less clear what is a random non-obtusely triangulated
spherical polyhedron. And in any case, 
the obtuse-splitting procedure I employed is ad hoc,
and creates ``vertices" of curvature $0$ not touching $\S$.
So I cannot make any justified empirical claims
concerning the performance of Algorithm~2 on such polyhedra.
\end{sloppypar}

Moreover, the $1{,}000$ examples mentioned in the Introduction were
achieved by (a)~running Algorithm~2 on spherical polyhedra, and
then, for each case where a rm tree was not found, (b)~rerunning it with obtuse triangles split as in Fig.~\figref{Lay_s502_n100_nosplit}.

However, I would like to emphasize that Algorithm~2 does not in any way ``search" for a
radially monotone cut tree: 
It uses whatever happens to be the bottommost triangle to form $\bC$,
and then it grows the spanning forest strictly in order of the geodesic
circle radii illustrated in Fig.~\figref{SphereLatitude}.
It never backtracks or considers alternatives (aside from choosing the
``best'' rm connection among those available below).
One could apply many heuristics to improve performance.
In fact, when I included several such heuristics--- selecting 
the ``most equilateral" triangle to become bottommost
(to maximally separate the three spanning tree roots of $\F$), 
and not following the concentric circle ordering, but rather growing $\F$
with the ``best" connection (in any direction) at each stage---the heuristic-laden algorithm
found rm cut trees with such high frequency I could not find a random counterexample.

\section{Questions \& Conjectures}

Algorithm~2 finds radially monotone cut trees for random spherical polyhedra with high frequency,
and then (because of Theorem~\thmref{LMR-nonint},
whose bounds apparently suffice for random polyhedra), unfolds them without
overlap.
Triangulations of spherical polyhedra are special: They are intimately
connected to Delaunay triangulations. This suggests:
\begin{question}
Does every planar Delaunay triangulation have a radially monotone spanning forest?
\end{question}
Note that the example in Appendix~1 (Fig.~\figref{RadialMonotoneCexTriangulation}) is non-Delaunay.
Also note that non-obtuse triangulations are automatically Delaunay.
The 3D analog is:
\begin{question}
Does every spherical (inscribed) polyhedron have a radially monotone cut tree?
\end{question}
My empirical explorations only suggest that this may hold for random spherical polyhedra.
\begin{question}
Does every non-obtusely triangulated convex polyhedron have a radially monotone cut tree?
\end{question}
I have so far not explored non-spherical polyhedra enough to form an opinion
on this question, but radial monotonicity seems intimately connected to
non-obtuseness.
\begin{question}
What is a natural definition of a
random, spherical (inscribed in a sphere), non-obtusely triangulated polyhedron?
And how could they be generated?
\end{question}
Without an answer to this question, the following conjecture is vague,
but nevertheless, I feel is justified:
\begin{conjecture}
A random spherical, non-obtusely triangulated polyhedron
has a radially monotone cut tree (as defined in Sec.~\secref{rm3D})
with high probability.
\end{conjecture}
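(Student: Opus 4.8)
Since the statement is only as precise as the underlying random model (Question~4), the plan is to fix one --- say $\P$ is the convex hull of $n$ points drawn i.i.d.\ uniformly from $\S$, followed by a canonical non-obtuse refinement --- and to prove that the non-backtracking greedy procedure of Algorithm~2 (or a mild variant) never reports failure with probability $1-o(1)$. Existence of a radially monotone cut tree $\T$ follows immediately, and non-overlap of the induced unfolding is then a free corollary of Theorem~\thmref{LMR-nonint}.

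\textbf{Step 1: the curvature budget is tiny.} First I would collect the standard well-spacedness estimates for uniform points on $\S$: with probability $1-o(1)$, every triangulation edge has length $\Theta(n^{-1/2})$ up to polylog factors, every vertex carries curvature $\o_i = O(n^{-1}\log n)$ (well-spacedness spreads the total curvature $4\pi$ nearly evenly), and the developed latitude circles look locally like the concentric circles of Algorithm~1 to within $O(n^{-1/2})$. Because a radially monotone path --- and in particular an hourglass path --- is no more tightly wound than the $74.655^\circ$ logarithmic spiral of Section~\secref{spirals}, any root-to-leaf path of $\F$ has $O(1)$ geodesic length, hence $O(n^{1/2}\log n)$ edges, hence total curvature $\O = O(n^{-1/2}\log^2 n) = o(1) \le \pi$. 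Thus the curvature hypothesis of Theorem~\thmref{LMR-nonint} holds for every cut path, whp, with enormous room to spare.

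\textbf{Step 2: lift the 2D proof to the sphere.} Next I would transport the proof of Theorem~\thmref{NonObConcirc} almost verbatim. Attach to each vertex $v$ a spherical hourglass whose baseline is the tangent at $v$ to the latitude circle through $v$ (playing the role of the tangent to the circle about $x$); non-obtuseness forces a triangulation edge into the in-cone at the vertex currently being processed, exactly as in the halfplane and round-domain arguments. The crux is to show that the development of a grown path into the plane along its medial path $M(Q)$ remains a planar hourglass path up to $o(1)$ distortion: each vertex bends the surface by only $\o_i = O(n^{-1}\log n)$ and the path is developed across $O(n^{1/2}\log n)$ vertices, so the accumulated deviation between the spherical hourglass cones and their planar images is $O(n^{-1/2}\log^2 n)$. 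Feeding this $\varepsilon$-perturbed picture into the (Euclidean) Lemmas~\lemref{OutconeIncone}, \lemref{hgrm}, and~\lemref{NoClipping}, and using that the only tightness there is the measure-zero $45^\circ$-spiral degeneracy $a=b$ --- which a generic non-obtuse mesh avoids by a margin bounded below whp --- one obtains that $M(Q)$ is radially monotone and $\t(M) \le \pi/2$, the last hypothesis of Theorem~\thmref{LMR-nonint}. Hence the greedy connection always exists, Algorithm~2 succeeds, and we are done.

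\textbf{Main obstacle.} Two points are genuinely hard. The first is geometric: there is no ambient round convex domain handed to us on $\S$, so one must \emph{prove} the spherical analog of roundness --- that the boundary triangle $\triangle abc$ taken as $\bC$, together with its neighborhood, behaves with respect to $N$ as $\bC$ behaves with respect to $x$ in the planar theorem, equivalently that the first edge of every path leaving $\bC$ lands in its in-cone --- and do so uniformly over the random mesh. The second is that Lemmas~\lemref{hgrm} and~\lemref{NoClipping} are themselves only numerically established in the excerpt, so a rigorous, quantitative perturbation of them (exactly how much hourglass slack survives a curvature-$\o$ bend, uniformly over $\sqrt n$-long developed paths) must be carried out before Step~2 can be made honest. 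A cleaner alternative route would be to settle Question~1 first --- that every planar Delaunay triangulation of a round domain has a radially monotone spanning forest --- purely in 2D, and then lift through the Delaunay/convex-hull correspondence for inscribed polyhedra, so that the 3D claim reduces to that 2D fact plus the curvature accounting of Step~1.
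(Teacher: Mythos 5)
The statement you are addressing is explicitly a \emph{conjecture} in the paper --- Section~6 offers no proof, and the author stresses (Section~\secref{Empirical}) that even the empirical evidence for the non-obtuse case is suspect because no clean model of a random, spherical, non-obtusely triangulated polyhedron exists. So there is no ``paper's own proof'' to compare against; what you have written is a proof \emph{program}, and the right question is whether the program is sound.

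It is a thoughtful program, and Step~1 (the $O(n^{-1/2}\log^2 n)$ curvature budget along any single root-to-leaf path, so $\O \le \pi$ trivially whp) is a genuine observation that the paper does not make explicitly. But there are gaps you have not flagged, beyond the two you honestly identify. First and most serious: your Step~2 proposes to lift the hourglass-path machinery to the sphere via an $\e$-perturbation, but the paper explicitly reports abandoning hourglass paths in 3D precisely because ``under various generalizations, they did not always seem to exist'' (decision~(5) in Section~\secref{Algorithm2}). That is not a priori fatal --- perhaps a careful quantitative version survives when curvature is tiny --- but it means you are resurrecting a route the author already found to fail, and a proof sketch owes at least a heuristic account of what goes wrong in the author's attempt and why your perturbative regime escapes it. Second, you assert that $\t(M)\le\pi/2$ ``follows'' from the perturbed hourglass picture, but that hypothesis of Theorem~\thmref{LMR-nonint} is a bound on the \emph{cumulative} turning of $M$, not on the per-vertex turn; a $45^\circ$-spiral has bounded per-vertex turn yet its tangent winds without bound, so you would need a quantitative bound on the angular sweep of a developed path (it is plausibly $O(1)$ from the bounded radius ratio, but this needs proof, not assertion). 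Third, the ``canonical non-obtuse refinement'' you posit as the random model undermines the spherical hypothesis: the paper notes that its obtuse-splitting procedure creates flat vertices of curvature zero that are \emph{not} on $\S$, so the refined mesh is no longer inscribed and the Delaunay/inscribed correspondence you invoke in your alternative route no longer applies. Until Question~4 is settled, any claimed random model must either preserve inscription (seemingly hard) or the proof must work directly for convex caps with a mix of spherical vertices and flat Steiner vertices. Finally, Lemmas~\lemref{hgrm} and~\lemref{NoClipping} themselves are, in the paper, numerically rather than analytically established, so even the planar foundation you build on is not yet rigorous --- a point you do concede. In short: a reasonable blueprint, but not a proof, and it does not engage the author's stated reason for believing hourglass paths do not transfer to 3D.
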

More risky is the same conjecture without requiring the triangulation
to be non-obtuse:
\begin{conjecture}
A random spherical polyhedron
has a radially monotone cut tree
with high probability.
\end{conjecture}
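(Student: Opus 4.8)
The plan is to show that the greedy procedure of Algorithm~2---possibly augmented by a local repair step---succeeds with probability tending to $1$ as the number $n$ of random points on $\S$ grows, and then to invoke Theorem~\thmref{LMR-nonint} to obtain non-overlap. The argument splits into two parts: (i)~the forest is grown to completion, so that no interior vertex is forced into a non-radially-monotone connection; and (ii)~every root-to-leaf medial path $M(Q)$ of the resulting cut tree satisfies $\t(M)\le\pi/2$ and $\O\le\pi$.

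Part~(ii) reduces to probabilistic estimates on random Delaunay triangulations. For $n$ i.i.d.\ uniform points on $\S$, a typical edge has length $\Theta(1/\sqrt n)$ and a typical vertex carries curvature $\Theta(1/n)$ (the total $4\pi$ spread over $n$ vertices, with good concentration). Since a radially monotone path increases in radius at every step, a root-to-leaf path uses $O(\sqrt n)$ edges with high probability, so its accumulated curvature is $\O = O(1/\sqrt n)\to 0$, well below $\pi$. The bound $\t(M)\le\pi/2$ is the harder one: one must rule out spiralling, i.e.\ show the edge vectors of $M(Q)$ stay within a halfplane. I would either argue that the greedy ``straightest'' choice keeps the cumulative medial turn $o(1)$ over $O(\sqrt n)$ steps, or first prove the strengthened form of Theorem~\thmref{LMR-nonint} conjectured above (valid for $\t(M)<2\pi$, $\O\le\pi$), which makes this step far less demanding.

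Part~(i) is the main obstacle. Generalizing Lemma~\lemref{ThetaBounds} to medial paths, the o-cone at the vertex $v$ currently being processed has angular measure at least $\pi/2$; growth fails at $v$ only if the triangulation edges descending from $v$ leave an angular gap wider than that o-cone, which forces a substantially obtuse triangle at $v$ positioned against the gap. In a non-obtuse triangulation this never happens---that is exactly the content of Theorem~\thmref{NonObConcirc}---so the entire difficulty is the obtuse triangles that a random convex hull on $\S$ inevitably contains. The required statement is quantitative: with probability $1-o(1)$, no interior vertex simultaneously has a wide downward angular gap \emph{and} lies on a path whose o-cone has been squeezed toward that gap. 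I would attempt a union bound over the $O(n)$ interior vertices, using (a)~tail bounds, in a Poissonized model, on the largest angle at a vertex and on the maximal gap between consecutive incident edges, and (b)~the fact that a moderately obtuse triangle is harmless unless the arriving o-cone is already nearly collapsed---itself a rare upstream event. A bad vertex, when it occurs, is handled by rerouting the final edge or splitting the offending triangle as in Fig.~\figref{Lay_s502_n100_nosplit}, a change confined to a bounded neighborhood that cannot cascade. The step I expect to resist a clean argument is obtaining a sharp enough handle on the \emph{joint} law of ``obtuse triangle here'' and ``o-cone nearly collapsed here'': both are governed by the same local point configuration, and it is precisely this correlation that makes the observed failure rate so small and a rigorous proof so elusive.
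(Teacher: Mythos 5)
The paper does not prove this statement: it is deliberately presented as a conjecture, supported only by the empirical performance of Algorithm~2 and the analogy with Fig.~\figref{CASJORoverlap}. So there is no proof in the paper against which your proposal can be checked; you are being asked to attack an open problem, and you correctly recognize that. Your plan is a plausible line of attack, and you correctly isolate the hard step: in Part~(i), the joint law of ``obtuse triangle here'' and ``o-cone already nearly collapsed here,'' both driven by the same local point configuration. That correlation is precisely what the paper gives no tools to control, and what makes the observed failure rate so small while a proof remains elusive.

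Two cautions worth recording. First, the conjecture as stated asserts only the existence of a radially monotone cut tree in the sense of Section~\secref{rm3D} (every root-to-leaf medial path $M(Q)$ is radially monotone); it does not assert non-overlap. Your Part~(ii) --- bounding $\t(M)$ and $\O$ so that Theorem~\thmref{LMR-nonint} applies --- is therefore addressed to a stronger claim than the conjecture, and even then Theorem~\thmref{LMR-nonint} is a \emph{local} statement that the two sides of a single cut path do not cross; it does not by itself rule out overlap between faces hanging from distinct branches of the tree. The paper acknowledges this gap implicitly by saying the bounds ``apparently suffice'' for non-overlap rather than claiming a proof. Second, your estimate that a root-to-leaf path has $O(\sqrt n)$ edges presumes the path tracks a near-geodesic; the greedy latitude-ordered construction could in principle produce paths that wind and lengthen, and the radial-monotonicity of $M(Q)$ does not directly bound the number of edges on $Q$ itself ($M(Q)$ lives in the plane, not on $\S$). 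Your proposal is a reasonable research program, not a proof, and you are candid about where it stalls; the paper offers nothing beyond that.
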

One natural interpretation of ``high probability"
would be that, as the number of vertices $n \to \infty$, the probability goes to $1$.
This would essentially constitute an obverse of Fig.~\figref{CASJORoverlap},
which shows (empirically) that the probability of overlap
from a random spanning cut tree goes to $1$.

\paragraph{Addendum.}
Anna Lubiw\footnote{Personal communication, 27 July 2016.} 
informed me that my radially monotone paths are the
same as backwards ``self-approaching curves,"
introduced in~\cite{ikl-sac-99}, and explored in several papers since then.
Their definition for a curve $C$ is:
``for any three consecutive points $a,b,c$ in oriented order on $C$, the inequality
$d(a,c) \ge d(b,c)$ holds."
These curves have been studied for their length properties
and applications to graph drawing and routing in planar geometric graphs.
Some of the elementary properties I prove for radially monotone paths
were earlier derived in this literature.

\newpage
\section{Appendix~1}
\seclab{Appendix1}
\begin{figure}[htbp]
\centering
\includegraphics[width=1.0\linewidth]{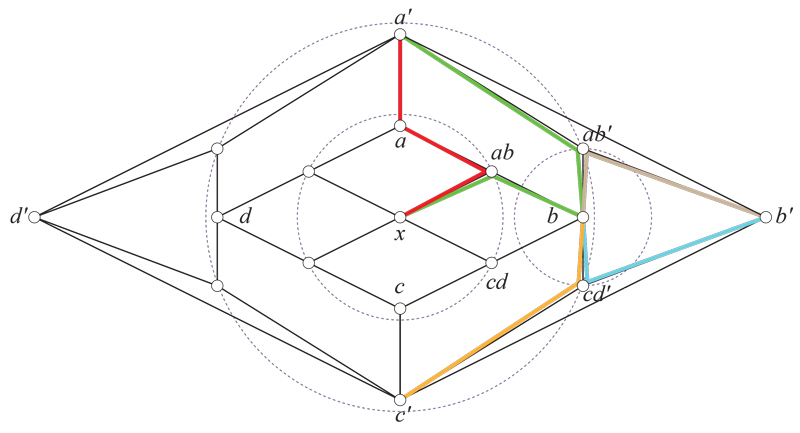}
\caption{A plane graph that cannot be spanned by an rm-forest. 
The circles indicate rm violations.}
\figlab{RadialMonotoneCexPlanarGraph}
\end{figure}
We first provide a counterexample for a plane graph, and then modify it
to achieve a triangulation.
\begin{lemma}
The plane graph $G$ shown in Fig.~\figref{RadialMonotoneCexPlanarGraph} has no rm-forest.
$G$ has strictly convex faces, and a convex boundary $\bC$.
\lemlab{Gplanar}
\end{lemma}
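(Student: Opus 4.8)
The plan is to argue by exhaustive case analysis on the possible rm-forests of the specific graph $G$ in Fig.~\figref{RadialMonotoneCexPlanarGraph}, exploiting the necessary conditions for radial monotonicity established earlier — in particular that every turn angle along an rm path is at most $\pi/2$ (the $\sqcap$-shape obstruction), and the concentric-circles characterization (an rm path w.r.t.\ $v_0$ crosses each circle centered at $v_0$ at most once). First I would fix the coordinates of $G$'s vertices and identify which vertices are interior (so must be spanned) and which lie on $\bC$ (so may serve as roots); since $\bC$ is a convex polygon and every face is strictly convex, the edge set is exactly what is drawn, so the set of candidate tree-edges at each interior vertex is small and explicitly enumerable.

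The core of the argument is local: for each interior vertex $v$, I would determine which of its incident edges $(v,w)$ can possibly be the ``first step'' of $v$'s path toward $\bC$ — i.e., which choices of $w$ (together with $w$'s already-forced onward path) keep the distances from $v$ monotone. The figure's circled ``rm violations'' are the hints here: each circle marks a vertex $v_j$ where, for the only available routing(s), the angle $\angle(v_i,v_j,v_{j+1})$ drops below $\pi/2$, forcing a distance decrease. I would show that at a certain ``bottleneck'' interior vertex (or small cluster of them), \emph{every} combinatorially possible path to $\bC$ — routed through either neighbor, and then onward by the only admissible continuations — must pass through such a forbidden turn. Because each interior vertex has a unique path to $\bC$ in any spanning forest, ruling out all routings for this one vertex rules out any rm-forest for $G$ entirely.

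Concretely the steps are: (1) list the interior vertices and the $\le 3$ or $4$ tree-edge options at each; (2) propagate forced choices — many interior vertices near $\bC$ have essentially one admissible first edge, and these constrain their neighbors' continuations; (3) reduce to the bottleneck vertex and show a dichotomy on its two (or three) possible first edges; (4) in each branch, follow the path outward and exhibit the vertex $v_j$ where the turn angle exceeds $\pi/2$ (equivalently, where the path re-crosses a circle centered at the path's source), citing the necessary-condition discussion in Section~\secref{PropsRM}; (5) conclude no rm-forest exists. The convexity claims about faces and $\bC$ are immediate from the drawing and are stated just to show the counterexample is ``nice.'' The main obstacle I anticipate is step (2)–(3): making the propagation of forced edges airtight without hand-waving, since radial monotonicity w.r.t.\ \emph{all} earlier vertices (not just the source) couples the choices along a path, so I would need to be careful that a locally admissible first edge doesn't get silently resurrected by a different downstream routing. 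Organizing the case analysis around the circled vertices in the figure, and showing each is unavoidable, should keep this manageable.
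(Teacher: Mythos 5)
Your plan matches the paper's proof in its essential structure: both identify the central interior vertex ($x$ in the figure) as the bottleneck, observe that any rm-forest determines a unique path from $x$ to $\bG$, and then exhaustively enumerate the candidate paths outward from $x$, showing each one violates radial monotonicity at some vertex (either w.r.t.\ $x$ or w.r.t.\ an intermediate vertex), invoking exactly the distance/turn-angle criterion you cite. The paper streamlines the enumeration by invoking the four-fold symmetry of $G$ to fix the first edge from $x$ without loss of generality, which sidesteps the ``propagation'' bookkeeping you flag as the main worry; otherwise the two arguments are the same.
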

\begin{proof}
Let $\F$ be an rm-forest for $G$, and let $Q$ the unique path in a tree of $\F$
from $x$ to $\bG$.
\begin{enumerate}
\squeezelist
\item The four edges incident to $x$ are symmetric, so we choose $(x,ab)$ wlog.
There are two choices from $ab$.
\item $Q=(x, ab, a, a')$ is non-monotonic at $ab$ w.r.t. $x$: 
$(ab,a)$ cuts into the circle centered at $x$ passing through $ab$.
\item $(x, ab, b)$ is radially monotone.
\item Extending this path with the up edge $(b,ab')$ violates
radial monotonicty at $b$ w.r.t. $ab$.
So it must be instead extended downward, to 
$Q=(x, ab, b,cd')$.
\item $(x, ab, b,cd')$ cannot be extended to either $Q=(x, ab, b,cd',c')$ nor $Q=(x, ab, b,cd',b')$ monotonically:
the former violates radial monotonicity at $cd'$ w.r.t. $x$; the latter violates at $cd'$ w.r.t. $b$.
\end{enumerate}
Because of the symmetries of $G$, this exhausts all possible paths from $x$ to $\bG$.

The pentagon face $(a,ab,b,ab',a')$ and the quadrilateral face
$(b,ab',b',cd')$ are non-strictly convex faces,
but can be made strictly convex by slight movements of $ab$ and $b$ respectively,
without changing altering the monotonicity of any paths.
\end{proof}

\noindent
The claim in Lemma~\lemref{Gplanar} can be strengthened to a triangulation:

\begin{lemma}
The triangulated plane graph $G_T$ shown in Fig.~\figref{RadialMonotoneCexTriangulation} has no rm-forest.
\lemlab{GTriangulation}
\end{lemma}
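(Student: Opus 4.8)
The plan is to obtain $G_T$ directly from the plane graph $G$ of Lemma~\lemref{Gplanar} by triangulating each of its non-triangular faces, and then to re-run the five-step case analysis in the proof of that lemma, now keeping track of the handful of extra edges. A triangulation has only triangular faces, so the strict convexity of the faces is automatic (after the same infinitesimal perturbation used at the end of the proof of Lemma~\lemref{Gplanar}); and since we add no boundary edges, $\bG$ stays convex. The one thing we must arrange is that none of the inserted diagonals can be used to ``escape'' toward $\bG$: whenever a candidate rm path reaches a vertex $w$ and would continue along a new diagonal $(w,w')$, the angle $\angle(v_i,w,w')$ for some earlier path vertex $v_i$ (in practice $v_i=x$, or the immediately preceding vertex) must be strictly less than $\pi/2$, so that $(w,w')$ cuts into the circle centered at $v_i$ through $w$ and radial monotonicity fails immediately there, by the radial-circle characterization of Section~\secref{PropsRM}.

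Concretely, I would keep the cyclic symmetry of $G$ that permutes the four edges at $x$, so that, exactly as before, the unique path $Q$ from $x$ to $\bG$ in a hypothetical rm-forest $\F$ may be assumed to begin with $(x,ab)$. The analysis then follows the original one: from $ab$ one is forced toward $b$ (not $a$); from $b$ the ``up'' edge toward $ab'$ still violates rm at $b$ w.r.t. $ab$, so one is forced down to $cd'$; and from $cd'$ both old continuations fail. What is new is only that $ab$, $b$, and $cd'$ have acquired a few diagonal neighbors; for each such new edge one checks the single angle inequality above. If every new edge fails this way, the set of vertices reachable from $x$ by an rm path is exactly the one in Lemma~\lemref{Gplanar}, $Q$ hits the same dead end, and $G_T$ has no rm-forest.

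To make those diagonal checks succeed, I would realize $G$ with its inward ``radial'' edges --- the ones running roughly toward $x$, such as $(x,ab)$, $(ab,b)$, $(b,cd')$ and their symmetric copies --- nearly collinear with true radii of the relevant circles about $x$. Then a diagonal joining two vertices at comparable distance from $x$ is nearly orthogonal to those radii, so it bends any path by nearly $\pi/2$ at its outer endpoint and indeed cuts the inner circle. The faces that actually need triangulating are the pentagon $(a,ab,b,ab',a')$ and the quadrilateral $(b,ab',b',cd')$ together with their rotated copies; triangulating each by the fan from the vertex whose incident radial edge is most outward introduces only one or two new edges at each of $ab$, $b$, $cd'$, so this is a finite, explicit verification. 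If a particular face resists a pure-diagonal triangulation, the fallback is to place a Steiner vertex inside that face at a point farther from $x$ than all of its vertices, so the edges out of the new vertex point outward and again cannot prolong a path inward; in the planar setting this harmlessly introduces a zero-curvature vertex lying off the circles.

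The main obstacle is exactly this last step: certifying, simultaneously for every inserted edge, that it is steep enough to be unusable. There is ample freedom in embedding $G$ (it is a purely combinatorial object that we may realize with whatever convex faces we like), so I expect one global choice of coordinates to satisfy all the inequalities at once; but pinning that choice down is a small, careful bundle of angle computations, which is precisely what I would omit here and instead let the drawn coordinates of Fig.~\figref{RadialMonotoneCexTriangulation} certify.
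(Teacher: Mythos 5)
Your proposal takes essentially the same approach as the paper's sketch: view $G_T$ as a triangulation of the plane graph $G$ from Lemma~\lemref{Gplanar}, re-run the exhaustive path enumeration there, and check that the added diagonals do not create a radially monotone escape route (the paper simply asserts the new paths still violate monotonicity w.r.t.\ $x$ or $b$). Since the paper's proof is itself only a two-sentence sketch deferring to the figure, your more detailed plan --- including the speculative choice of a near-radial fan triangulation and the Steiner-vertex fallback, neither of which the paper mentions --- fills in the same verification gap without changing the underlying argument.
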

\begin{proof}
\emph{Sketch.}
The structure of $G_T$ is based on that of $G$ in Lemma~\lemref{Gplanar}: 
$G_T$ is a triangulation of $G$.
Although there are now more paths from $x$ to $\bG_T$, they each still violate monotonicity,
either w.r.t. $x$, or w.r.t. $b$.
\end{proof}
\begin{figure}[htbp]
\centering
\includegraphics[width=1.0\linewidth]{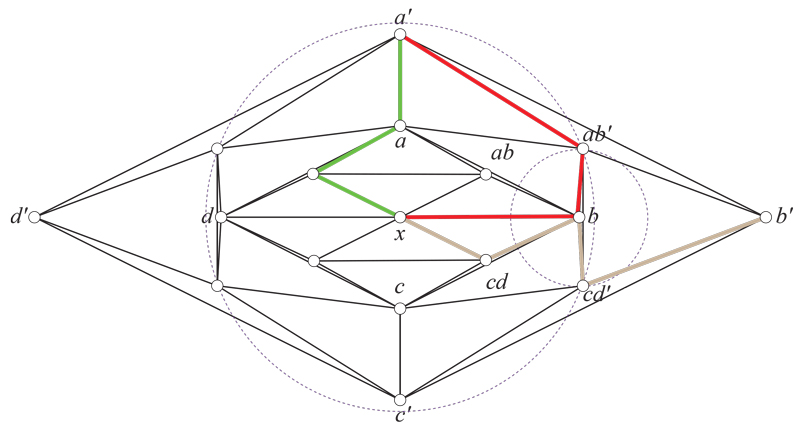}
\caption{A triangulated plane graph that cannot be spanned by an rm-forest.}
\figlab{RadialMonotoneCexTriangulation}
\end{figure}

\bibliographystyle{alpha}
\bibliography{geom}
\end{document}